\newtheorem{Definition}{Definition}
\newtheorem{Theorem}{Theorem}
\newtheorem{Lemma}{Lemma}
\newtheorem{Proposition}{Proposition}
\newtheorem{Corollary}{Corollary}
\DeclareMathOperator{\Span}{span}
\def \equals {\ = \ }
\def \plus {\ + \ }
\def \Borland {\wedge^3\mathbb{C}^6}
\def \Csix {\mathbb{C}^6}
\def \HH {\mathcal{H}}
\def \LL {\mathcal{L}}
\def \MM {\mathcal{M}}
\def \QQ {\mathcal{Q}}
\def \RR {\mathcal{R}}
\def \VV {\mathcal{V}}
\def \WW {\mathcal{W}}
\def \Ffrak {\mathfrak{F}}
\begin{document}

\title{Concise configuration interaction expansions for three fermions in six orbitals}

\author{Alex D. Gottlieb}
 \email{alex@alexgottlieb.com}
\author{Norbert J. Mauser}%
\affiliation{Wolfgang Pauli Institute c/o Fakult\"at f\"ur Mathematik, Universit\"at Wien, Oskar-Morgenstern Platz 1, 1090 Vienna, Austria}%

\author{J.~M.~Zhang}
 \email{wdlang06@gmail.com}
\affiliation{College of Physics and Energy, Fujian Normal University, Fujian Provincial Key Laboratory of Quantum Manipulation and New Energy Materials, Fuzhou, 350117, China}
\affiliation{Fujian Provincial Collaborative Innovation Center for Optoelectronic Semiconductors and Efficient Devices, Xiamen, 361005, China}

\date{\today}

\begin{abstract}
The Hilbert space for three fermions in six orbitals, lately dubbed the ``Borland-Dennis setting,'' is a proving ground for insights into electronic structure.   
Borland and Dennis discovered that, when referred to coordinate systems defined in terms of its natural orbitals, a wave function in the Borland-Dennis setting has the same structure as a $3$-qubit state.  By dint of the Borland-Dennis Theorem, canonical forms for $3$-qubit states have analogs in the Borland-Dennis setting.   

One of these canonical forms is based upon ``max-overlap Slater determinant approximations.'' 
Any max-overlap Slater determinant approximation of a given wave function is the leading term in a $5$-term configuration interaction (CI) expansion of that wave function.  
Our main result is that ``max-overlap CIS approximations'' also lead to $5$-term CI expansions, distinct from those based on max-overlap Slater determinant approximations, though of the same symmetric shape.  
We also prove the analog of this result for $3$-qubit setting.  
\end{abstract}

\pacs{Valid PACS appear here}
\keywords{Suggested keywords}
\maketitle

\section{Introduction}

The Hilbert space for three fermions in six orbitals has lately been dubbed the ``Borland-Dennis setting'' \cite{SchillingGrossChristandl,Schilling}.  
The founding contribution of Borland and Dennis was their discovery that every wave function in this Hilbert space 
has a tidy configuration interaction (CI) expansion with respect to its  natural orbitals, and that this entails constraints on its natural occupation numbers \cite{borland1,borland2}.   
The Borland-Dennis setting is a proving ground for insights into electronic structure because it is the smallest ``nontrivial'' many-fermion Hilbert space in some sense.  

As a vector space, the Borland-Dennis space is isomorphic to $\Borland$, the space of alternating $3$-tensors on $\mathbb{C}^6$.   Hilbert space geometry is imposed on $\Borland$ by declaring the decomposable tensors $e_i  \wedge  e_j  \wedge  e_k$ with $i < j < k$ to be orthonormal, where $\{e_1,\ldots,e_6\}$ is the standard basis of $\mathbb{C}^6$.     

Every vector in $\Borland$ can be written as a linear combination of the twenty basis vectors $e_i  \wedge  e_j  \wedge  e_k$; this is its CI expansion with respect to the orthonormal basis $\{e_1,\ldots,e_6\}$ of ``reference'' orbitals.   A different orthonormal basis of $\Csix$ would yield a different CI expansion; changes-of-basis of $\Csix$ induce changes-of-basis of $\Borland$.  Thus the group U(6) of $6 \times 6$ unitary matrices acts on the $20$-dimensional vector space of CI expansion coefficients.  Recently, Chen, \DJ okovi\'c, Grassl, and Zeng have presented a complete set of algebraic invariants for this action of U(6) on the Borland-Dennis space \cite{ChenDokovicGrasslZeng}.

The Borland-Dennis setting contains an isometric image of the $3$-qubit Hilbert space 
$\mathbb{C}^2 \otimes \mathbb{C}^2 \otimes \mathbb{C}^2$ via the embeddding 
\begin{eqnarray}
  |000\rangle \  \longmapsto \   e_1 \wedge  e_3 \wedge  e_5 , 
  & \quad &
  |111\rangle \  \longmapsto \   e_2\wedge  e_4 \wedge  e_6    \nonumber \\
  |100\rangle \  \longmapsto \   e_2\wedge  e_3 \wedge  e_5  ,
  & \quad &
  |011\rangle \  \longmapsto \   e_1 \wedge  e_4 \wedge  e_6   \nonumber \\
  |010\rangle \  \longmapsto \   e_1\wedge  e_4 \wedge  e_ 5 ,
  & \quad &
  |101\rangle \  \longmapsto \   e_2\wedge  e_3 \wedge  e_6   \nonumber \\
  |001\rangle \  \longmapsto \   e_1\wedge  e_3 \wedge  e_6  ,
  & \quad &
  |110\rangle \  \longmapsto \   e_2\wedge  e_4 \wedge  e_5.    \quad
\label{standard embedding}
\end{eqnarray}
What Borland and Dennis discovered is that every wave function in $\Borland$ is U(6)-equivalent to one in the image of the embedding (\ref{standard embedding}).  

By dint of the Borland-Dennis Theorem, algebraic results for the $3$-qubit setting can be transferred to the Borland-Dennis setting, and {\it vice versa}  \cite{ChenDokovicGrasslZeng}.  
For example, every $3$-qubit wave function is equivalent, under ``local'' unitary transformations of $1$-qubit Hilbert spaces, to one that can be written as a linear combination of $ |000\rangle, |001\rangle,  |010\rangle, |100\rangle$, and $|111\rangle$  \cite{CarteretHiguchiSudbery,AcinAndrianovJaneTarrach 2001}.  
 Analogously, every wave function in $\Borland$ is U(6)-equivalent to a linear combination of the five configurations that correspond via (\ref{standard embedding}) to these five $3$-qubit product states.   
Chen {\it et al.} \cite{ChenDokovicGrasslZeng} have shown how to standardize these $5$-term CI expansions to define a ``canonical form'' 
for representatives of U(6) equivalence classes.

This canonical form for wave functions $\psi \in \Borland$ can be obtained from their ``max-overlap'' Slater determinant approximations.  
A  max-overlap Slater determinant approximation of $\psi$ is a Slater determinant wave function $\chi_* =  f_1\wedge f_2 \wedge  f_3$ such that 
$\langle \chi_*, \psi \rangle$ is greater than or equal to the overlap $ \big|\langle \chi, \psi \rangle\big| $ between $\psi$ and any other Slater determinant $\chi$.  Any max-overlap Slater determinant approximation of $\psi$ is the principal configuration in a $5$-term CI expansion for $\psi$ that is U(6)-equivalent to its canonical form.

Our main result is that every wave function $\psi \in \Borland$ also has a $5$-term CI expansion based upon any of its ``max-overlap CIS approximations.''  


A wave function in the Borland-Dennis space is ``CIS'' if and only if there exists an orthonormal basis $\{f_1,f_2,f_3,g_1,g_2,g_3\}$ of $\Csix$ and coefficients $A,B_1,B_2,B_3$ such that 
\begin{equation}
\label{CIS form}
   \psi \equals
    A \  f_1 \wedge   f_2  \wedge   f_3  \plus B_1 \ g_1  \wedge  f_2 \wedge    f_3  
    \plus  B_2 \ f_1  \wedge   g_2  \wedge   f_3  \plus B_3  \  f_1 \wedge   f_2  \wedge   g_3  \ .
\end{equation}
It turns out that a wave function $\psi \in \Borland$ is CIS if and only if its ``hyperdeterminant'' is $0$.

The structure of the CI expansion (\ref{CIS form}) may be easier to see when it is displayed schematically as follows: 
\begin{equation}
\label{schematically CIS}
\begin{matrix}
 f_1 & f_2 & f_3 & g_1 & g_2 & g_3 & \\
 \mathrm{X} & \mathrm{X} & \mathrm{X} & \mathrm{O} & \mathrm{O} & \mathrm{O} & \qquad A  \\
\mathrm{O} & \mathrm{X} & \mathrm{X} & \mathrm{X} & \mathrm{O} & \mathrm{O} & \qquad B_1  \\
\mathrm{X} & \mathrm{O} & \mathrm{X} & \mathrm{O} & \mathrm{X} & \mathrm{O} & \quad  -B_2  \\
\mathrm{X} & \mathrm{X} & \mathrm{O} & \mathrm{O} & \mathrm{O} & \mathrm{X} & \qquad B_3  \\
\end{matrix}
\end{equation}
 We will often use such ``configuration diagrams'' in this article.
The topmost row of the diagram specifies the orthonormal system of orbitals that configurations are built from.  
Each row beneath this header represents a configuration and a coefficient.
  For example, the second-to-last row denotes the term 
  $ - B_2 \ f_1  \wedge   f_3  \wedge   g_2 = B_2 \ f_1  \wedge   g_2  \wedge   f_3$.  
  The diagram as a whole represents the sum of the terms corresponding to its configuration rows, which may be listed in any order.

Here is our main result.  Suppose $\chi_*$ is a CIS wave function of maximum overlap with $\psi \in \Borland$ with canonical form (\ref{schematically CIS}).
Then there exists a complex number $D$ such that $\psi$ has the CI expansion 
\[
\begin{matrix}
 f_1 & f_2 & f_3 & g_1 & g_2 & g_3 & \\
  \mathrm{X} & \mathrm{X} & \mathrm{X} & \mathrm{O} & \mathrm{O} & \mathrm{O} & \qquad sA  \\
\mathrm{O} & \mathrm{X} & \mathrm{X} & \mathrm{X} & \mathrm{O} & \mathrm{O} & \qquad sB_1  \\
\mathrm{X} & \mathrm{O} & \mathrm{X} & \mathrm{O} & \mathrm{X} & \mathrm{O} & \quad  -sB_2  \\
\mathrm{X} & \mathrm{X} & \mathrm{O} & \mathrm{O} & \mathrm{O} & \mathrm{X} & \qquad sB_3  \\
 \mathrm{O} & \mathrm{O} & \mathrm{O} & \mathrm{X} & \mathrm{X} & \mathrm{X} & \qquad D   \\
\end{matrix}
\]
with $s = \langle \chi_*, \psi \rangle $.  
An analogous result also holds in the $3$-qubit setting.  

\bigskip

The rest of this article is organized as follows:

   \bigskip

  \begin{tabular}{ll}
Section~\ref{Notation, terminology, and lemmas}: \quad & Notation, terminology, and lemmas \\
Section~\ref{Algebraic invariants}: \quad & The algebraic invariants \\
Section~\ref{Small classes of Borland-Dennis wave functions}: \quad & Small classes of Borland-Dennis wave functions \\
Section~\ref{``CIS'' and ``CID'' wave functions}: \quad & ``CIS'' and ``CID'' wave functions \\
Section~\ref{Max-overlap approximations}: \quad & Max-overlap approximations \\
Section~\ref{The Borland-Dennis Theorem}: \quad & The Borland-Dennis Theorem \\
Section~\ref{The 5-term canonical forms}: \quad & The 5-term canonical forms \\
Section~\ref{5-term expansions based on max-overlap CIS approximations}: \quad & 
                     5-term expansions based on max-overlap CIS approximations \\
Appendix:  \quad & Another proof of the Borland-Dennis Theorem \\
\end{tabular}

\bigskip

  Sections~\ref{Notation, terminology, and lemmas} -  \ref{Max-overlap approximations} contain background: definitions, lemmas, and other elements that will be required in the last three sections.   Sections~\ref{The Borland-Dennis Theorem} and \ref{The 5-term canonical forms} review  the Borland-Dennis Theorem and the ``canonical'' $5$-term CI expansions.  Section~\ref{5-term expansions based on max-overlap CIS approximations} presents our main results about $5$-term CI expansions based upon max-overlap CIS approximations and the analog of this result for the $3$-qubit setting.  Finally, a hands-on proof of the Borland-Dennis Theorem is posted in the appendix.


\section{Notation, terminology, and lemmas}
\label{Notation, terminology, and lemmas}

\subsection{Terminology and notation}
\label{Terminology and notation}

To describe $n$-fermion states, we shall use some of the notation and terminology of exterior algebra 
\cite{Cassam-Chenai,Cassam-Chenai_Petras} 
alongside the language of quantum physics and quantum chemistry \cite{ColemanYukalov}.    

When we refer to a vector in any Hilbert space as a ``wave function,'' we mean that it is normalized.   
``Orbitals'' are wave functions in the ``$1$-particle Hilbert space.'' 
In this article, the $1$-particle Hilbert spaces of interest will mainly be isomorphic to $\mathbb{C}^d$ with the standard inner-product, 
and the $n$-fermion Hilbert spaces will be isomorphic to spaces $\wedge^n \mathbb{C}^d$ of alternating $n$-tensors.   
In particular, the Hilbert space for three fermions in six orbitals is isomorphic to $\Borland$.  We often call this the Borland-Dennis space.  

Elements of a vector space $\wedge^n \mathbb{C}^d$ are sometimes called ``multivectors.''  We may call them ``bivectors'' when $n=2$ and  ``trivectors'' when $n=3$.   A multivector in $\wedge^n \mathbb{C}^d$ is ``decomposable'' if it can be written as a simple alternating product $v_1 \wedge  \cdots \wedge v_n$ of vectors in $\mathbb{C}^d$.

Let  $\{e_1,\ldots,e_d\}$ be the standard orthonormal basis of $ \mathbb{C}^d$.  
Hilbert space geometry is imposed on $\wedge^n \mathbb{C}^d$ by declaring the multivectors $e_{i_1} \wedge e_{i_1}  \wedge \cdots \wedge e_{i_n} $ with $1 \le i_1 < i_2 < \cdots < i_n \le d$ to be orthonormal.    
This implies the following general formula for the inner product of two decomposable multivectors:
\[
 \big\langle f_1 \wedge \cdots \wedge f_n, \ g_1 \wedge   \cdots \wedge g_n \big\rangle
  \equals 
   \det \big(\langle f_i, g_j \rangle\big)_{i,j=1}^n.
\]

When $f_1\ldots,f_n$ are orthonormal orbitals in $\mathbb{C}^d$, the decomposable multivector $f_1 \wedge f_2 \wedge \cdots \wedge f_n$ is a unit vector in $\wedge^n \mathbb{C}^d$, called a ``Slater determinant.''
If $(g_1,g_2,\ldots,g_d)$ is an ordered orthonormal basis of  $\mathbb{C}^d$, then the set 
$ \big\{ g_{i_1} \wedge g_{i_2} \wedge \cdots \wedge g_{i_n} :\  1 \le i_1 < i_2 < \ldots <  i_n \le d \big\}   $
of Slater determinants is an orthonormal basis of $\wedge^n\mathbb{C}^d$.   
We will refer the elements of this basis set as ``configurations'' in the ``reference orbitals''  $(g_1,\ldots,g_d)$.    
Any wave function in $\wedge^n \mathbb{C}^d$ can be written as a linear combination of configurations
in the reference orbitals.  This representation of a wave function is known as its ``configuration interaction expansion'' or ``CI expansion'' with respect to those reference orbitals.

Let $\HH$ denote a $1$-particle Hilbert space.  A wave function $\psi \in \wedge^n\HH $ represents a pure $n$-fermion state with density matrix $|\psi \rangle\!\langle \psi |$.   
The state determines, in particular, the answer to the question, ``What is the probability that a given orbital would be found to be occupied, if an experiment to determine its occupation or vacancy were performed?''  These probabilities are the diagonal matrix elements of the $1$-particle reduced density matrix, or ``1RDM'' of $\psi$.  The 1RDM $\Gamma$ of  $\psi $ is the unique Hermitian operator on $\HH$ such that $\langle f, \Gamma  f  \rangle$ is the probability of occupation of the orbital $f$, for any orbital $f$.   The 1RDM can be derived from $\psi $ by   taking a partial trace of the associated density matrix and normalizing it to have trace $n$, that is, 
$$ \Gamma \equals n\  \mathrm{Tr}_{2,\ldots,n} |\psi \rangle\!\langle \psi |.$$ 
Eigenvectors and eigenvalues of the 1RDM derived from $\psi$ are called ``natural orbitals'' of $\psi$ and ``natural occupation numbers'' of $\psi$, respectively.  

We shall use the following notation for subspaces of $\wedge^n \HH$:

If $\VV$ is a subspace of $\HH$, we identify $\wedge^n \VV$ with the subspace 
$  \Span\big\{ v_1 \wedge \cdots \wedge v_n :\ v_1 \wedge \cdots \wedge v_n \in \VV \big\}$
 of $\wedge^n \HH$.  
If $\mathfrak{F}_1$ and $\mathfrak{F}_2$ are orthogonal subspaces of $\wedge^n \HH$, we identify 
$\mathfrak{F}_1 \oplus \mathfrak{F}_2$ and $\mathfrak{F}_1 \wedge \mathfrak{F}_2$ with the subspaces 
\[
      \Span\big\{ \psi_1 + \psi_2  :\ \psi_1 \in \mathfrak{F}_1, \psi_2 \in \mathfrak{F}_2 \big\}
\quad \hbox{and} \quad 
      \Span\big\{ \psi_1 \wedge \psi_2  :\ \psi_1 \in \mathfrak{F}_1, \psi_2 \in \mathfrak{F}_2 \big\}
\] 
of $\wedge^n \HH$, respectively.

Two multivectors $\phi_1 \in \wedge^{n_1} \HH$ and $\phi_2 \in \wedge^{n_2} \HH$ are said to be ``strongly orthogonal'' if there exist orthogonal subspaces $\VV_1$ and $\VV_2$ of $\HH$ such that $\phi_1 \in \wedge^{n_1} \VV_1$ and $\phi_2 \in \wedge^{n_2} \VV_2$.    The ``rank'' of an $n$-fermion wave function $\psi \in \wedge^n \HH$ is the least $m$ such that $\psi$  belongs to a subspace of the form $\wedge^n \MM$ where $\MM$ is an $m$-dimensional subspace of $\HH$.   The rank of a wave function $\psi$ equals the matrix rank of its 1RDM.


\subsection{Basic lemmas}

Let $\psi$ be a wave function in a many-fermion Hilbert space $\wedge^n \HH$.  Orthonormal orbitals $u_1, u_2, \ldots, u_n$ are called ``best-overlap orbitals'' \cite{KutzelniggSmith} for $\psi$ if the overlap $\big| \langle u_1 \wedge  \cdots \wedge  u_n, \psi  \rangle  \big|$ between $\psi$ and the Slater determinant $u_1 \wedge  \cdots \wedge  u_n$ is greater than or equal to the overlap  $\big| \langle \chi, \psi  \rangle \big|$ between $\psi$ and any other Slater determinant $\chi$.

The following lemma expresses the fact that best-overlap orbitals satisfy the ``Brueckner conditions'' \cite{KutzelniggSmith,Nesbet}.    
Though we state and prove the lemma for a Hilbert space $\HH$ with a countably infinite orthonormal basis, the result holds for Hilbert spaces of any dimension.

\begin{Lemma}
\label{Brueckner lemma}
Let $(u_1,u_2,\ldots)$ denote an ordered orthonormal basis of a Hilbert space $\HH$ and suppose that $u_1, u_2, \ldots, u_n$ are best-overlap orbitals for a wave function $\psi \in \wedge^n \HH$.  
 If $\{1,2,\ldots,n\}$ and $\{i_1,\ldots,i_n\}$ have exactly $n-1$ members in common, then  
$ \langle u_{i_1}\wedge  \cdots \wedge  u_{i_n} , \psi \rangle =  0  $.
\end{Lemma}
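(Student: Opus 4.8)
The plan is to derive the vanishing of the ``singly excited'' overlap from the first-order optimality condition, which is the classical route to the Brueckner conditions. Write $\Phi_0 = u_1 \wedge u_2 \wedge \cdots \wedge u_n$ and $s = \langle \Phi_0, \psi \rangle$. Since $u_1,\ldots,u_n$ are best-overlap orbitals, $|s|$ equals the maximal overlap of $\psi$ with a Slater determinant; and this maximum is strictly positive, because the Slater determinants span a dense subspace of $\wedge^n \HH$ (they include the orthonormal basis multivectors), so a wave function orthogonal to all of them would have to be $0$. After multiplying $\psi$ by a unit scalar we may therefore assume $s > 0$. Now suppose $\{i_1 < \cdots < i_n\}$ agrees with $\{1,\ldots,n\}$ except that some index $j \le n$ has been replaced by some index $k > n$. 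Reordering a wedge product only changes an inner product by a sign, so it suffices to prove that $t := \langle \chi_j^k, \psi \rangle = 0$, where $\chi_j^k = u_1 \wedge \cdots \wedge u_{j-1} \wedge u_k \wedge u_{j+1} \wedge \cdots \wedge u_n$ is $\Phi_0$ with $u_j$ replaced by $u_k$.

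The key construction is a one-parameter rotation of the $j$th orbital into the direction $u_k$. For a unit scalar $w$ and $\theta \in \mathbb{R}$, set
\[ \chi(\theta) \equals u_1 \wedge \cdots \wedge u_{j-1} \wedge \big(\cos\theta\, u_j + w\sin\theta\, u_k\big) \wedge u_{j+1} \wedge \cdots \wedge u_n . \]
Because $u_k$ is orthogonal to every other orbital appearing here, $\cos\theta\, u_j + w\sin\theta\, u_k$ is a unit vector orthogonal to $u_1,\ldots,u_{j-1},u_{j+1},\ldots,u_n$, so $\chi(\theta)$ is a genuine Slater determinant for every $\theta$. By multilinearity in the $j$th slot, $\chi(\theta) = \cos\theta\,\Phi_0 + w\sin\theta\,\chi_j^k$, and $\Phi_0 \perp \chi_j^k$. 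Hence $g(\theta) := |\langle \chi(\theta),\psi\rangle|^2 = |\cos\theta\, s + \bar w\sin\theta\, t|^2$, and since $|\langle\chi,\psi\rangle| \le |s|$ for every Slater determinant $\chi$, the smooth function $g$ attains a (global, hence local) maximum at $\theta = 0$. Differentiating, $g'(0) = 2s\,\mathrm{Re}(\bar w\, t)$, so $g'(0) = 0$ forces $\mathrm{Re}(\bar w\, t) = 0$. Applying this with $w = 1$ and then with $w = i$ yields $\mathrm{Re}(t) = 0$ and $\mathrm{Im}(t) = 0$, i.e.\ $t = 0$, which is the assertion of the lemma.

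The routine parts are the multilinearity identity for $\chi(\theta)$ and the elementary differentiation of $g$; the single point that needs care is the use of the complex phase $w$: one real rotation only controls one real-linear combination of $\mathrm{Re}(t)$ and $\mathrm{Im}(t)$, so two independent variations (the choices $w=1$ and $w=i$) are needed to conclude $t=0$ rather than merely $\mathrm{Re}(\bar s t)=0$. I also note that nothing in the argument uses finite-dimensionality: each $\chi(\theta)$ lies in the finite-dimensional subspace spanned by $u_1,\ldots,u_n,u_k$, so the smoothness and optimality reasoning applies verbatim when $\HH$ has a countably infinite basis (or any dimension).
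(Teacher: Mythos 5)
Your proof is correct. It reaches the Brueckner condition by the classical first-order variational route: you exhibit the one-parameter family of Slater determinants $\chi(\theta)=\cos\theta\,\Phi_0+w\sin\theta\,\chi_j^k$ and impose stationarity of $|\langle\chi(\theta),\psi\rangle|^2$ at $\theta=0$, using the two phases $w=1$ and $w=i$ to kill both the real and imaginary parts of $t$. The paper's proof rests on the same geometric fact --- that the whole two-dimensional span of $\Phi_0$ and a single excitation consists of decomposable multivectors, because the two configurations share $n-1$ wedge factors --- but then uses the maximality \emph{globally} rather than infinitesimally: the normalized projection of $\psi$ onto that plane is itself a Slater determinant with overlap-squared $|s|^2+|t|^2$, and comparing this with the maximum $|s|^2$ gives $t=0$ in one Pythagorean step, with no calculus and no case split over phases. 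Your version is a little longer but buys something the paper's does not: it is the template for constrained max-overlap problems where the admissible set is a curved manifold rather than a union of linear slices, and indeed the paper itself deploys exactly your stationarity-plus-phase-rotation technique in the proof of its Theorem~\ref{CIS theorem for 3-qubit}. Two small remarks: your normalization $s>0$ (achieved by rescaling $\psi$ by a unit scalar, which affects neither hypothesis nor conclusion) is fine but not strictly needed --- the paper's argument never uses $s\neq 0$, since $|s|^2+|t|^2\le|s|^2$ forces $t=0$ regardless --- and your observation that the argument is insensitive to $\dim\HH$ matches the remark the paper makes just before stating the lemma.
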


\begin{proof}
Let 
\[
    \psi \ = \  \sum_{j_1 < \cdots < j_n } C_{j_1 \cdots j_n } u_{j_1} \wedge \cdots \wedge u_{j_n}\ .
\]
be the CI expansion of $\psi$ with respect to the ordered orthonormal basis $(u_1,u_2,\ldots)$.  
As $u_1 \wedge  \cdots \wedge  u_n$ is a best-overlap approximation of $\psi$, the coefficient $C_{1,2,\ldots,n}$ is nonzero; indeed, $|C_{1,2,\ldots,n}|^2$ is the maximum overlap-squared between $\psi$ and any Slater determinant.  

Suppose that $\{i_1,\ldots,i_n\}$ with $i_1< \cdots < i_n $ has exactly $n-1$ members in common with $\{1,2,\ldots,n\}$. 
 Then $i_1< \cdots < i_{n-1} \le n$ and $i_n > n$.  We will prove that $C_{i_1 \cdots i_n } = 0$.  

The multivector $C_{1,\ldots,n} u_1 \wedge  \cdots \wedge  u_n + C_{i_1 \cdots i_n } u_{i_1} \wedge \cdots \wedge u_{i_n}$ is decomposable, because $u_1 \wedge  \cdots \wedge  u_n$ and $u_{i_1} \wedge \cdots \wedge u_{i_n}$ have $n-1$ wedge factors in common, and these may be factored out of the sum,  producing a wedge product of $n$ orbitals.  Let $\chi$ be a (normalized) wave function proportional to this decomposable multivector, so that $\chi$ is a Slater determinant. The overlap-squared $|\langle \chi, \psi \rangle |^2$ equals $|C_{1,\ldots,n} |^2 + |C_{i_1 \cdots i_n } |^2$.  Since $|C_{1,\ldots,n} |^2$ is the maximum possible overlap-squared between $\psi$ and any Slater determinant, $C_{i_1 \cdots i_n } = \langle u_{i_1}\wedge  \cdots \wedge  u_{i_n} , \psi \rangle$ must equal $0$.
\end{proof}

Lemma~\ref{Brueckner lemma} means that the CI expansion of $\psi$ with respect to the ordered orthonormal basis 
$(u_1,u_2,\ldots )$  does not contain any configurations that are ``single excitations'' of $u_1  \wedge  \cdots \wedge  u_n$.
This conclusion can also be restated in a coordinate-free manner.  
Let $\mathcal{R}  = \Span\{u_1, u_2, \ldots, u_n\}$ and let $\mathcal{R}^\perp$ denote its orthogonal complement in $\HH$.  
Lemma~\ref{Brueckner lemma} says that $\psi$ is orthogonal to the subspace $\big( \wedge^{n-1} \mathcal{R}\big) \wedge  \mathcal{R}^\perp$ of $\wedge^n \HH$.

Lemma~\ref{Brueckner lemma} implies the following classical result, for which there are also other, more elementary, proofs   \cite{zjm}.  

\begin{Lemma}\label{hole}
Every multivector in $\wedge^{n}\mathbb{C}^{n+1}$ is decomposable.  
\end{Lemma}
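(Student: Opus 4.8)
The plan is to deduce this directly from Lemma~\ref{Brueckner lemma}, as the text hints. The zero multivector is decomposable (write it as $0\wedge v_2\wedge\cdots\wedge v_n$ with arbitrary $v_i$), so it suffices to treat a nonzero $\phi\in\wedge^n\mathbb{C}^{n+1}$, and after rescaling we may assume $\phi$ is a wave function.

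First I would invoke the existence of best-overlap orbitals for $\phi$. The function sending an ordered orthonormal $n$-tuple $(v_1,\ldots,v_n)$ in $\mathbb{C}^{n+1}$ to the overlap $\big|\langle v_1\wedge\cdots\wedge v_n,\phi\rangle\big|$ is continuous on the complex Stiefel manifold of such tuples, which is compact, so it attains its maximum at some $(u_1,\ldots,u_n)$. I would then extend $(u_1,\ldots,u_n)$ to an ordered orthonormal basis $(u_1,\ldots,u_n,u_{n+1})$ of $\mathbb{C}^{n+1}$ and write the CI expansion of $\phi$ with respect to it.

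The key point is then a dimension count. A configuration in these reference orbitals is a Slater determinant $u_{i_1}\wedge\cdots\wedge u_{i_n}$ indexed by the $n+1$ size-$n$ subsets of $\{1,\ldots,n+1\}$, i.e.\ by the single omitted index. Omitting $n+1$ gives $u_1\wedge\cdots\wedge u_n$ itself; omitting any $j\le n$ gives a configuration whose index set shares exactly $n-1$ elements with $\{1,\ldots,n\}$, hence a single excitation of $u_1\wedge\cdots\wedge u_n$. By Lemma~\ref{Brueckner lemma} all of those coefficients vanish, so $\phi$ is a scalar multiple of the Slater determinant $u_1\wedge\cdots\wedge u_n$, and in particular decomposable.

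The only content beyond Lemma~\ref{Brueckner lemma} is the existence of best-overlap orbitals, i.e.\ the routine compactness argument above; everything else is the bookkeeping that $\wedge^n\mathbb{C}^{n+1}$ admits no configurations other than the reference one and its single excitations, which is where $\dim\mathbb{C}^{n+1}=n+1$ is used decisively. As a cross-check, the statement also follows from Hodge duality: a choice of unit volume form on $\mathbb{C}^{n+1}$ identifies $\wedge^n\mathbb{C}^{n+1}$ with $\mathbb{C}^{n+1}\cong\wedge^1\mathbb{C}^{n+1}$, carrying the decomposable $n$-vectors supported on an $n$-plane $W$ onto the line $W^\perp$; since every line arises this way, every $n$-vector is decomposable.
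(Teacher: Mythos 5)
Your proof is correct and is exactly the deduction the paper intends: it states Lemma~\ref{hole} as a consequence of Lemma~\ref{Brueckner lemma} without writing out details, and your argument (best-overlap orbitals exist by compactness; every non-reference configuration in $\wedge^n\mathbb{C}^{n+1}$ is a single excitation, so all their coefficients vanish) supplies precisely those details. The Hodge-duality cross-check matches the ``more elementary'' alternative proofs the paper alludes to, but the main line of your argument is the paper's own.
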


The following lemma follows from Lemma~\ref{Brueckner lemma} \cite{LoewdinShull}.  It can also be proven by using the singular value decomposition  \cite{coleman,PaskauskasYou}.  
\begin{Lemma}
\label{canonical geminal}
Let $\HH$ denote a Hilbert space of dimension at least $2$ and let  $\psi$ be a wave function in $\wedge^2 \HH$.  
There exists an orthonormal basis $\big\{ u_1,v_1, u_2,v_2, \ldots  \big\}$ of $\HH$ and a square summable sequence of nonnegative numbers $A_1 \ge A_2 \ge \cdots $ 
such that 
\begin{equation}
\label{geminal CI}
\psi \equals  \sum_i A_i\  u_i\!\wedge  v_i \ .
\end{equation}
\end{Lemma}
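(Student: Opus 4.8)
The plan is to build the orthonormal pairs $(u_i,v_i)$ one at a time by a greedy ``peeling'' procedure, using Lemma~\ref{Brueckner lemma} to control the residual at each step. The greedy step requires choosing, for a given bivector, a Slater determinant of maximum overlap, so I first note that this maximum is attained: since $\psi$ is normalized it is square summable, so the skew-symmetric coefficient matrix of $\psi$ in any orthonormal basis is Hilbert--Schmidt, hence compact, and the supremum of $|\langle a\wedge b,\phi\rangle|$ over Slater determinants $a\wedge b$ equals the top singular value of the associated compact operator and is therefore achieved for every nonzero bivector $\phi$ met below. This attainment is the one place where the infinite-dimensional case genuinely differs from the finite-dimensional one, and I regard it as the main technical obstacle; it is exactly what compactness of Hilbert--Schmidt operators supplies.

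First I would extract the leading pair. Choose best-overlap orbitals $u_1,v_1$ for $\psi$ and put $\RR=\Span\{u_1,v_1\}$; after multiplying $u_1$ by a unit scalar I may assume $A_1:=\langle u_1\wedge v_1,\psi\rangle\ge 0$. Applying Lemma~\ref{Brueckner lemma} with $n=2$ shows that $\psi$ is orthogonal to $(\wedge^1\RR)\wedge\RR^\perp$, i.e.\ every ``single excitation'' $u_1\wedge w$ and $v_1\wedge w$ with $w\perp\RR$ has vanishing coefficient. Expanding $\psi$ in an orthonormal basis extending $\{u_1,v_1\}$, the only surviving configurations are the reference $u_1\wedge v_1$ and the ``double excitations'' in $\wedge^2\RR^\perp$, so
\begin{equation}
\psi \equals A_1\, u_1\wedge v_1 \plus \psi', \qquad \psi'\in\wedge^2\RR^\perp .
\end{equation}
The residual $\psi'$ is strongly orthogonal to $u_1\wedge v_1$, so iterating the construction on $\psi'/\|\psi'\|$ yields a pair $(u_2,v_2)$ lying in $\RR^\perp$, automatically orthonormal to $u_1,v_1$, and so on; write $\psi^{(k)}=\psi-\sum_{i\le k}A_i\,u_i\wedge v_i\in\wedge^2\RR_k^\perp$ for the residual after $k$ steps, where $\RR_k=\Span\{u_1,v_1,\ldots,u_k,v_k\}$. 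Monotonicity $A_1\ge A_2\ge\cdots$ holds because, for Slater determinants drawn from $\RR_k^\perp$, the overlap with $\psi^{(k)}$ coincides with the overlap with $\psi^{(k-1)}$; thus $A_{k+1}$ maximizes the same functional over a smaller set than $A_k$ does, so $A_{k+1}\le A_k$.

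Finally I would verify convergence and square summability. Since the Slater determinants $u_i\wedge v_i$ are pairwise orthonormal, $\big\|\sum_{i=m+1}^{k}A_i\,u_i\wedge v_i\big\|^2=\sum_{i=m+1}^{k}A_i^2$ and $\sum_i A_i^2\le\|\psi\|^2=1$, so the partial sums are Cauchy, $\{A_i\}$ is square summable, and $\psi_\infty:=\psi-\sum_i A_i\,u_i\wedge v_i$ exists in norm. Each $\psi^{(m)}$ with $m\ge k$ lies in the closed subspace $\wedge^2\RR_k^\perp$, so the minimal carrying subspace of $\psi_\infty$ lies in $\bigcap_k\RR_k^\perp=\MM^\perp$, where $\MM=\overline{\Span}\{u_i,v_i:i\}$; hence $\psi_\infty\in\wedge^2\MM^\perp$. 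If $\psi_\infty\neq 0$, some Slater determinant $a\wedge b$ with $a,b\in\MM^\perp$ would have $\delta:=|\langle a\wedge b,\psi_\infty\rangle|>0$; but $a\wedge b$ is orthogonal to every $u_i\wedge v_i$, so $|\langle a\wedge b,\psi^{(k)}\rangle|=\delta$ for all $k$, forcing $A_{k+1}\ge\delta$ and contradicting $A_{k+1}\to 0$. Therefore $\psi_\infty=0$ and $\psi=\sum_i A_i\,u_i\wedge v_i$. I note that the same normal form can be reached without the greedy scheme by passing to the skew-symmetric coefficient matrix of $\psi$ and invoking its unitary-congruence (Youla/Takagi) normal form, which for square-summable $\psi$ is the spectral theorem for the compact operator built from that matrix; this is the singular-value-decomposition route.
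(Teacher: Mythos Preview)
Your proof is correct and follows exactly the route the paper indicates: the paper does not spell out a proof but simply asserts that the lemma ``follows from Lemma~\ref{Brueckner lemma}'' and remarks that it ``can also be proven by using the singular value decomposition,'' and you have supplied the details of the Brueckner-based greedy peeling argument while noting the SVD alternative at the end. The only cosmetic gap is that your orthonormal system $\{u_i,v_i\}$ may fail to span all of $\HH$, so to match the statement literally you should extend it to a full basis (with the remaining $A_i=0$); this is immediate.
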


We will frequently use the following two special cases of Lemma~\ref{canonical geminal}:

\begin{Lemma}
\label{two-in-four}
For every bivector $\gamma \in \wedge^2 \mathbb{C}^4$, there exists an orthonormal basis $\{u_1, u_2, v_1,v_2\}$ of $\mathbb{C}^4$ and coefficients  $A_1,A_2$ such that  $\gamma = A_1 u_1 \wedge  v_1 + A_2 u_2 \wedge  v_2$. 
\end{Lemma}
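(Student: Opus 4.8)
The plan is to read off Lemma~\ref{two-in-four} as the special case $\HH = \mathbb{C}^4$ of Lemma~\ref{canonical geminal}. Applying that lemma to $\gamma \in \wedge^2 \mathbb{C}^4$ furnishes an orthonormal basis of $\mathbb{C}^4$ of the form $\{u_1, v_1, u_2, v_2, \ldots\}$ together with a nonnegative, nonincreasing sequence $A_1 \ge A_2 \ge \cdots$ realizing the expansion~(\ref{geminal CI}). Since $\dim \mathbb{C}^4 = 4$, that basis list terminates after $u_2, v_2$, so~(\ref{geminal CI}) collapses to $\gamma = A_1\, u_1 \wedge v_1 + A_2\, u_2 \wedge v_2$. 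Reordering the four basis vectors as $\{u_1, u_2, v_1, v_2\}$ gives exactly the asserted form; the ordering and nonnegativity of $A_1, A_2$ are not demanded by the statement, but come along for free.

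For readers who prefer a self-contained argument, I would sketch the direct linear-algebra route. Fix the standard basis $\{e_1, \ldots, e_4\}$ and write $\gamma = \sum_{i<j} C_{ij}\, e_i \wedge e_j$, so that the coefficients assemble into a complex antisymmetric $4 \times 4$ matrix $C$. Under the action on the coefficient array induced by a change of orbital basis, namely $C \mapsto U C U^{\mathrm{T}}$ with $U$ unitary, every complex antisymmetric matrix can be normalized to block-diagonal form with two $2 \times 2$ blocks, each having zero diagonal and off-diagonal entries $\pm A_k$ with $A_k \ge 0$ --- the antisymmetric analogue of the Takagi factorization. In the resulting orthonormal basis, relabeled $\{u_1, v_1, u_2, v_2\}$, this block form reads precisely $\gamma = A_1\, u_1 \wedge v_1 + A_2\, u_2 \wedge v_2$.

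I do not expect a genuine obstacle here: the substance is already contained in Lemma~\ref{canonical geminal}, and the four-dimensional restriction is just a dimension count. In the alternative route the only point needing care is that a change of orbital basis acts on $C$ by \emph{congruence}, $C \mapsto U C U^{\mathrm{T}}$, not by \emph{similarity}, $C \mapsto U C U^{\dagger}$; it is the congruence action that correctly tracks the wedge structure and thereby guarantees that the two vectors appearing in each block are orthonormal to each other and to the vectors of the other block.
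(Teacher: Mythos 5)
Your proposal is correct and takes exactly the paper's route: the paper gives no separate proof of Lemma~\ref{two-in-four}, presenting it (alongside Lemma~\ref{two-in-five}) simply as a special case of Lemma~\ref{canonical geminal}, which is precisely your first paragraph's dimension-count argument. Your alternative self-contained sketch via the congruence normal form $C \mapsto U C U^{\mathrm{T}}$ of the antisymmetric coefficient matrix is also sound, but it is extra material the paper does not require.
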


\begin{Lemma}
\label{two-in-five}
For every bivector $\gamma \in \wedge^2 \mathbb{C}^5$, there exist orthonormal orbitals $u_1, u_2, v_1,v_2 \in  \mathbb{C}^5$ and coefficients  $A_1,A_2$ such that  $\gamma = A_1 u_1 \wedge  v_1 + A_2 u_2 \wedge  v_2$. 
\end{Lemma}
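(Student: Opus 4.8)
The plan is to derive Lemma~\ref{two-in-five} as the five-dimensional instance of the general geminal decomposition recorded in Lemma~\ref{canonical geminal}, followed by a short dimension count that caps the number of terms at two. First I would apply Lemma~\ref{canonical geminal} to the Hilbert space $\mathbb{C}^5$ (which has dimension at least $2$), obtaining paired orthonormal orbitals $(u_i,v_i)$ in $\mathbb{C}^5$ together with nonnegative coefficients $A_1 \ge A_2 \ge \cdots$ for which $\gamma = \sum_i A_i\, u_i \wedge v_i$ as in (\ref{geminal CI}).

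The key observation is then purely combinatorial: every term $A_i\, u_i \wedge v_i$ with $A_i \ne 0$ consumes two distinct orbitals $u_i, v_i$, and all of the orbitals appearing in distinct terms are mutually orthonormal. A collection of $m$ disjoint orthonormal pairs is an orthonormal system of $2m$ vectors in $\mathbb{C}^5$, so $2m \le 5$, forcing $m \le 2$. Hence at most two coefficients $A_i$ can be nonzero, and after relabelling we obtain $\gamma = A_1\, u_1 \wedge v_1 + A_2\, u_2 \wedge v_2$ with $u_1,u_2,v_1,v_2$ orthonormal (taking $A_2 = 0$ when $\gamma$ has rank $0$ or $2$). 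This is exactly the asserted form.

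An alternative, fully self-contained route avoids invoking Lemma~\ref{canonical geminal} in odd dimension. Represent $\gamma$ by its $5 \times 5$ antisymmetric coordinate matrix and note that the matrix rank of an antisymmetric matrix is always even, so the rank of $\gamma$ (equivalently, the matrix rank of its 1RDM) is $0$, $2$, or $4$; in every case it is at most $4$. Then $\gamma$ lies in $\wedge^2 \MM$ for some four-dimensional subspace $\MM \subseteq \mathbb{C}^5$ containing the range of its 1RDM, and Lemma~\ref{two-in-four} applied inside $\MM$ produces orthonormal $u_1,u_2,v_1,v_2 \in \MM \subseteq \mathbb{C}^5$ with $\gamma = A_1\, u_1 \wedge v_1 + A_2\, u_2 \wedge v_2$.

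I do not expect a genuine obstacle here: the content is entirely inherited from Lemma~\ref{canonical geminal} (or from Lemma~\ref{two-in-four}). The only point deserving care is the parity bookkeeping --- in the first route, confirming that the paired orthonormal system supplied by Lemma~\ref{canonical geminal} cannot contain three disjoint pairs inside $\mathbb{C}^5$; in the second route, the standard fact that a complex antisymmetric matrix has even rank. Both are routine, so the proof should be very short.
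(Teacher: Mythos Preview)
Your proposal is correct and matches the paper's own treatment: Lemma~\ref{two-in-five} is presented there simply as a special case of Lemma~\ref{canonical geminal}, with the dimension count (at most two disjoint orthonormal pairs fit inside $\mathbb{C}^5$) left implicit. Your alternative route through the even rank of an antisymmetric matrix and Lemma~\ref{two-in-four} is also fine, but not needed.
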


The next lemma can be deduced from Lemma~\ref{two-in-five} by using the idea of particle-hole duality, 
but we deduce it directly from Lemma~\ref{Brueckner lemma}.

\begin{Lemma}
\label{three-in-five}
For every trivector $\psi \in \wedge^3  \mathbb{C}^5$, there exists an orthonormal basis $\{w, u_1, u_2, v_1,v_2 \}$ of $ \mathbb{C}^5$ and 
coefficients $A_1, A_2$ such that  
\begin{equation}
\label{canonical for for three-in-five} 
\psi = A_1 \  w \wedge u_1 \wedge  v_1   +   A_2 \  w  \wedge u_2 \wedge  v_2 \ . 
\end{equation}
\end{Lemma}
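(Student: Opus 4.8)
The plan is to deduce this from the Brueckner conditions (Lemma~\ref{Brueckner lemma}) exactly as suggested, the point being that in $\wedge^3\mathbb{C}^5$ there is so little room that killing the single excitations of a best-overlap Slater determinant leaves an expansion with only four terms, which then collapses by elementary factoring in the exterior algebra. First I would note that best-overlap orbitals for $\psi$ exist, since the set of Slater determinants in $\wedge^3\mathbb{C}^5$ is the continuous image of a compact set and the overlap attains its maximum there. So pick best-overlap orbitals $u_1,u_2,u_3$ for $\psi$ and extend them to an ordered orthonormal basis $(u_1,u_2,u_3,u_4,u_5)$ of $\mathbb{C}^5$, and write the CI expansion $\psi=\sum_{i<j<k}C_{ijk}\,u_i\wedge u_j\wedge u_k$.

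Next I would run through the $\binom{5}{3}=10$ index sets $S$: if $|S\cap\{1,2,3\}|=2$ then that configuration is a single excitation of $u_1\wedge u_2\wedge u_3$ and $C_S=0$ by Lemma~\ref{Brueckner lemma}; the case $|S\cap\{1,2,3\}|=0$ is impossible because $\{4,5\}$ has only two elements; so the only surviving configurations are $\{1,2,3\}$ itself and the three ``double excitations'' $\{1,4,5\},\{2,4,5\},\{3,4,5\}$. Pulling the common factor $u_4\wedge u_5$ out of the latter three terms gives
\[
   \psi \equals C_{123}\,u_1\wedge u_2\wedge u_3 \plus v\wedge u_4\wedge u_5,
   \qquad v \ = \ C_{145}u_1+C_{245}u_2+C_{345}u_3 \ \in\ \RR,
\]
where $\RR=\Span\{u_1,u_2,u_3\}$. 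If $v=0$ the stated form already holds with $w=u_1$, $A_1=C_{123}$, $A_2=0$. Otherwise set $w=v/\|v\|$ and extend $w$ to an orthonormal basis $\{w,p,q\}$ of $\RR$; since $\wedge^3\RR$ is one-dimensional, $u_1\wedge u_2\wedge u_3$ and $w\wedge p\wedge q$ are unit vectors in it and hence differ by a phase, which I would absorb into $p$ so that $u_1\wedge u_2\wedge u_3=w\wedge p\wedge q$. Factoring $w$ out,
\[
   \psi \equals w\wedge\bigl(C_{123}\,p\wedge q + \|v\|\,u_4\wedge u_5\bigr),
\]
and $\{w,p,q,u_4,u_5\}$ is an orthonormal basis of $\mathbb{C}^5$, so relabeling $(p,q,u_4,u_5)$ as $(u_1,v_1,u_2,v_2)$ and setting $A_1=C_{123}$, $A_2=\|v\|$ yields (\ref{canonical for for three-in-five}).

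I do not expect a real obstacle here; the only points needing a little care are the combinatorial observation that nothing outside $\{1,2,3\}$ and the three doubles can occur (in particular there are no ``triple excitations'' in $\wedge^3\mathbb{C}^5$), and the bookkeeping of the phase when replacing $u_1\wedge u_2\wedge u_3$ by $w\wedge p\wedge q$. Alternatively the lemma follows from Lemma~\ref{two-in-five} via particle--hole duality $\wedge^3\mathbb{C}^5\cong\wedge^2\mathbb{C}^5$, but the direct argument above keeps the proof self-contained.
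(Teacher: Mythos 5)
Your proof is correct and follows essentially the same route as the paper's: take a best-overlap Slater determinant, invoke Lemma~\ref{Brueckner lemma} to kill the six single excitations, factor the common $u_4\wedge u_5$ out of the three surviving doubles, and then rotate the reference space so that both terms share the orbital $w$. The extra details you supply (existence of the maximizer by compactness, the degenerate case $v=0$, and the phase bookkeeping) are all handled correctly.
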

\begin{proof}
Suppose that $\chi_* = w_1\wedge w_2 \wedge w_3$ is a best-overlap Slater determinant for $\psi$ and let $u_2$ and $v_2$ be two other orbitals such that $\{w_1, w_2 , w_3,u_2,v_2\}$ is an orthonormal basis of $ \mathbb{C}^5$.   Consider the CI expansion of $\psi$ with respect to this basis.  By Lemma~\ref{Brueckner lemma}, the six configurations like  $w_2 \wedge w_3 \wedge u_2$ that are ``singly-excited'' from $\chi_*$ do not appear in the CI expansion, that is, the coefficients of these configurations are $0$.  The CI expansion involves only the configuration $w_1\wedge w_2 \wedge w_3$ and its three double excitations.  These double excitations have the common factor $u_2 \wedge  v_2$, which may be factored out.  Thus, there exists an orbital $w \in \Span\{w_1, w_2 , w_3\}$ and a coefficient $A_2$ such that 
\[
      \psi \equals \langle \chi_* , \psi \rangle \chi_* \plus  A_2 \  w  \wedge u_2 \wedge  v_2 \ .
\]
Let $u_1$ and $v_1$ be orbitals such that $\{ w, u_1, v_1 \}$ is an orthonormal basis of $\Span\{w_1, w_2 , w_3\}$.  
Then there is a complex number $e^{i\theta}$ such that 
$ \chi_* \equals e^{i\theta} \   w \wedge u_1 \wedge  v_1 $,
and $\psi$ is as written in (\ref{canonical for for three-in-five}) with $A_1 = \langle \chi_* , \psi \rangle e^{i\theta}$.  
\end{proof}

The next lemma, which we require in order to prove Lemma~\ref{decomposable in span}, follows from Lemma~\ref{canonical geminal}.  

\begin{Lemma}\label{decomposable geminals}
Let $\HH$ denote a Hilbert space of dimension at least $2$. 
A bivector $\gamma \in \wedge^2 \HH$ is decomposable if and only if $\gamma \wedge  \gamma = 0$. 
\end{Lemma}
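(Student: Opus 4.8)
The plan is to handle the two implications separately, with the substantive direction resting on the canonical geminal decomposition of Lemma~\ref{canonical geminal}.

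The forward implication is immediate: if $\gamma$ is decomposable, say $\gamma = u \wedge v$, then $\gamma \wedge \gamma = u \wedge v \wedge u \wedge v = 0$, since any wedge product with a repeated factor vanishes.

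For the converse, assume $\gamma \wedge \gamma = 0$. First I would apply Lemma~\ref{canonical geminal} to obtain an orthonormal system $\{u_1,v_1,u_2,v_2,\ldots\}$ in $\HH$ and nonnegative coefficients $A_1 \ge A_2 \ge \cdots$ with $\gamma = \sum_i A_i\, u_i \wedge v_i$. Expanding bilinearly gives
\[
   \gamma \wedge \gamma \equals \sum_{i,j} A_i A_j\, u_i \wedge v_i \wedge u_j \wedge v_j \equals 2\sum_{i<j} A_i A_j\, u_i \wedge v_i \wedge u_j \wedge v_j,
\]
where the diagonal terms drop out because $u_i \wedge v_i \wedge u_i \wedge v_i = 0$, and the factor $2$ appears because moving one block of two factors past another introduces the sign $(-1)^4 = 1$. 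The key observation is that the four-fold products remaining in this sum are, up to sign, pairwise distinct configurations built from the orthonormal system $\{u_1,v_1,u_2,v_2,\ldots\}$: for $i<j$ the multivector $u_i \wedge v_i \wedge u_j \wedge v_j$ is $\pm$ a basis element of $\wedge^4 \HH$ indexed by the four-element set $\{u_i,v_i,u_j,v_j\}$, and distinct index pairs $(i,j)$ yield distinct such sets. Hence these multivectors are orthonormal, so $\gamma \wedge \gamma = 0$ forces $A_i A_j = 0$ for every pair $i<j$. Therefore at most one coefficient $A_i$ is nonzero: if all $A_i = 0$ then $\gamma = 0$, which is decomposable, and otherwise only $A_1 \neq 0$, so $\gamma = A_1\, u_1 \wedge v_1 = (A_1 u_1) \wedge v_1$ is decomposable.

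I do not expect a serious obstacle here; the only point needing a little care is the bookkeeping when the orthonormal system is infinite, but the square-summability furnished by Lemma~\ref{canonical geminal} together with the orthonormality of the four-fold configurations makes the term-by-term vanishing argument rigorous in that case as well.
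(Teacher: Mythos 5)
Your proof is correct and follows exactly the route the paper indicates: the paper gives no written proof but states that the lemma follows from Lemma~\ref{canonical geminal}, and your argument — expanding $\gamma\wedge\gamma$ in the canonical form $\sum_i A_i\,u_i\wedge v_i$ and using orthogonality of the resulting $4$-vectors to force $A_iA_j=0$ for $i<j$ — is precisely the intended derivation. The only cosmetic point is that Lemma~\ref{canonical geminal} is stated for normalized bivectors, so one should dispose of $\gamma=0$ first and rescale, which your final case analysis already implicitly covers.
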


\begin{Lemma}\label{decomposable in span}
If $\gamma_1, \gamma_2 \in \wedge^2 \mathbb{C}^4$ are linearly independent, then there is a nonzero decomposable bivector in 
 $\Span\{\gamma_1, \gamma_2\}$. 
\end{Lemma}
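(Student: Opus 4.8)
The plan is to reduce the problem to solving a single quadratic equation over $\mathbb{C}$, using Lemma~\ref{decomposable geminals} as the decomposability criterion. The key observation is that $\wedge^4\mathbb{C}^4$ is one-dimensional, so the bilinear map $(\alpha,\beta)\mapsto\alpha\wedge\beta$ from $\wedge^2\mathbb{C}^4\times\wedge^2\mathbb{C}^4$ into $\wedge^4\mathbb{C}^4$ may be regarded, after choosing a basis vector for $\wedge^4\mathbb{C}^4$, as a scalar-valued symmetric bilinear form $B$ on $\wedge^2\mathbb{C}^4$ --- symmetric because bivectors commute under $\wedge$. By Lemma~\ref{decomposable geminals}, a bivector $\gamma$ is decomposable precisely when $B(\gamma,\gamma)=0$.

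First I would restrict attention to the two-dimensional subspace $\Span\{\gamma_1,\gamma_2\}$ and form the homogeneous quadratic
\[
   q(s,t) \equals B(s\gamma_1 + t\gamma_2,\ s\gamma_1 + t\gamma_2)
           \equals s^2\, B(\gamma_1,\gamma_1) \plus 2st\, B(\gamma_1,\gamma_2) \plus t^2\, B(\gamma_2,\gamma_2)
\]
in the complex variables $s$ and $t$. The next step is to observe that $q$ always has a nontrivial zero. If $q$ is the zero polynomial, any $(s,t)$ will do. Otherwise, if $B(\gamma_2,\gamma_2)=0$ then $(s,t)=(0,1)$ is a zero; and if $B(\gamma_2,\gamma_2)\neq 0$ then $q(1,t)$ is an honest degree-two polynomial in $t$, which has a root $t_0\in\mathbb{C}$ by the fundamental theorem of algebra, so $(1,t_0)$ is a zero. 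In every case there is a pair $(s_0,t_0)\neq(0,0)$ with $q(s_0,t_0)=0$.

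Finally I would set $\gamma = s_0\gamma_1 + t_0\gamma_2$. Because $\gamma_1$ and $\gamma_2$ are linearly independent, $\gamma\neq 0$; because $\gamma\wedge\gamma$ corresponds under the identification above to $B(\gamma,\gamma)=q(s_0,t_0)=0$, Lemma~\ref{decomposable geminals} shows that $\gamma$ is decomposable. This $\gamma$ is the desired nonzero decomposable bivector in $\Span\{\gamma_1,\gamma_2\}$. There is no serious obstacle in this argument; the only point worth underlining is that the linear independence hypothesis is used exactly once, and precisely where it is needed --- to guarantee that the bivector produced by solving the quadratic is nonzero.
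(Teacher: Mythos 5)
Your proof is correct and follows essentially the same route as the paper's: both reduce decomposability to the vanishing of the wedge-square via Lemma~\ref{decomposable geminals} and then solve the resulting quadratic in the coefficients of $s\gamma_1+t\gamma_2$ using the fundamental theorem of algebra. The only difference is organizational --- the paper first disposes of the case where $\gamma_1$ is decomposable so that the leading coefficient of its quadratic is nonzero, whereas you work with the homogeneous form $q(s,t)$ and handle the degenerate cases directly; both are fine.
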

\begin{proof}
The case that $\gamma_1$ is decomposable is trivial, so let us assume that $\gamma_1$ is not decomposable. 
Let 
$\vec{e} \equals e_1 \wedge  e_2 \wedge  e_3 \wedge  e_4$ where $\{e_i : 1\leq i \leq 4 \}$ is the standard basis of $\mathbb{C}_4$, and let 
$A,B,C \in \mathbb{C} $ be such that
$\gamma_1\wedge  \gamma_1 = A  \vec{e} $, $\gamma_1\wedge  \gamma_2 = B  \vec{e}$, and $\gamma_2\wedge  \gamma_2 = C  \vec{e}$.
By Lemma~\ref{decomposable geminals}, $\gamma_1 \wedge  \gamma_1 \neq 0$; therefore, $A \neq 0$. 

If $z$ is one of the roots of $Ax^2+2B x+C$, then 
\begin{eqnarray*} 
    (z \gamma_1 + \gamma_2) \wedge  (z \gamma_1 + \gamma_2) = (Az^2+2Bz+C)\vec{e} =0\ ,
\end{eqnarray*}
and $z \gamma_1 + \gamma_2$ is decomposable.  Since $\gamma_1$ and $\gamma_2$ are linearly independent, $z \gamma_1 + \gamma_2$ is nonzero. 
\end{proof}

\subsection{The SVD Lemma}


Let $\RR$ be an $n$-dimensional subspace of a $1$-particle Hilbert space $\HH$ and let $\RR^\perp$ denote its orthogonal complement in $\HH$.  
If $\{f_1,\ldots,f_n\}$ is an orthonormal basis of $\RR$, then the Slater determinant  $f_1 \wedge \cdots \wedge f_n$ is a unit vector whose span is the $1$-dimensional subspace $\wedge^n \RR$ of the $n$-fermion Hilbert space $\wedge^n \HH$.  Slater determinants composed of $n-1$ orbitals from $\{f_1,\ldots,f_n\}$ and $1$ orbital belonging to  $\RR^\perp$ are said to be ``singly-excited'' from the ``reference'' configuration $f_1 \wedge \cdots \wedge f_n$.  
The span of the reference configuration and all of its single-excitations is the subspace 
\begin{equation}
\label{general CIS subspace} 
       \wedge^n \RR \oplus  \big( ( \wedge^{n-1}   \RR ) \wedge   \RR^\perp \big)
\end{equation} 
of $\wedge^n \HH$.  
Wave functions that belong to a subspace like (\ref{general CIS subspace}) are used as variational wave functions in the ``CI singles'' or ``CIS'' method.  We therefore call such wave functions ``CIS'' wave functions.

Any $n$-fermion CIS wave function can be written as a superposition of the reference configuration and at most $n$ single-excitations thereof \cite{Mayer}.    
We prove this only for finite-dimensional $\HH \cong \mathbb{C}^d$ just so that we may apply the well-known singular value decomposition, or SVD, but the fact is true even for infinite-dimensional Hilbert spaces.

\begin{Lemma}[SVD Lemma]
\label{Mayer's lemma}
Let $\RR$ denote a subspace of $\mathbb{C}^d$ with $\dim\RR = n$ and let $\RR^\perp$ denote its orthogonal complement in $\mathbb{C}^d$.  Let $m=\min\{n,d-n\}$ and suppose that $m \ge 1$.   
Suppose a wave function $\psi$ lies in the subspace (\ref{general CIS subspace}) of $\wedge^n \mathbb{C}^d$.  
Then there exists an orthonormal basis $\{f_1,\ldots,f_n\}$ of $\RR$, orthonormal vectors $g_1,\ldots,g_m \in \RR^\perp$, and coefficients 
$A,B_1,\ldots,B_m$, such that 
\begin{equation}
\label{reduced by SVD - Mayer's lemma}
     \psi  \equals A\ f_1 \wedge \cdots \wedge f_n  \plus \sum_{i=1}^m  B_i \ f_1 \wedge \cdots \wedge f_{i-1}\wedge g_i \wedge f_{i+1} \wedge \cdots \wedge f_n  
\end{equation}
\end{Lemma}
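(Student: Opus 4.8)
The plan is to expand $\psi$ in an arbitrary orthonormal basis adapted to the orthogonal splitting $\RR \oplus \RR^\perp$, to collect the coefficients of the single excitations into a rectangular matrix, to apply the singular value decomposition (SVD) to that matrix, and finally to reinterpret the left and right singular vectors as a new orthonormal basis of $\RR$ and a new orthonormal system in $\RR^\perp$, respectively. The degenerate case $n=1$ is trivial and I would dispose of it first: then $\wedge^{n-1}\RR = \mathbb{C}$ and the claim merely says that $\psi$ splits into its components in $\RR$ and $\RR^\perp$; so assume $n\ge 2$.

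First I would fix orthonormal bases $\{h_1,\dots,h_n\}$ of $\RR$ and $\{k_1,\dots,k_{d-n}\}$ of $\RR^\perp$. The reference configuration $h_1\wedge\cdots\wedge h_n$ together with its single excitations $S_{ia}$ (obtained by replacing $h_i$ with $k_a$) form an orthonormal basis of the subspace (\ref{general CIS subspace}), so there are a scalar $c$ and an $n\times(d-n)$ matrix $M=(M_{ia})$ with $\psi = c\,h_1\wedge\cdots\wedge h_n + \sum_{i,a}M_{ia}\,S_{ia}$. It is convenient to write $S_{ia}=k_a\wedge\tilde\Phi_i$, where $\tilde\Phi_i=(-1)^{i+1}\,h_1\wedge\cdots\wedge\widehat{h_i}\wedge\cdots\wedge h_n$ are the ``signed wedge minors''; these form an orthonormal basis of $\wedge^{n-1}\RR$ and satisfy $h_j\wedge\tilde\Phi_i=\delta_{ij}\,h_1\wedge\cdots\wedge h_n$. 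Writing the SVD as $M=U\Sigma V^*$ with $U\in \mathrm U(n)$, $V\in \mathrm U(d-n)$, and the singular values $\sigma_1,\dots,\sigma_m\ge 0$ on the diagonal of $\Sigma$, so that $M_{ia}=\sum_{p=1}^m U_{ip}\sigma_p\overline{V_{ap}}$, one obtains $\psi = c\,h_1\wedge\cdots\wedge h_n + \sum_{p=1}^m \sigma_p\,g_p\wedge\xi_p$, where $g_p=\sum_a\overline{V_{ap}}k_a$ is an orthonormal system in $\RR^\perp$ and $\xi_p=\sum_i U_{ip}\tilde\Phi_i$ is an orthonormal system in $\wedge^{n-1}\RR$ (orthonormality of both is immediate from unitarity of $U$ and $V$).

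The crux is to exhibit an orthonormal basis $\{f_1,\dots,f_n\}$ of $\RR$ whose signed wedge minors are exactly the $\xi_p$. For this I would establish the transformation law for signed wedge minors under a change of orbital basis: if $f_j=\sum_i T_{ij}h_i$ with $T\in \mathrm U(n)$, then the signed minors $\tilde\Phi_j^{(f)}$ built from $\{f_j\}$ satisfy $\tilde\Phi_j^{(f)}=\det(T)\sum_i\overline{T_{ij}}\,\tilde\Phi_i$. This is checked at once from the characterizing relations $f_k\wedge\tilde\Phi_j^{(f)}=\delta_{jk}\,f_1\wedge\cdots\wedge f_n$ together with $f_1\wedge\cdots\wedge f_n=\det(T)\,h_1\wedge\cdots\wedge h_n$. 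To make $\tilde\Phi_j^{(f)}=\xi_j$ for all $j$ I therefore need $\det(T)\overline{T}=U$; setting $T=\nu\overline{U}$ reduces this to $\nu^{\,n-1}=\det U$, which is solvable because $|\det U|=1$ (here is where $n\ge 2$ is used). With this $T$ and $f_j=\sum_i T_{ij}h_i$ we get $\xi_p=\tilde\Phi_p^{(f)}$, hence $g_p\wedge\xi_p=f_1\wedge\cdots\wedge f_{p-1}\wedge g_p\wedge f_{p+1}\wedge\cdots\wedge f_n$, while $h_1\wedge\cdots\wedge h_n=\overline\nu\,f_1\wedge\cdots\wedge f_n$; substituting into the displayed expansion gives (\ref{reduced by SVD - Mayer's lemma}) with $A=c\overline\nu$ and $B_p=\sigma_p$ (so in fact the $B_p$ can be taken nonnegative).

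I expect the only non-routine step to be the transformation law and its use: passing from ``SVD of the coefficient matrix'' to ``change of orbital bases'' is not quite immediate, because a unitary $T$ acting on $\RR$ acts on $\wedge^{n-1}\RR$ not by $T$ but by the twisted unitary $\det(T)\overline{T}$, and one has to notice that this twisted action is still transitive on orthonormal bases of $\wedge^{n-1}\RR$ — which is exactly where the $(n-1)$-st root of $\det U$, and the hypothesis $n\ge 2$, come in. The remaining ingredients — the sign bookkeeping $S_{ia}=k_a\wedge\tilde\Phi_i$, the orthonormality of the $g_p$ and $\xi_p$, and the $n=1$ reduction — are elementary.
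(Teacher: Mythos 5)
Your proof is correct and follows essentially the same route as the paper's: expand $\psi$ in an orthonormal basis adapted to $\RR \oplus \RR^\perp$, apply the SVD to the $n\times(d-n)$ matrix of single-excitation coefficients, and convert the left and right singular unitaries into orbital rotations by using the fact that the signed $(n-1)$-fold wedge minors transform by the adjugate (equivalently, by $\det(T)\,\overline{T}$). The only difference is cosmetic bookkeeping of the determinant phase: the paper normalizes $\det(\mathbf{U})=1$ in the SVD so that the adjugate is simply $\mathbf{U}^*$, whereas you keep $U$ as given and solve $\nu^{n-1}=\det U$, which is why you need the separate (trivial) treatment of $n=1$.
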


\begin{proof}

Let $\{f'_1,\ldots,f'_n\}$ and and $\{g'_1,g'_2,\ldots \}$ be orthonormal bases of $\RR$ and $\RR^\perp$, respectively, and let   
 $$\alpha'_i = (-1)^{n-i} f'_1 \wedge \cdots \wedge f'_{i-1} \wedge f'_{i+1} \wedge \cdots \wedge f'_n\ .$$
There exist coefficients $C_{ik}$ such that 
\begin{equation}
\label{before SVD transformation}
 \psi  \  = \  A\ f'_1 \wedge \cdots \wedge f'_n \plus \sum_{i=1}^n  \sum_{k=1}^{d-n}  C_{ik} \   \alpha'_i  \wedge g'_k  .
\end{equation}

 Let ${\bf U}$ and ${\bf V}$ be unitary matrices for a singular value decomposition of the $n \times (d-n)$ matrix ${\bf  C} = (C_{ik})$.  
 That is, let ${\bf U}= (U_{ij})$ and ${\bf V}= (V_{ij})$ be unitary matrices such that 
 ${\bf U}^* {\bf C} {\bf V} = {\bf \Lambda} $, where ${\bf \Lambda}$ is the $n \times (d-n)$ ``diagonal'' matrix of singular values of ${\bf C}$.  
 Denote the diagonal entries $\Lambda_{kk }$ of $ {\bf \Lambda}$ by $B_k$.  
 Note that the rank of  $ {\bf \Lambda}$ is the same as the rank of ${\bf  C}$, and therefore at most $m$ of the $B_k$ are nonzero.
The matrices ${\bf U}$ and ${\bf V}$ can be chosen to make $\det({\bf U}) = 1$, which we assume for convenience.

 Define the orbitals $f_1,\ldots,f_n$ by the unitary transformation  
\[
  f'_i = \sum_{j=1}^n  U_{ij} f_j\ .
\]
Then $ f'_1 \wedge \cdots \wedge f'_n =  f_1 \wedge \cdots \wedge f_n$ because $\det({\bf U}) = 1$.  

 Define $ \alpha_j = (-1)^{n-i} f_1 \wedge \cdots \wedge f_{j-1} \wedge f_{j+1} \wedge \cdots \wedge f_n $.  
 Then $ \alpha'_i$ can be written as a linear combination of the $\alpha_j$. 
 The coefficient of $\alpha_j$ in this linear combination is the cofactor of the $(i,j)$ entry of the matrix ${\bf U}$, which equals the 
$(j,i)$ entry of the adjugate of ${\bf U}$.  The adjugate of ${\bf U}$ equals ${\bf U}^*$ since ${\bf U}$ is unitary with $\det({\bf U}) = 1$.
Thus 
 \begin{equation}
 \label{rotate the alphas - Mayer's lemma}
  \alpha'_i = \sum_{j=1}^n  {\bf U}^*_{ji }  \alpha_j\ .
\end{equation}

Finally, define $g_1,\ldots,g_{d-n}$ by the unitary transformation  
 \begin{equation}
 \label{rotate the g's - Mayer's lemma}
 \quad g'_k = \sum_{\ell =1}^{d-n} V_{k \ell } g_\ell \ .
\end{equation}
Substituting (\ref{rotate the g's - Mayer's lemma}) and (\ref{rotate the alphas - Mayer's lemma}) into 
(\ref{before SVD transformation}) yields formula (\ref{reduced by SVD - Mayer's lemma}).
\end{proof}

It is convenient to have a statement of the SVD Lemma specifically for the Borland-Dennis setting:

\begin{Lemma}
\label{SVD lemma}
Let $\RR$ and $\RR^\perp$ denote orthogonal $3$-dimensional subspaces of $\Csix$.
Suppose that a wave function $\psi$ lies in the subspace 
\begin{equation}
\label{singles-free subspace}
 \wedge^3 \RR \oplus  \big( \RR \wedge   \RR  \wedge   \RR^\perp) \oplus \wedge^3 \RR^\perp 
\end{equation}
of $\Borland$.  
Then there exist orthonormal bases $\{f_1,f_2,f_3\}$ and $\{g_1,g_2,g_3\}$ of $\RR$ and $\RR^\perp$, respectively, 
and coefficients $A,B_1,B_2,B_3,D$ such that 
\begin{equation}
\label{5-term expansion}
 \psi  \equals   A\   f_1  \wedge  f_2 \wedge    f_3 \plus B_1 \ g_1  \wedge  f_2 \wedge    f_3  \plus  B_2\  f_1  \wedge   g_2  \wedge   f_3  \plus B_3 \  f_1 \wedge   f_2  \wedge   g_3 \plus D\   g_1  \wedge  g_2 \wedge    g_3
\ .
\end{equation}
\end{Lemma}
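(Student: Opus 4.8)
The plan is to split $\psi$ according to the three mutually orthogonal summands of the subspace (\ref{singles-free subspace}) and to handle the middle summand with the general SVD Lemma (Lemma~\ref{Mayer's lemma}).

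First I would record that $\wedge^3\RR$, $\RR\wedge\RR\wedge\RR^\perp$, and $\wedge^3\RR^\perp$ are mutually orthogonal subspaces of $\Borland$: the inner product of two decomposable trivectors built from different numbers of orbitals in $\RR$ is the determinant of a Gram matrix $(\langle a_i,b_j\rangle)$ having an all-zero row or column (because $\RR\perp\RR^\perp$), hence $0$; mutual orthogonality then follows by bilinearity. Consequently $\psi$ decomposes uniquely as $\psi = \psi_0 \plus \psi_1 \plus \psi_3$ with $\psi_0\in\wedge^3\RR$, $\psi_1\in\RR\wedge\RR\wedge\RR^\perp$, and $\psi_3\in\wedge^3\RR^\perp$, and the partial sum $\phi := \psi_0 \plus \psi_1$ lies in the CIS subspace $\wedge^3\RR\oplus\big((\wedge^2\RR)\wedge\RR^\perp\big)$, which is of the form (\ref{general CIS subspace}) with $n=3$ and $d=6$.

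Next I would apply Lemma~\ref{Mayer's lemma} to $\phi$. If $\phi = 0$ the conclusion is immediate with $A=B_1=B_2=B_3=0$; otherwise normalize $\phi$, apply the lemma, and rescale. Since here $n=3$ and $d-n=3$, we have $m=\min\{n,d-n\}=3$, so the lemma produces an orthonormal basis $\{f_1,f_2,f_3\}$ of $\RR$, \emph{three} orthonormal vectors $g_1,g_2,g_3\in\RR^\perp$, and coefficients $A,B_1,B_2,B_3$ with
\[
  \phi \equals A\, f_1\wedge f_2\wedge f_3 \plus B_1\, g_1\wedge f_2\wedge f_3 \plus B_2\, f_1\wedge g_2\wedge f_3 \plus B_3\, f_1\wedge f_2\wedge g_3 ,
\]
which is exactly (\ref{5-term expansion}) without the last term.

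Finally I would note that $g_1,g_2,g_3$ are three orthonormal vectors in the $3$-dimensional space $\RR^\perp$, hence an orthonormal basis of it, so $\wedge^3\RR^\perp=\Span\{g_1\wedge g_2\wedge g_3\}$ and therefore $\psi_3 = D\, g_1\wedge g_2\wedge g_3$ for some $D\in\mathbb{C}$. Adding $\psi = \phi \plus \psi_3$ yields (\ref{5-term expansion}). There is no real obstacle in this argument; the one point that genuinely needs attention is confirming that Lemma~\ref{Mayer's lemma} returns a \emph{full} orthonormal basis of $\RR^\perp$ — which happens precisely because $\dim\RR=\dim\RR^\perp=3$ forces $m=3$ — so that the leftover $\wedge^3\RR^\perp$ component is automatically a multiple of $g_1\wedge g_2\wedge g_3$ and no further terms are needed.
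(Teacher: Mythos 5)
Your proof is correct and takes essentially the same route as the paper, whose own proof simply observes that the lemma is a special case of Lemma~\ref{Mayer's lemma} because the transformations in that proof leave the one-dimensional subspace $\wedge^3 \RR^\perp$ invariant. Your packaging --- applying Lemma~\ref{Mayer's lemma} as a black box to the CIS component and noting that the returned $g_1,g_2,g_3$ already form a basis of $\RR^\perp$ since $m=3$ --- is a clean way of making the same observation.
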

\begin{proof}
This is a special case of Lemma~\ref{Mayer's lemma}.  One needs only to observe that the transformations in the proof of that lemma leave the $1$-dimensional subspace $\wedge^3 \RR^\perp$ invariant.
\end{proof}


\section{Algebraic invariants}
\label{Algebraic invariants}

Let U(6) denote the group of $6 \times 6$ unitary matrices.  
Each $U \in \mathrm{U(6)}$ induces a unitary operator $\widehat{U}$ on $\Borland$ defined by
\[
     \widehat{U} ( x_1 \wedge x_2 \wedge x_3)  \equals  U x_1 \wedge U x_2 \wedge U x_3
 \]
for all $ x_1, x_2 ,x_3 \in \Csix$.   Thus U(6) acts on the Borland-Dennis space.

A wave function in the Borland-Dennis space has a variety of possible CI expansions, one for each ordered orthonormal basis of $\Csix$.  
The possible CI expansions of a wave function $\psi \in \Borland$ are related to the orbit of $\psi$ under the action of U(6).   
Let $\sum C_{ijk} \ e_i \wedge e_j \wedge e_k $ be the CI expansion of a wave function $\phi \in \Borland$ with respect to the standard basis of $\Csix$.  Then $\phi$ is in the U(6)-orbit of $\psi$ if and only if there exists an ordered orthonormal basis 
$(f_1,\ldots,f_6)$ of $\Csix$ such that $\psi = \sum C_{ijk} \ f_i \wedge f_j \wedge f_k $.

Chen, \DJ okovi\'c, Grassl, and Zeng \cite{ChenDokovicGrasslZeng} 
 have presented a complete set of algebraic invariants for the action of U(6) on the Borland-Dennis space. 
These invariants are polynomial functions of CI expansion coefficients that take the same value for for all possible choices of reference orbitals (the coefficients themselves change but certain polynomial functions of them are invariant).   The polynomial algebra of invariants is generated by six independent invariants $M_1,\ldots,M_6$ and one auxiliary invariant, $M_7$, related to the others by a syzygy.  

The algebraic invariant $M_1$ is numerically equal to the norm squared of the trivector.  For normalized  $\psi \in \Borland$ with natural occupation numbers 
$\lambda_1 \ge \cdots \ge \lambda_6$, the algebraic invariants  $M_2,M_4$ and $M_6$ are numerically equal to the elementary symmetric polynomials in $\lambda_1\lambda_6$, $\lambda_2\lambda_5$, and $\lambda_3\lambda_4$.  The invariant $M_3$ depends on the 2RDM of $\psi$.   
A useful combination of these basic invariants is $M_1M_2 - 2M_3$, 
for this is non-negative and equals $0$ at a wave function $\psi$ if and only if $\psi$ is not of full rank, i.e., has rank $5$ or less \cite{ChenDokovicGrasslZeng-footnote}.

The last independent invariant is related to Cayley's hyperdeterminant for $2\times 2 \times 2$ matrices.  
It was rediscovered by L\'evay and Vrana \cite{LevayVrana}, who motivated it as a generalization of the $3$-tangle \cite{CoffmanKunduWootters}, a measure of multipartite entanglement for $3$-qubit states, and used it to distinguish between the major equivalence classes for the action of GL(6) on the Borland-Dennis space.  
The invariant, denoted $M_5$ in Ref.~\cite{ChenDokovicGrasslZeng}, is non-negative, and it is positive at a wave function $\psi$ if and only if  $\psi$ belongs to the generic GL(6) equivalence class.  The invariant $M_5$ is the modulus-squared of the hyperdeterminant invariant denoted by $F$ in Ref.~\cite{ChenDokovicGrasslZeng} or $\mathcal{D}$ in Refs.~\cite{SarosiLevay - classification,SarosiLevay - CKW}.  
When we refer to the hyperdeterminant of a wave function $\psi$, we mean $F(\psi)$ or $\mathcal{D}(\psi)$.

The U(6) invariants for the Borland-Dennis setting are very closely related \cite[Sec. 5]{ChenDokovicGrasslZeng} to the local unitary invariants for the symmetrized $3$-qubit setting \cite{Sudbery,AcinAndrianovJaneTarrach 2001}.   

In this paper we are interested in CI expansions that can be obtained from one another by U(6) changes-of-basis, and accordingly the U(6) equivalence classes are of primary interest.  
To complete the picture, we should touch upon the effect of allowing general changes of $1$-particle basis, and mention the equivalence classes of the action on 
$\Borland$ of GL(6,$\mathbb{C}$), the group of all invertible $6 \times 6$ complex matrices.   
By means of a general change of $1$-particle basis, any nonzero trivector in $\Borland$ can be written \cite{Gurevich} in one of the following four canonical forms:
\begin{trivlist}
\item{(i)}\quad 
$ v_1 \wedge  v_2 \wedge  v_3$ 
\item{(ii)}\quad
$ v_1 \wedge  (v_2 \wedge  v_3 + v_4 \wedge  v_5 )$ 
\item{(iii)}\quad
 $ v_1 \wedge  v_2 \wedge  v_3 + v_3 \wedge  v_4 \wedge  v_5 + v_5 \wedge  v_6 \wedge  v_1$ 
\item{(iv)}\quad
$ v_1 \wedge  v_2 \wedge  v_3 + v_4 \wedge  v_5 \wedge  v_6$
\end{trivlist}
where the vectors $v_1,\ldots,v_6$ are linearly independent but not necessarily orthonormal.  
These four GL(6,$\mathbb{C}$) equivalence classes portray -- it is professed -- the essentially different kinds of ``entanglement'' that a pure state of $3$ fermions in $6$ orbitals could have  \cite{LevayVrana}, and are now called, respectively, the ``separable,'' ``biseparable,'' ``W,'' and ``GHZ'' classes \cite{SarosiLevay - classification}.  

Generic trivectors in $\Borland$ have the canonical form (iv). 
Thus, every wave function of rank $6$ or less can be written as a linear combination of at most three Slater determinants.  
However, these are not CI expansions, because the Slater determinants involved need not be built from a system of orthonormal orbitals.  
Generic wave functions in $\Borland$ do not even have $4$-term CI expansions, much less $3$-term CI expansions \cite[Prop.~10]{ChenChenDokovicZeng}.

\section{Small classes of Borland-Dennis wave functions}
\label{Small classes of Borland-Dennis wave functions}

\subsection{Types of CI expansions with 3 or fewer terms}

Using Lemmas~\ref{hole}, \ref{two-in-four}, and \ref{three-in-five}, 
it is not difficult to show that a wave function that has a CI expansion with three or fewer configurations is U(6)-equivalent to one of the following five types.  In labeling these types we follow Refs.~\cite{AcinAndrianovJaneTarrach 2000} and \cite{AcinAndrianovJaneTarrach 2001}, where the corresponding $3$-qubit states are so classified:
\begin{trivlist}
\item{\bf Type 1}
\[
\begin{matrix}
\mathrm{X} & \mathrm{X} & \mathrm{X} & \mathrm{O} & \mathrm{O} & \mathrm{O} & \quad A   \\
\end{matrix}
\]
\item{\bf Type 2a}
\[
\begin{matrix}
\mathrm{X} & \mathrm{X} & \mathrm{X} & \mathrm{O} & \mathrm{O} & \mathrm{O} & \quad A_1  \\
\mathrm{X} & \mathrm{O} & \mathrm{O} & \mathrm{X} & \mathrm{X} & \mathrm{O} & \quad A_2   \\
\end{matrix}
\]
\item{\bf Type 2b}
\[
\begin{matrix}
\mathrm{X} & \mathrm{X} & \mathrm{X} & \mathrm{O} & \mathrm{O} & \mathrm{O} & \quad A   \\
\mathrm{O} & \mathrm{O} & \mathrm{O} & \mathrm{X} & \mathrm{X} & \mathrm{X} & \quad C   \\
\end{matrix}
\]
\item{\bf Type 3a}
\[
\begin{matrix}
\mathrm{O} & \mathrm{X} & \mathrm{X} & \mathrm{X} & \mathrm{O} & \mathrm{O} & \quad B_1   \\
\mathrm{X} & \mathrm{O} & \mathrm{X} & \mathrm{O} & \mathrm{X} & \mathrm{O} & \quad B_2   \\
\mathrm{X} & \mathrm{X} & \mathrm{O} & \mathrm{O} & \mathrm{O} & \mathrm{X} & \quad B_3  \\
\end{matrix}
\]
\item{\bf Type 3b}
\[
\begin{matrix}
\mathrm{X} & \mathrm{X} & \mathrm{X} & \mathrm{O} & \mathrm{O} & \mathrm{O} & \quad A   \\
\mathrm{O} & \mathrm{O} & \mathrm{X} & \mathrm{X} & \mathrm{X} & \mathrm{O} & \quad B  \\
\mathrm{O} & \mathrm{O} & \mathrm{O} & \mathrm{X} & \mathrm{X} & \mathrm{X} & \quad C   \\
\end{matrix}
\]
\end{trivlist}

\subsection{Corresponding classes of wave functions}

To each of the preceding five types of CI expansions there corresponds a class of wave functions, i.e., the set of wave functions that have CI expansions of that type with respect to some basis of reference orbitals.  We use special names for the classes of wave functions that correspond to the first four types:

\begin{trivlist}
 \item{Type 1 -- \ {\bf Slater determinant}\ :}

A wave function has a CI expansion of Type~1 if and only if it is a Slater determinant.  

 \item{Type 2a --  {\bf Low-rank}\ :}

Lemma~\ref{three-in-five} implies that a wave function has a CI expansion of Type~2a if and only if it has rank $5$ or less.
By definition, a trivector $\psi \in \Borland$ has rank $5$ or less if and only there exists a $5$-dimensional subspace $\MM$ of $\Csix$  such that $\psi \in \wedge^3 \MM$. 
In this paper we call wave functions of rank $5$ or less ``low-rank.''   

The rank of a ``low-rank'' wave function is either $5$ or $3$, because a trivector in $\Borland$ cannot have rank $4$ by Lemma~\ref{hole}.   
Lemmas~\ref{two-in-four} and \ref{three-in-five} imply that a trivector $\psi \in \Borland$ has rank  $5$ or less if and only if there exists an orbital $w$ and a geminal $\gamma$, strongly orthogonal to $w$, such that $\psi = w \wedge \gamma$.  
In other words, a trivector $\psi \in \Borland$ has rank $5$ or less if and only there exists a $1$-dimensional subspace $\WW$ of $\Csix$  such that $\psi \in \WW \wedge \big( \wedge^2 \WW^\perp \big)$.


We shall call wave functions that have CI expansions of Type~2b  ``ortho-GHZ'' because they are associated, via an embedding like (\ref{standard embedding}), to $3$-qubit states called ``generalized GHZ states'' in Ref.~\cite{AcinAndrianovJaneTarrach 2000}.

\item{Type 3a -- {\bf Ortho-W}\ :}

Wave functions that have CI expansions of Type~3a we shall call ``ortho-W.''   

In other words, a wave function $\psi$ is ortho-W if and only if there exists an orthonormal basis of reference orbitals $\{f_1,f_2,f_3,g_1,g_2,g_3\}$ of $\Csix$ and expansion coefficients $B_1,B_2,B_3$ such that 
\begin{equation}
\label{write out the ortho-W}
 \psi \equals B_1\   g_1  \wedge  f_2 \wedge f_3  +  B_2\  f_1  \wedge  g_2  \wedge  f_3 + B_3\   f_1  \wedge  f_2  \wedge  g_3 \ .
\end{equation}

Let $\RR = \Span\{f_1,f_2,f_3\}$ and $\RR^\perp = \Span\{g_1,g_2,g_3\}$.  The wave function (\ref{write out the ortho-W}) belongs to the subspace  $\RR \wedge \RR  \wedge   \RR^\perp$.  
Conversely, suppose that a wave function $\psi \in \Borland$ belongs to a subspace of the form  $\RR \wedge \RR  \wedge   \RR^\perp$, where 
$\RR$ and $\RR^\perp$ are orthogonal $3$-dimensional subspaces of $\Csix$.   Then Lemma~\ref{SVD lemma} implies that $\psi$ has a CI expansion (\ref{write out the ortho-W}).    

Thus we have an equivalent characterization of the class of ortho-W wave functions:  a wave function is ortho-W if and only if it belongs to a subspace of $\Borland$ of the form $\RR \wedge \RR  \wedge   \RR^\perp$ for some $\RR \subset \Csix$ with $\dim\RR = 3$.  

\item{Type 3b \ : }

Though we don't have a special term for wave functions with CI expansions of Type~3b, these do constitute a distinct class.  
A wave function $\psi$ that has a CI expansion of Type~3b  with $ABC \neq 0$ cannot have a CI expansion of one of the other types.  
It can't be a low-rank wave function because it has full rank.  It can't be ortho-GHZ because -- one can show -- it has at least three distinct natural occupation numbers, whereas an ortho-GHZ wave function has only two.  
Finally, it can't be ortho-W because its hyperdeterminant equals $A^2C^2 \neq 0$, while the hyperdeterminant of an ortho-W wave function equals $0$.

\end{trivlist}


\section{``CIS'' and ``CID'' wave functions}
\label{``CIS'' and ``CID'' wave functions}

For each three dimensional subspace $\RR$ of $\Csix$,  
the $20$-dimensional space $\Borland$ 
equals the direct sum 
\[
  \wedge^3 \RR \oplus \big(\RR \wedge  \RR  \wedge  \RR^\perp \big)  \oplus \big( \RR \wedge   \RR^\perp  \wedge   \RR^\perp \big)\oplus \wedge^3 \RR^\perp 
\]
of subspaces of dimension $1$, $9$, $9$, and $1$.  

 Suppose that $f_1,f_2,f_3$ and $g_1,g_2,g_3$ are orthonormal orbitals that span $\RR$ and $\RR^\perp$, respectively.  
 The ``reference configuration''  $f_1 \wedge f_2 \wedge f_3$ spans the $1$-dimensional subspace $\wedge^3\RR$ and $g_1 \wedge g_2 \wedge g_3$ spans $\wedge^3\RR^\perp$.   The $9$-dimensional subspace $\RR \wedge  \RR  \wedge  \RR^\perp$ is spanned by configurations like $f_1 \wedge f_2 \wedge g_2$ that are ``singly-excited'' from the ``reference'' configuration $f_1 \wedge f_2 \wedge f_3$.   Similarly, the space $\RR \wedge  \RR^\perp  \wedge  \RR^\perp$ is spanned by configurations like $f_1 \wedge g_1 \wedge g_3$ that are ``doubly-excited'' from the reference configuration.

Wave functions in the subspace 
\begin{equation}
\label{CIS subspace} 
       \wedge^3 \RR \oplus  \big( \RR \wedge   \RR  \wedge   \RR^\perp \big)
\end{equation} are linear combinations of the 
reference configuration $f_1 \wedge f_2 \wedge f_3$ and all singly-excited configurations, 
like a variational wave function for the ``CI singles'' or ``CIS'' method.

Similarly, a wave function in the subspace 
\begin{equation}
\label{CID subspace}
      \wedge^3 \RR \oplus  \big( \RR \wedge   \RR^\perp  \wedge   \RR^\perp \big)
\end{equation}
 is of a ``CI doubles'' or ``CID'' form, as it is a linear combination of the reference configuration and ``double excitations'' thereof.

\begin{Definition}
\label{definition of CIS and CID}
A wave function in $\psi \in \Borland$ is ``CIS''  if there exists a $3$-dimensional subspace $\RR$ of $\Csix$ such that $\psi$ lies in the subspace (\ref{CIS subspace}) of $\Borland$.   A wave function in $\psi \in \Borland$ is ``CID''  if there exists a $3$-dimensional subspace $\RR$ of $\Csix$ such that $\psi$ lies in the subspace (\ref{CID subspace}) of $\Borland$.  
A wave function that belongs to (\ref{CIS subspace}) or (\ref{CID subspace}) is said to be CIS or CID with ``reference space'' $\RR$.
\end{Definition}

In Corollary~\ref{hyperdet = 0 iff CIS} we will prove that a wave function is CIS if and only if its hyperdeterminant equals $0$.  
Using this fact, together with the canonical forms for CIS and CID wave functions described in the next section, one can prove that 
the intersection of the CIS and CID classes is the class of ortho-W wave functions. 

\subsection{Canonical forms}
\label{Canonical forms}

Lemma~\ref{SVD lemma} tells us the following:

Suppose that a wave function $\psi \in \Borland$ belongs to a subspace 
$\wedge^3 \RR \oplus  \big( \RR \wedge   \RR  \wedge   \RR^\perp) \oplus \wedge^3 \RR^\perp$, where $\RR$ is a $3$-dimensional subspace of $\Csix$.
Then there exist orthonormal bases $\{f_1,f_2,f_3\}$ and $\{g_1,g_2,g_3\}$ of $\RR$ and $\RR^\perp$, respectively, 
and coefficients $A,B_1,B_2,B_3,D$ such that 
$\psi$ has CI expansion 
\begin{equation}
\label{schematically 5-term}
\begin{matrix}
 f_1 & f_2 & f_3 & g_1 & g_2 & g_3 & \\
 \mathrm{X} & \mathrm{X} & \mathrm{X} & \mathrm{O} & \mathrm{O} & \mathrm{O} & \qquad A  \\
\mathrm{O} & \mathrm{X} & \mathrm{X} & \mathrm{X} & \mathrm{O} & \mathrm{O} & \qquad B_1  \\
\mathrm{X} & \mathrm{O} & \mathrm{X} & \mathrm{O} & \mathrm{X} & \mathrm{O} & \quad  -B_2  \\
\mathrm{X} & \mathrm{X} & \mathrm{O} & \mathrm{O} & \mathrm{O} & \mathrm{X} & \qquad B_3  \\
 \mathrm{O} & \mathrm{O} & \mathrm{O} & \mathrm{X} & \mathrm{X} & \mathrm{X} & \qquad D  \\
\end{matrix}
\end{equation}

\subsubsection{Canonical form of CIS wave functions}

A CIS wave function with reference space $\RR$ can be represented by 
the first four rows of the configuration diagram (\ref{schematically 5-term}).  The phases of the reference orbitals can be adjusted and their order permuted to make $A \ge 0$ and $B_1 \ge B_2 \ge B_3 \ge 0$.
Thus, a wave function $\psi$ is CIS if and only if there exist coefficients 
$A \ge 0, B_1 \ge B_2 \ge B_3 \ge 0$ and an orthonormal system of reference orbitals $\{ f_1 , f_2 , f_3 , g_1 , g_2 , g_3\}$ such that  $\psi$ has CI expansion 
\begin{equation}
\label{canonical form CIS}
\begin{matrix}
 f_1 & f_2 & f_3 & g_1 & g_2 & g_3 & \\
 \mathrm{X} & \mathrm{X} & \mathrm{X} & \mathrm{O} & \mathrm{O} & \mathrm{O} & \qquad A  \\
\mathrm{O} & \mathrm{X} & \mathrm{X} & \mathrm{X} & \mathrm{O} & \mathrm{O} & \qquad B_1  \\
\mathrm{X} & \mathrm{O} & \mathrm{X} & \mathrm{O} & \mathrm{X} & \mathrm{O} & \quad  -B_2  \\
\mathrm{X} & \mathrm{X} & \mathrm{O} & \mathrm{O} & \mathrm{O} & \mathrm{X} & \qquad B_3 .
\end{matrix}
\end{equation}   
We call (\ref{canonical form CIS}) a ``canonical'' form because the coefficients are unique, provided that $\psi$ has full rank.  
That is, if $\psi$ is a rank-$6$ wave function with CI expansion (\ref{canonical form CIS}), and if $\psi$ can also be written in the same form with respect to reference orbitals $f'_1,\ldots,g'_3$ and coefficients $A' \ge 0$ and $B'_1 \ge B'_2 \ge B'_3 > 0$, then $A' = A$ and $B'_i = B_i$ for $i=1,2,3$.  
This can be shown using formulas (45) - (47) in Ref.~\cite{ChenDokovicGrasslZeng} for the algebraic invariants $M_2, M_3$, and $M_4$. 

A low-rank wave function is a degenerate CIS wave function.  Most low-rank wave functions have infinitely many CI expansions of the form (\ref{canonical form CIS}).   Lemma~\ref{three-in-five} shows that the natural occupation numbers $\lambda_1 \ge \lambda_2 \ge \lambda_3 \ge 1/2$ of a rank-$5$ wave function satisfy $\lambda_1 = 1$ and $\lambda_2 = \lambda_3$.   Let $\lambda$ denote the common value of $\lambda_2$ and $\lambda_2$ and let $\mu = \sqrt{\lambda}$ and $\nu = \sqrt{1 - \lambda}$.  The same rank-$5$ wave function can be written in the form (\ref{canonical form CIS}) in one way for each value of $A$ ranging from $0$ up to $\sqrt{1-2\mu\nu}$.  When $A = 0$,  $B_1 = \mu$, $B_2 = \nu$, and $B_3 = 0$.  At the other extreme, where $A = \sqrt{1-2\mu\nu}$, $B_1 = B_2 = \mu\nu$ and $B_3 = 0$.

\subsubsection{Canonical form of CID wave functions}

A CID wave function with reference space $\RR^\perp$ belongs to the subspace $\big( \RR \wedge   \RR  \wedge   \RR^\perp) \oplus \wedge^3 \RR^\perp$ and therefore can be represented by 
the last four rows of the configuration diagram (\ref{schematically 5-term}).  
Renaming the reference orbitals and adjusting their phases gives CID wave functions a canonical form:   
a wave function $\psi$ is CID if and only if there exists there exists an orthonormal system of reference orbitals $\{ f_1 ,\ldots, f_6\}$ and coefficients 
$A_1 \ge A_2 \ge A_3 \ge A_4 \ge 0$ such that  $\psi$ has CI expansion 
 \begin{equation}
 \label{canonical form CID}
 \begin{matrix}
 f_1 & f_2 & f_3  & f_4 & f_5 & f_6 & \\
 \mathrm{X} & \mathrm{X} & \mathrm{X} & \mathrm{O} & \mathrm{O} & \mathrm{O} & \qquad A_1  \\
 \mathrm{X} & \mathrm{O} & \mathrm{O} & \mathrm{O} & \mathrm{X} & \mathrm{X} & \qquad A_2    \\
 \mathrm{O} & \mathrm{X} & \mathrm{O} & \mathrm{X} & \mathrm{O} & \mathrm{X} & \qquad A_3   \\
 \mathrm{O} & \mathrm{O} & \mathrm{X} & \mathrm{X} & \mathrm{X} & \mathrm{O} & \qquad A_4  .  
 \end{matrix}
\end{equation}
The orbitals $f_1 ,\ldots, f_6$ in (\ref{canonical form CID}) are obviously natural orbitals of $\psi$.  
Therefore, for generic CID wave functions with distinct natural occupation numbers, the canonical expansion (\ref{canonical form CID}) is essentially unique, that is, the orbitals $f_i$ are unique up to multiplication by phase factors and the coefficients $A_i$ are unique.     
In any case, at least the coefficients are unique \cite{ChenDokovicGrasslZeng}.  

\section{Max-overlap approximations}
\label{Max-overlap approximations}

Think of $\psi$ as a target for approximation by wave functions $\chi$, with the aim of maximizing the overlap-squared 
$ \big|\langle \chi,\psi \rangle\big|^2$ over all $\chi$ of some prescribed type.  
For example, $\chi$ may be restricted to be a Slater determinant wave function, or it may be allowed to range over a more general class, such as all low-rank wave functions, or all CIS wave functions.   A wave function [in class X] that maximizes the overlap with a target wave function will be called a ``max-overlap [class X] approximation.''  
To avoid the ambiguity associated to multiplication by a phase factor, we will always choose the phase of a max-overlap approximation such that its overlap with the target wave function is positive.   

Note that max-overlap approximations need not be unique.  
For example, both 
$f_1  \wedge  f_2 \wedge f_3$ and $ f_4  \wedge  f_5  \wedge  f_6$ are max-overlap Slater determinant approximations of the wave function 
$$ \tfrac{1}{\sqrt{2}}  f_1  \wedge  f_2 \wedge f_3  +  \tfrac{1}{\sqrt{2}}  f_4  \wedge  f_5  \wedge  f_6 ,$$
and the wave function 
$$ \tfrac{1}{\sqrt{3}}  g_1  \wedge  f_2 \wedge f_3  +  \tfrac{1}{\sqrt{3}}  f_1  \wedge  g_2  \wedge  f_3 +  \tfrac{1}{\sqrt{3}}  f_1  \wedge  f_2  \wedge  g_3  $$
has infinitely many max-overlap Slater determinant approximations. 

Max-overlap Slater determinant approximations for wave functions of $n$ fermions in $d$ orbitals are not easy to find in general.  
The best general algorithm we know of is iterative, and increases overlap with each iteration, but sometimes gets stuck at a local maximum \cite{zjm}.  Perhaps a better method could be developed specially for the Borland-Dennis setting. 
Indeed, an effective way to get the {\it value} of the maximum, though not the maximizers themselves, has been proposed in Ref.~\cite{ChenDokovicGrasslZeng}.   The maximum overlap-squared is one of the roots of an eighth degree polynomial which, printed out, fills four pages.   

The simplest kind of max-overlap approximation is where $\chi$ is restricted to belong to a fixed linear subspace of $ \Borland$.  
Then the max-overlap approximation is proportional to the projection of the target wave function onto the subspace.  This is a consequence of the following lemma:

\begin{Lemma}
\label{useful little observation}
Let $\mathfrak{H}_0$ be a closed subspace of a Hilbert space $\mathfrak{H}$.
Let $\psi$ be a vector in $\mathfrak{H}$ and let $\phi$ denote its projection onto $\mathfrak{H}_0$.  
Then 
\begin{equation}
\label{useful little maximization}
     \max \Big\{  \big|\langle  \chi, \psi \rangle\big|^2  : \ \chi \in \mathfrak{H}_0, \|\chi\| = 1 \Big\} \equals \| \phi \|^2\ .
\end{equation}
Unless $\psi$ is orthogonal to $\mathfrak{H}_0$, the maximum (\ref{useful little maximization}) is attained only at the unit vectors in $\mathfrak{H}_0$ that are proportional to $\phi$.
\end{Lemma}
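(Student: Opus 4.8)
The plan is to reduce the problem to a one-line Cauchy--Schwarz estimate by using the orthogonal decomposition $\psi = \phi + \psi^{\perp}$, where $\phi \in \mathfrak{H}_0$ is the projection and $\psi^{\perp} \in \mathfrak{H}_0^{\perp}$. First I would observe that the component $\psi^{\perp}$ contributes nothing to the overlap with any $\chi \in \mathfrak{H}_0$: since $\chi$ is orthogonal to $\psi^{\perp}$, we have $\langle \chi, \psi \rangle = \langle \chi, \phi \rangle$, hence $|\langle \chi, \psi \rangle|^2 = |\langle \chi, \phi \rangle|^2$ for every $\chi \in \mathfrak{H}_0$. Thus the maximization in (\ref{useful little maximization}) over unit vectors $\chi \in \mathfrak{H}_0$ is the same as maximizing $|\langle \chi, \phi \rangle|^2$ over such $\chi$.

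Next I would apply the Cauchy--Schwarz inequality: for any unit vector $\chi$, $|\langle \chi, \phi \rangle| \le \|\chi\|\,\|\phi\| = \|\phi\|$, which establishes the upper bound $\|\phi\|^2$. To see that it is attained, I would distinguish two cases. If $\phi = 0$ --- equivalently, $\psi$ is orthogonal to $\mathfrak{H}_0$ --- then $|\langle \chi, \psi \rangle|^2 = 0 = \|\phi\|^2$ for every $\chi \in \mathfrak{H}_0$, and the maximum is trivially attained everywhere. If $\phi \neq 0$, then $\chi = \phi/\|\phi\|$ is a unit vector in $\mathfrak{H}_0$ with $|\langle \chi, \phi \rangle|^2 = \|\phi\|^2$, so the maximum equals $\|\phi\|^2$ and is attained.

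Finally, for the uniqueness assertion in the case $\phi \neq 0$, I would invoke the equality condition for Cauchy--Schwarz: $|\langle \chi, \phi \rangle| = \|\chi\|\,\|\phi\|$ holds if and only if $\chi$ and $\phi$ are linearly dependent, which among unit vectors means $\chi = e^{i\theta}\,\phi/\|\phi\|$ for some real $\theta$, i.e.\ $\chi$ is proportional to $\phi$. This equality condition is the only point requiring any care, and it is a standard fact; I anticipate no genuine obstacle, the lemma being recorded here merely for convenient reference.
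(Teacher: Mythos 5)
Your proposal is correct and follows essentially the same route as the paper's own proof: reduce $\langle \chi,\psi\rangle$ to $\langle \chi,\phi\rangle$ via the projection, apply Cauchy--Schwarz, and invoke its equality condition for the uniqueness claim. The only difference is that you spell out the attainment at $\chi = \phi/\|\phi\|$ and the degenerate case $\phi = 0$ slightly more explicitly, which is harmless.
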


\begin{proof}
Since $\phi$ is the projection of $\psi$ onto $\mathfrak{H}_0$,  $ \langle \chi ,\psi  \rangle  =  \langle \chi ,\phi  \rangle $ for any unit vector $\chi \in \mathfrak{H}_0$.  
Therefore, 
\[
    \big|\langle  \chi, \psi \rangle\big|^2 \equals  \big|\langle  \chi, \phi \rangle\big|^2 \ \le\  \| \chi \|^2 \| \phi \|^2  \equals \| \phi \|^2 
\]
by the Cauchy-Schwarz Inequality.  In case $\phi$ is nonzero, equality holds if and only if $\chi$ and $\phi$ are parallel.  
\end{proof}

Lemma~\ref{useful little observation} tells us that finding a max-overlap low-rank approximation of a target wave function $\psi$ is equivalent to finding a $5$-dimensional linear subspace $\MM$ such that the projection of $\psi $ onto the subspace $\wedge^3 \MM$ is as long as possible.    These subspaces can all be found:

\begin{Proposition}
\label{max-overlap rank-5 proposition}
Suppose that $\chi_* \in \Borland$ is a max-overlap low-rank approximation of a wave function $\psi$ with natural occupation numbers $\lambda_1 \ge \lambda_2 \ge \cdots \ge \lambda_6$.  
Then $\langle  \chi_*, \psi \rangle^2 = \lambda_1$.  Moreover, there exist orthonormal orbitals $w$ and $w'$, and two bivectors $\gamma,\gamma'$ strongly orthogonal to them, such that 
\begin{trivlist}
\item{(i)}  \  $w$ and $w'$ are natural orbitals of $\psi$ with natural occupation numbers $\lambda_1$ and $\lambda_6$, respectively,
\item{(ii)}   \  $ \psi = w \wedge \gamma + w' \wedge \gamma'$, 
\item{(iii)}   \  and $w \wedge \gamma =  \langle  \chi_*, \psi \rangle \ \chi_*$.
\end{trivlist}
\end{Proposition}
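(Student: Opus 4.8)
The plan is to express everything in terms of the one-particle reduced density matrix $\Gamma$ of $\psi$ and the occupation numbers $\lambda_i$, using Lemma~\ref{useful little observation} to locate the max-overlap subspace and Lemma~\ref{three-in-five} to reduce the residual piece. First I would compute the maximal overlap. Every low-rank wave function lies in $\wedge^3\MM$ for a $5$-dimensional subspace $\MM=u^\perp$ with $u$ a unit vector, so by Lemma~\ref{useful little observation} the best overlap-squared attainable inside $\wedge^3 u^\perp$ is $\|P_{\wedge^3 u^\perp}\psi\|^2$. Writing the orthogonal decomposition $\psi=\rho+u\wedge\beta$ with $\rho\in\wedge^3 u^\perp$, $\beta\in\wedge^2 u^\perp$, one has $\|u\wedge\beta\|=\|\beta\|$ and $\|\beta\|^2=\langle u,\Gamma u\rangle$, so $\|P_{\wedge^3 u^\perp}\psi\|^2=1-\langle u,\Gamma u\rangle$. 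Hence the maximal overlap-squared over all low-rank wave functions is $1-\lambda_6$, attained exactly when $u$ lies in the $\lambda_6$-eigenspace of $\Gamma$; given the phase convention $\langle\chi_*,\psi\rangle>0$ this pins down $\langle\chi_*,\psi\rangle^2=1-\lambda_6$ and, for a suitable natural orbital $u$ with occupation $\lambda_6$, $\chi_*=\rho/\|\rho\|$, so that $\langle\chi_*,\psi\rangle\,\chi_*=\rho$. (One can reach $\langle\chi_*,\psi\rangle^2=\lambda_1$ even without the pairing used below: applying Lemma~\ref{useful little observation} to the linear subspace $f\wedge(\wedge^2 f^\perp)$ shows that the best low-rank approximant of the form $f\wedge(\cdot)$ has overlap-squared equal to $\langle f,\Gamma f\rangle$, and maximizing over the orbital $f$ gives $\lambda_1$.)

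Next I would extract the decomposition. Since $\rho\in\wedge^3 u^\perp\cong\wedge^3\mathbb{C}^5$, Lemma~\ref{three-in-five} provides an orthonormal basis $\{w,u_1,u_2,v_1,v_2\}$ of $u^\perp$ with $\rho=w\wedge\gamma_0$, where $\gamma_0=A_1\,u_1\wedge v_1+A_2\,u_2\wedge v_2$ is strongly orthogonal to $w$; thus $\psi=u\wedge\beta+w\wedge\gamma_0$. Because $u$ is a natural orbital, $\Gamma=\lambda_6\,|u\rangle\!\langle u|+\Gamma'$ with $\Gamma'$ supported on $u^\perp$, and since $u\wedge\beta$ and $\rho$ occupy $u$ and its complement respectively, the partial-trace cross terms vanish, so $\Gamma'$ is the sum of the unnormalized one-particle density matrices of $\beta$ and of $\rho$. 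The occupation of $w$ inside $\rho=w\wedge\gamma_0$ is $\|\gamma_0\|^2=\|\rho\|^2=1-\lambda_6$, hence the occupation of $w$ in $\psi$ equals $(\text{occupation of }w\text{ in }\beta)+(1-\lambda_6)\ge 1-\lambda_6$. Invoking the Borland--Dennis relation $\lambda_1+\lambda_6=1$ turns this into ``occupation of $w$ in $\psi$ $\ge\lambda_1$,'' which together with ``$\le\lambda_1$'' forces equality: $w$ is a natural orbital with occupation $\lambda_1$, and $\beta$ is strongly orthogonal to $w$, hence $\beta,\gamma_0\in\wedge^2\Span\{u_1,u_2,v_1,v_2\}$. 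Taking $w'=u$, $\gamma=\gamma_0$, $\gamma'=\beta$ and $s=\langle\chi_*,\psi\rangle$ then yields (i), (ii), (iii), with $s^2=1-\lambda_6=\lambda_1$.

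The routine ingredients — the occupation identity relating $\langle f,\Gamma f\rangle$ to the squared norm of the $f$-component of $\psi$, the vanishing of the cross terms in $\Gamma'$, and the elementary fact that a unit vector whose Rayleigh quotient with a Hermitian operator equals the top eigenvalue lies in the top eigenspace — are short second-quantization calculations. The one genuinely setting-specific step, and the one I expect to be the main obstacle, is the pairing $\lambda_1+\lambda_6=1$: I would either cite it as one of the classical Borland--Dennis constraints on the natural occupation numbers or, to keep the argument self-contained, derive the nontrivial half $\lambda_1+\lambda_6\le1$ by applying the already-obtained inequality $\lambda_1+\lambda_6\ge1$ to the particle--hole conjugate of $\psi$, a wave function in $\Borland$ with occupation numbers $1-\lambda_6\ge\cdots\ge1-\lambda_1$. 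A little extra care is also needed in the degenerate cases where $\lambda_1$ or $\lambda_6$ is a repeated occupation number, or where the residual $\rho$ has rank $3$ rather than $5$, so that $\chi_*$ and the orbitals $w,w'$ are no longer essentially unique.
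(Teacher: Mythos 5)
Your argument is correct, but it takes a genuinely different route from the paper's. The paper works outward from $\chi_*$: it invokes Lemma~\ref{three-in-five} to write $\chi_* = w\wedge(A_1\, u_1\wedge v_1 + A_2\, u_2\wedge v_2)$, expands $\psi$ as $w\wedge\gamma + w'\wedge\gamma' + w\wedge w'\wedge l + \beta$ in the induced splitting of $\Csix$, and kills the last two terms by applying Lemma~\ref{useful little observation} \emph{twice}: $\chi_*$ sits inside the two low-rank subspaces $\wedge^3\Span\{w,u_1,u_2,v_1,v_2\}$ and $\Span\{w\}\wedge\big(\wedge^2\Span\{u_1,u_2,v_1,v_2,w'\}\big)$, hence must be proportional to the projection of $\psi$ onto each, which forces $\beta=0$ and $l=0$ and delivers (ii) and (iii) before occupation numbers are mentioned; the 1RDM enters only at the end, to identify $w$ and $w'$ as natural orbitals. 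You instead work outward from $\psi$, first computing the maximal overlap as $1-\lambda_6$ via the subspaces $\wedge^3 u^\perp$ --- which hands you $w'=u$ as a $\lambda_6$-natural orbital and statement (iii) essentially for free --- and then recovering $w$ from Lemma~\ref{three-in-five} applied to the projection, finishing with an occupation-counting argument. Both are sound. One remark: the obstacle you single out, the pairing $\lambda_1+\lambda_6=1$, is already a consequence of your own first paragraph, since your two parametrizations of the low-rank class (by $5$-dimensional subspaces $u^\perp$ and by lone orbitals $f$) compute one and the same maximum as $1-\lambda_6$ and as $\lambda_1$; so there is no need to import the pairing from the Borland--Dennis constraints, which would in any case be an awkward forward reference, Theorem~\ref{Borland-Dennis Theorem} being proved only in the following section. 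The degenerate cases you flag at the end are harmless: the proposition asserts existence, not uniqueness, of $w$, $w'$, $\gamma$, $\gamma'$, and your argument applies verbatim to any $5$-dimensional subspace whose third exterior power contains $\chi_*$.
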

\begin{proof}
Suppose $\chi_*$ is a max-overlap low-rank approximation of $\psi$.  
By Lemma~\ref{three-in-five}, there exists an orthonormal basis $\{w,u_1,u_2,v_1,v_2,w'\}$ of $\Csix$ and coefficients $A_1,A_2$ such that 
$$\chi_* = w \wedge \big( A_1 u_1 \wedge v_1 + A_2 u_2 \wedge v_2 \big)\ .$$  
Let  $\LL = \Span\{u_1,u_2,v_1,v_2\}$.  There exist 
multivectors $\gamma,\gamma' \in \wedge^2 \LL$ and $\beta \in \wedge^3\LL$, and an orbital $l \in \LL$, such that 
\begin{equation}
\label{full psi}
   \psi \equals w \wedge \gamma \plus w' \wedge \gamma'  \plus w \wedge w' \wedge l \plus \beta \ . 
\end{equation}
We will prove that the last two terms vanish.

Let $\WW = \Span\{w\}$,  $\MM = \Span\{w,u_1,u_2,v_1,v_2\}$ and $\MM' = \Span\{u_1,u_2,v_1,v_2,w'\}$.  
Then $\chi_*$ belongs to both $\wedge^3 \MM$ and $\WW\wedge\big( \wedge^2 \MM' \big)$.   

Since every wave function in $\wedge^3 \MM$ is low-rank, and $\chi_*$ is a max-overlap low-rank wave function, $\chi_*$ must be proportional to the projection of $\psi$ onto $\wedge^3 \MM$, by Lemma~\ref{useful little observation}.   This projection is $w \wedge \gamma + \beta$, from (\ref{full psi}).   But  $\chi_* = w \wedge \big( A_1 u_1 \wedge v_1 + A_2 u_2 \wedge v_2 \big)$ is orthogonal to $\beta$.   It follows that $\beta = 0$.  Again, $\chi_* = w \wedge \big( A_1 u_1 \wedge v_1 + A_2 u_2 \wedge v_2 \big)$ is proportional to the projection of $\psi$ onto $\wedge^3 \MM$, which is just $w \wedge \gamma$.  It follows that (iii) holds, and in fact $\gamma = \langle \chi_*, \psi \rangle  \big( A_1 u_1 \wedge v_1 + A_2 u_2 \wedge v_2 \big)$.

Similarly, since every wave function in $\WW\wedge\big( \wedge^2 \MM' \big)$ is low-rank, 
 $\chi_*$ must also be proportional to the projection of $\psi$ onto this subspace, which is $w \wedge \gamma + w \wedge w' \wedge l$.  But $\chi_*$ is orthogonal to  $w \wedge w' \wedge l$, so the latter trivector must also be $0$.  Thus $\psi$ is just as in (ii).

Let $\Gamma$ denote the 1RDM of $\psi$.  The formula in (ii) shows that $\langle w , \Gamma w \rangle=\big|\langle  \chi_*, \psi \rangle\big|^2$.  This diagonal matrix element of $\Gamma$ is less than or equal to $\lambda_1$, the greatest eigenvalue of $\Gamma$, with equality if and only if $w$ is a corresponding eigenvector of  $\Gamma$, i.e., a natural orbital of $\psi$ with occupation number $\lambda_1$.  Since  $\chi_*$ has maximum overlap with $\psi$ among all low-rank wave functions, $w$ has to be a natural orbital with occupation number $\lambda_1$, and 
$\langle  \chi_*, \psi \rangle^2  = \lambda_1$.  
Since $w$ is a natural orbital, the off-diagonal matrix elements $\langle w , \Gamma w' \rangle$ and $\langle w' , \Gamma w \rangle$ the 1RDM must vanish.  Furthermore, it can be seen from formula (ii) for $\psi$ that the other off-diagonal matrix elements connecting $w'$ to $u_1,u_2,v_1,$ and $v_2$ must also vanish, and that $\langle w' , \Gamma w' \rangle = 1 - \lambda_1 = \lambda_6$.  Thus $w'$ is a natural orbital with occupation number $\lambda_6$.  This proves (i) and the fact that the maximum overlap equals $\lambda_1$.
\end{proof}


\section{The Borland-Dennis Theorem}
\label{The Borland-Dennis Theorem}

In 1970 Borland and Dennis \cite{borland1} published their observation that wave functions in $\Borland$ 
can be expressed compactly in terms of their natural orbitals:  

\begin{Theorem}
\label{Borland-Dennis Theorem}
If $\psi$ is a wave function in $\Borland$, then there exists an ordered orthonormal system $(h_1,\ldots,h_6)$ of its natural orbitals such that 
$\psi$ has CI expansion 
\begin{equation}
\label{eight}
\begin{matrix}
 h_1 & h_2 & h_3 & h_4 & h_5 & h_6  \\
\mathrm{X} & \mathrm{X} & \mathrm{X} & \mathrm{O} & \mathrm{O} & \mathrm{O} & \quad A_{123}   \\
\mathrm{X} & \mathrm{X} & \mathrm{O} & \mathrm{X} & \mathrm{O} & \mathrm{O} & \quad A_{124}  \\
\mathrm{X} & \mathrm{O} & \mathrm{X} & \mathrm{O} & \mathrm{X} & \mathrm{O} & \quad  A_{135}  \\
\mathrm{X} & \mathrm{O} & \mathrm{O} & \mathrm{X} & \mathrm{X} & \mathrm{O} & \quad A_{145}  \\
\mathrm{O} & \mathrm{X} & \mathrm{X} & \mathrm{O} & \mathrm{O} & \mathrm{X} & \quad A_{236}  \\
\mathrm{O} & \mathrm{X} & \mathrm{O} & \mathrm{X} & \mathrm{O} & \mathrm{X} & \quad A_{246}  \\
\mathrm{O} & \mathrm{O} & \mathrm{X} & \mathrm{O} & \mathrm{X} & \mathrm{X} & \quad A_{356}  \\
\mathrm{O} & \mathrm{O} & \mathrm{O} & \mathrm{X} & \mathrm{X} & \mathrm{X} & \quad A_{456}   \\
\end{matrix}
\end{equation}
\end{Theorem}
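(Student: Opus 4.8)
The plan is to obtain (\ref{eight}) in two moves: first reduce $\psi$ to a specific $5$-term CI expansion, then diagonalize its $1$RDM, which will turn out to split into three $2\times 2$ blocks, and re-expand in the resulting natural orbitals. For the first move, note that the Slater determinants in $\Borland$ form a compact set, so $\psi$ has a best-overlap Slater determinant $\chi_* = u_1 \wedge u_2 \wedge u_3$; I would extend $(u_1,u_2,u_3)$ to an ordered orthonormal basis $(u_1,\ldots,u_6)$ of $\Csix$ and put $\mathcal{S} = \Span\{u_1,u_2,u_3\}$. By Lemma~\ref{Brueckner lemma} the CI expansion of $\psi$ has no single excitations of $\chi_*$, so $\psi$ lies in $\wedge^3\mathcal{S} \oplus \big(\mathcal{S}\wedge\mathcal{S}^\perp\wedge\mathcal{S}^\perp\big)\oplus\wedge^3\mathcal{S}^\perp$, which is the subspace (\ref{singles-free subspace}) with the roles of $\RR$ and $\RR^\perp$ interchanged. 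Applying Lemma~\ref{SVD lemma} with its ``$\RR$'' taken to be $\mathcal{S}^\perp$ then yields orthonormal bases $\{f_1,f_2,f_3\}$ of $\mathcal{S}^\perp$ and $\{g_1,g_2,g_3\}$ of $\mathcal{S}$, and coefficients, for which
\[
\psi \equals A\, f_1\wedge f_2\wedge f_3 \plus B_1\, g_1\wedge f_2\wedge f_3 \plus B_2\, f_1\wedge g_2\wedge f_3 \plus B_3\, f_1\wedge f_2\wedge g_3 \plus D\, g_1\wedge g_2\wedge g_3 ,
\]
a CI expansion of the shape of (\ref{5-term expansion}).

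For the second move I would compute the $1$RDM $\Gamma$ of this $5$-term wave function in the basis $(f_1,f_2,f_3,g_1,g_2,g_3)$. An off-diagonal element $\langle x,\Gamma y\rangle$ with $x\neq y$ is assembled from pairs of configurations in the expansion that coincide except that one contains $x$ where the other contains $y$; inspecting the five configurations above shows the only such pairs are $\{f_1,f_2,f_3\}$ with $\{g_1,f_2,f_3\}$, with $\{f_1,g_2,f_3\}$, and with $\{f_1,f_2,g_3\}$. Hence the only nonzero off-diagonal entries of $\Gamma$ are $\langle f_i,\Gamma g_i\rangle$ for $i=1,2,3$, so $\Gamma = \Gamma_1\oplus\Gamma_2\oplus\Gamma_3$ is an orthogonal direct sum, with $\Gamma_i$ a Hermitian operator on the plane $\Span\{f_i,g_i\}$. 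This block decomposition is the one step with real content; everything else is bookkeeping.

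For the last move, I would pick orthonormal eigenvectors $h_i,h_{i'}$ of $\Gamma_i$ inside $\Span\{f_i,g_i\}$ for each $i$; since $\Gamma$ is the direct sum of the $\Gamma_i$, these six orbitals are natural orbitals of $\psi$. Each of the five configurations above is a wedge product of one orbital from each of the three mutually orthogonal planes $\Span\{f_1,g_1\}$, $\Span\{f_2,g_2\}$, $\Span\{f_3,g_3\}$; rewriting each factor in terms of the corresponding eigenpair $\{h_i,h_{i'}\}$ re-expresses $\psi$ as a linear combination of the eight ``product'' configurations obtained by choosing one orbital from each plane. Relabelling the six natural orbitals as $h_1,\ldots,h_6$ so that the planes are $\Span\{h_1,h_6\}$, $\Span\{h_2,h_5\}$, $\Span\{h_3,h_4\}$, these eight configurations are exactly the eight rows of (\ref{eight}); ordering the planes by the spread of their pair of eigenvalues (which sum to $1$, since $\mathrm{tr}\,\Gamma_i=1$) even makes the occupation numbers decreasing. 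Degeneracies present no obstacle: a non-simple occupation number, or a $\Gamma_i$ proportional to the identity, only gives extra freedom in the choice of natural orbitals, and in the extreme case $\{f_i,g_i\}$ itself already diagonalizes $\Gamma_i$.
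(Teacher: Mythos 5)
Your argument is correct, and it reaches the theorem by a genuinely different route than the one the paper designates as its proof. The paper's main-text proof (Proposition~\ref{Proposition for CID} together with Lemma~\ref{3-qubit subspace lemma}) starts from a \emph{max-overlap CID approximation}: the four-fold symmetry of the canonical CID form (\ref{canonical form CID}) forces the approximation to be proportional to the projection of $\psi$ onto four different CID subspaces at once, which annihilates enough coefficients to leave only the eight single-occupancy configurations. You instead start from a \emph{max-overlap Slater determinant}, invoke the Brueckner conditions (Lemma~\ref{Brueckner lemma}) to kill the single excitations, and apply the SVD Lemma~\ref{SVD lemma} to land in the symmetric $5$-term form (\ref{5-term expansion}); that is precisely the content and proof of the paper's Theorem~\ref{Theorem for Slater}, which the paper states later and never actually feeds back into the Borland--Dennis Theorem. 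Your second and third moves --- checking that the only surviving off-diagonal $1$RDM entries are $\langle f_i,\Gamma g_i\rangle$, so that $\Gamma$ splits into three $2\times2$ blocks on the mutually orthogonal planes $\Span\{f_i,g_i\}$, and then diagonalizing --- reproduce Lemma~\ref{3-qubit subspace lemma} inline, since the $5$-term expansion already exhibits a $3$-qubit subspace containing $\psi$. So both proofs share the same skeleton (a max-overlap object, the SVD Lemma, and block-diagonalization of the $1$RDM), but yours buys the symmetric $5$-term canonical form of Theorem~\ref{Theorem for Slater} as a free intermediate and rests on the more familiar Brueckner-orbital machinery, while the paper's CID route buys the explicit pairing $\{f_i,f_{i+3}\}$ of orbitals directly from the shape of (\ref{canonical form CID}). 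Two small points worth making explicit if you write this up: the existence of the maximizer needs the observation that the set of Slater determinants is compact (which you note), and the ordering claim at the end is a refinement beyond what the theorem statement actually demands.
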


Note that each of the $8=2^3$ configurations in (\ref{eight}) takes exactly one orbital from each of the three pairs $\{h_1,h_6\}$, $\{h_2,h_5\}$, and $\{h_3,h_4\}$.  The fact that wave functions in $\Borland$ have this form is a special case of a remarkable theorem in Ref.~\cite{ChenChenDokovicZeng}, which is worth repeating here.  
{\it Every wave function in $\wedge^3 \mathbb{C}^d$ with $d \ge 6$ can be written in a ``single-occupancy vector'' form relative to a pairing of orbitals of $\mathbb{C}^d$, wherein configurations take {\it at most} one orbital from each pair.}   In the case where $d=6$ there are only three pairs of orbitals, and configurations must take {\it exactly} one orbital from each pair.  

\subsection{Significance of the Borland-Dennis Theorem}

Using Theorem~\ref{Borland-Dennis Theorem}, Borland and Dennis went on to characterize the set of 
$1$-particle reduced density matrices (1RDMs) that can derived from $3$-fermion wave functions of rank $6$ or less \cite{borland2}.  
This discovery of Borland and Dennis ``stood for more than three decades as the only known solution of the $N$-representability problem beyond two electrons [or] two holes'' \cite{turkey}.  
Here is the result:
\begin{Corollary}
\label{3-representability}
A $6 \times 6$ Hermitian matrix is the 1RDM of a wave function $\psi \in \wedge^3 \mathbb{C}^6$ if and only if 
its eigenvalues $\lambda_i$ are all non-negative and, when ordered in descending order, satisfy 
\begin{subequations}
\label{borlandanddennis}
\begin{eqnarray}
\lambda_1 + \lambda_6  = \lambda_2 + \lambda_5 =\lambda_3 + \lambda_4 =1, \quad
\label{equality constraints} \\
 \lambda_4 \ \le\  \lambda_5 + \lambda_6. \quad \quad \quad\quad  \quad
\label{inequality constraint}
\end{eqnarray}
\end{subequations}
\end{Corollary}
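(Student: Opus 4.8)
The plan is to prove the two implications separately, both organized around the Borland--Dennis expansion (\ref{eight}). \emph{Necessity:} Let $\psi\in\wedge^3\mathbb{C}^6$ have $1$RDM $\Gamma$; nonnegativity of the $\lambda_i$ is automatic since $\Gamma\succeq0$. By Theorem~\ref{Borland-Dennis Theorem} there is an orthonormal basis $(h_1,\dots,h_6)$ of natural orbitals for which $\psi$ has the form (\ref{eight}), with $\lambda_i=\langle h_i,\Gamma h_i\rangle$. Each of the eight configurations contains exactly one orbital from each of the pairs $\{h_1,h_6\}$, $\{h_2,h_5\}$, $\{h_3,h_4\}$; summing $|A_{ijk}|^2$ over the configurations through $h_1$ and over those through $h_6$ gives $\lambda_1+\lambda_6=\|\psi\|^2=1$, and likewise $\lambda_2+\lambda_5=\lambda_3+\lambda_4=1$. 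Swapping the two orbitals of any one pair, or permuting the three pairs, carries the family of eight configurations onto itself (merely permuting the coefficient labels) and keeps the $h_i$ natural orbitals, so I may assume $\lambda_1\ge\lambda_6$, $\lambda_2\ge\lambda_5$, $\lambda_3\ge\lambda_4$ and $\lambda_1\ge\lambda_2\ge\lambda_3$; with the equalities just proved this forces $\lambda_1\ge\cdots\ge\lambda_6$ (each of $\lambda_1,\lambda_2,\lambda_3$ is $\ge\tfrac12$).

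\emph{The inequality (\ref{inequality constraint}).} Since $h_3,h_4$ are natural orbitals, $\langle h_3,\Gamma h_4\rangle=0$. The only configuration pairs contributing to this matrix element are the four related by exchanging $h_3$ and $h_4$, namely $(h_1h_2h_3,h_1h_2h_4)$, $(h_1h_3h_5,h_1h_4h_5)$, $(h_2h_3h_6,h_2h_4h_6)$, $(h_3h_5h_6,h_4h_5h_6)$, so the condition has the shape $A_{123}\overline{A_{124}}\pm A_{135}\overline{A_{145}}\pm A_{236}\overline{A_{246}}\pm A_{356}\overline{A_{456}}=0$ for some signs. Taking moduli, applying Cauchy--Schwarz to the triples $(|A_{135}|,|A_{236}|,|A_{356}|)$ and $(|A_{145}|,|A_{246}|,|A_{456}|)$, and squaring gives
\[
   |A_{123}|^2|A_{124}|^2\;\le\;\bigl(|A_{135}|^2+|A_{236}|^2+|A_{356}|^2\bigr)\bigl(|A_{145}|^2+|A_{246}|^2+|A_{456}|^2\bigr).
\]
Writing $p_{ijk}=|A_{ijk}|^2$ and using $p_{135}+p_{236}+p_{356}=\lambda_3-p_{123}$ and $p_{145}+p_{246}+p_{456}=\lambda_4-p_{124}$, this rearranges to $\lambda_3\,p_{124}+\lambda_4\,p_{123}\le\lambda_3\lambda_4$, hence $\lambda_3\,p_{124}\le\lambda_4(\lambda_3-p_{123})=\lambda_4(p_{135}+p_{236}+p_{356})$; since $\lambda_3\ge\tfrac12>0$ and $\lambda_4\le\lambda_3$ this yields $p_{124}\le p_{135}+p_{236}+p_{356}$. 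Finally $\lambda_5+\lambda_6-\lambda_4=p_{135}+p_{236}+2p_{356}+p_{456}-p_{124}\ge p_{356}+p_{456}\ge0$, which is (\ref{inequality constraint}).

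\emph{Sufficiency.} Given $\lambda_1\ge\cdots\ge\lambda_6\ge0$ satisfying (\ref{equality constraints}) and (\ref{inequality constraint}), set $x=\lambda_6$, $y=\lambda_5$, $z=\lambda_4$, so $0\le x\le y\le z\le\tfrac12$ and $z\le x+y$. Fix an orthonormal system $h_1,\dots,h_6$ in $\mathbb{C}^6$ and put
\[
  \psi=c_0\,h_1\wedge h_2\wedge h_3+c_1\,h_1\wedge h_4\wedge h_5+c_2\,h_2\wedge h_4\wedge h_6+c_3\,h_3\wedge h_5\wedge h_6,
\]
with $c_0^2=1-\tfrac{x+y+z}{2}$, $c_1^2=\tfrac{y+z-x}{2}$, $c_2^2=\tfrac{x+z-y}{2}$, $c_3^2=\tfrac{x+y-z}{2}$. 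All four of these are nonnegative (the first because $x+y+z\le\tfrac32<2$; the second and third because $x\le y\le z$; the last by $z\le x+y$) and sum to $1$, so $\psi$ is a wave function. Any two of the four configurations meet in exactly one orbital, so every off-diagonal entry $\langle h_i,\Gamma h_j\rangle$ of the $1$RDM vanishes (such an entry would require two configurations overlapping in two orbitals); hence $h_1,\dots,h_6$ are natural orbitals, and summing $|c_k|^2$ over the configurations through each $h_i$ gives occupation numbers $1-x,1-y,1-z,z,y,x$, i.e.\ $\lambda_1,\dots,\lambda_6$. An arbitrary Hermitian $6\times6$ matrix with these eigenvalues is then the $1$RDM of $\widehat U\psi$ for a suitable $U\in\mathrm{U}(6)$.

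\emph{Main obstacle.} The delicate point is the inequality in the necessity direction: one must notice that the off-diagonal vanishing $\langle h_3,\Gamma h_4\rangle=0$ — the content of ``$h_3,h_4$ are natural orbitals,'' which is not visible in the occupation numbers alone — is precisely what bounds the single ``bad'' coefficient $A_{124}$, and then combine it with $\lambda_4\le\lambda_3$ via Cauchy--Schwarz. The computation of that matrix element from (\ref{eight}) and the rest of the sufficiency verification are routine bookkeeping.
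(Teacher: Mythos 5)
Your proof is correct and, in the necessity direction, is exactly the argument the paper only sketches: the equality constraints come from the single-occupancy structure of (\ref{eight}), and the inequality comes from the vanishing of the off-diagonal matrix element $\langle h_3,\Gamma h_4\rangle$ (the ``orthogonality relations between the coefficients'' the paper alludes to) combined with Cauchy--Schwarz; your bookkeeping of which configuration pairs contribute, and the reduction to $\lambda_3 p_{124}+\lambda_4 p_{123}\le\lambda_3\lambda_4$, all check out. The paper's proof omits the sufficiency direction entirely, so your explicit four-configuration construction with $c_0^2=1-\tfrac{x+y+z}{2}$, $c_1^2=\tfrac{y+z-x}{2}$, $c_2^2=\tfrac{x+z-y}{2}$, $c_3^2=\tfrac{x+y-z}{2}$ --- whose pairwise single-orbital overlaps force the 1RDM to be diagonal --- is a correct and worthwhile addition rather than a deviation.
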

The equality constraints (\ref{equality constraints}) follow from the fact that each configuration in (\ref{eight}) contains either $h_1$ or $h_6$, but never both, and similarly for the other two pairs.  
The inequality constraint (\ref{inequality constraint}) follows from orthogonality relations between the coefficients in 
(\ref{eight}) that hold because the reference orbitals are natural orbitals.  We remark that, thanks to the $3$-qubit correspondence mentioned in the introduction, the inequality constraint also follows from the more general ``polygon inequalities'' that characterize the reduced density matrices of pure $n$-qubit states \cite{HiguchiSudberySzulc}.

A general solution of the $N$-representability problem for the 1RDM has since been achieved by Klyachko \cite{Klyachko2004,Klyachko2006}.  
For any number $N$ of fermions and any rank $R \ge N$, the spectra (ordered eigenvalue lists) of 1RDMs derived from $N$-fermion wave functions of rank $R$ or less form a convex polytope determined by finitely many inequality constraints.  These inequalities are called ``generalized Pauli constraints'' \cite{SchillingGrossChristandl} because they are stricter than the obvious constraints of the Pauli Exclusion Principle, namely, that orbital occupation numbers, and therefore in particular the ``natural occupation numbers,'' which are the eigenvalues of the 1RDM, are all less than or equal to $1$.   The number of generalized Pauli constraints (GPCs) grows very quickly with $N$ and $R$, however, and full sets of constraints are known explicitly only for fairly small $N$ and $R$ \cite{Klyachko2006}.   Fortunately, the GPCs for the Borland-Dennis setting ($N=3$ and $R=6$) are not so numerous and indeed the polytope of spectra can be visualized in three dimensions \cite{WalterDoranGrossChristandl,ChakrabortyMazziotti-Review}.

The GPCs are potentially of great significance for the study of electron structure, by virtue of what Klyachko has called the ``selection rule'' for ``pinned states'' \cite{KlyachkoPinning}.  
 When a many-fermion wave function ``saturates'' a GPC, that is, when it attains equality in the GPC inequality constraint, 
 then all the configurations in its natural configuration interaction (CI) expansion conform individually to the saturated constraint.  
The natural CI expansion of a pinned states is reduced by a sort of ``selection rule'' that eliminates configurations of natural orbitals that do not conform to the saturated GPC.   
A complete proof of the selection rule seems to have been published only very recently \cite{Walter Dissertation}; meanwhile, theorists have already begun to investigate its potential application to electronic structure theory  \cite{SchillingGrossChristandl,Lithium,Molecules,ChakrabortyMazziotti-Review,Schilling}.

Klyachko's selection rule implies the structure of the natural CI expansion for wave function in the Borland-Dennis setting.  
These wave functions saturate the GPCs $\lambda_i +  \lambda_{7-i} \le 1$.  
The selection rule implies that every configuration in the natural CI expansion of the wave function contains exactly one of the natural orbitals 
$h_i$ and $h_{7-i}$ (at least in the generic case where all six natural occupation numbers are distinct).  
This is precisely the structure of the natural CI expansion discovered by Borland and Dennis.

\subsection{3-qubit subspaces}

With the following definition, 
the Borland-Dennis Theorem may be stated quite succinctly:
{\it Every wave function $\psi \in \Borland$ belongs to a $3$-qubit subspace.}

\begin{Definition}
\label{3-qubit subspace definition}
A subspace of $\Borland$ is a ``$3$-qubit subspace'' if it has the form $\QQ_1\wedge \QQ_2 \wedge \QQ_3$ where $\QQ_1,\QQ_2$, and $\QQ_3$ are three mutually orthogonal $2$-dimensional subspaces 
 of $\Csix$. 
\end{Definition}

Let $\{g_1,\ldots,g_6\}$ be an orthonormal basis of $\Csix$ such that $\QQ_i = \Span\{g_i,g_{7-i}\}$, $i=1,2,3$.
A wave function belongs to the $3$-qubit subspace $ \QQ_1\wedge \QQ_2 \wedge \QQ_3$ 
if and only if its CI expansion with respect to the reference orbitals $g_i$ only involves configurations that have exactly one orbital in each pair $\{g_i,g_{7-i}\}$, $i=1,2,3$.  
The $3$-qubit subspaces are the so-called ``single-occupancy vector'' subspaces in the Borland-Dennis setting \cite{ChenChenDokovicZeng,ChenDokovicGrasslZeng}.

Theorem~\ref{Borland-Dennis Theorem} tells us that any wave function $\psi \in \Borland$ belongs to the $3$-qubit subspace 
\[ 
      \Span\{h_1,h_6\} \wedge \Span\{h_2,h_5\} \wedge \Span\{h_3,h_6\} \ ,
\]
where $h_1,\ldots,h_6$ are certain natural orbitals of $\psi$.
In fact, the theorem is equivalent to the assertion that every wave function $\psi \in \Borland$ belongs to {\it some} $3$-qubit subspace, 
not necessarily one involving natural orbitals.  Sufficiency of the latter condition is implied by the next lemma:

\begin{Lemma}
\label{3-qubit subspace lemma}
Suppose the wave function $\psi \in \Borland$ belongs to a $3$-qubit subspace $\QQ_1\wedge \QQ_2 \wedge \QQ_3$.  
Then there exists an orthonormal system $\{h_1,\ldots,h_6\}$ of natural orbitals such that $\QQ_i = \Span\{h_i,h_{7-i}\}$, $i=1,2,3$.  
\end{Lemma}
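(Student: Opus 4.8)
The plan is to show that the $1$-particle reduced density matrix $\Gamma$ of $\psi$ is block-diagonal with respect to the orthogonal decomposition $\Csix = \QQ_1 \oplus \QQ_2 \oplus \QQ_3$, and then to obtain the $h_i$ by diagonalizing the three $2\times 2$ blocks separately.

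First I would fix an orthonormal basis $\{g_1,\ldots,g_6\}$ of $\Csix$ with $\QQ_i = \Span\{g_i,g_{7-i}\}$ for $i=1,2,3$, and write out the CI expansion $\psi = \sum C_{jk\ell}\, g_j \wedge g_k \wedge g_\ell$. By the observation following Definition~\ref{3-qubit subspace definition}, membership of $\psi$ in $\QQ_1 \wedge \QQ_2 \wedge \QQ_3$ means $C_{jk\ell} = 0$ unless the index set $\{j,k,\ell\}$ is \emph{balanced}, that is, contains exactly one index from each of the three pairs $\{1,6\}$, $\{2,5\}$, $\{3,4\}$.

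The crux is the next step: computing the matrix elements $\langle g_a, \Gamma g_b \rangle$ when $g_a$ and $g_b$ lie in different pairs. Expressing $\Gamma$ through the CI coefficients in the usual way, $\langle g_a, \Gamma g_b\rangle$ is, up to combinatorial signs, a sum of products $C_{\{a\}\cup J}\,\overline{C_{\{b\}\cup J}}$ over the two-element index sets $J$ disjoint from $\{a,b\}$, and a summand can be nonzero only when both $\{a\}\cup J$ and $\{b\}\cup J$ are balanced. Suppose $a$ and $b$ belong to different pairs, say $a\in\{1,6\}$ and $b\in\{2,5\}$. Then balancedness of $\{a\}\cup J$ forces $J$ to have one index in $\{2,5\}$ and one in $\{3,4\}$, hence none in $\{1,6\}$; but balancedness of $\{b\}\cup J$ forces $J$ to contain an index in $\{1,6\}$ --- a contradiction. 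Thus all cross-pair matrix elements of $\Gamma$ vanish, so $\Gamma = \Gamma_1 \oplus \Gamma_2 \oplus \Gamma_3$ with each $\Gamma_i$ a Hermitian operator on $\QQ_i$. (A coordinate-free alternative: the unitary of $\Csix$ that multiplies $\QQ_1$ by a phase $e^{i\theta}$ and fixes $\QQ_2\oplus\QQ_3$ induces on $\QQ_1 \wedge \QQ_2 \wedge \QQ_3$ multiplication by $e^{i\theta}$, so it fixes $\psi$ up to a phase; being a one-particle unitary it therefore commutes with $\Gamma$, which forces $\Gamma$ to leave $\QQ_1$ and $\QQ_1^\perp$ invariant, and likewise for $\QQ_2$ and $\QQ_3$.)

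Finally I would diagonalize each block: pick an orthonormal eigenbasis $\{h_1,h_6\}$ of $\Gamma_1$ inside $\QQ_1$, an orthonormal eigenbasis $\{h_2,h_5\}$ of $\Gamma_2$ inside $\QQ_2$, and an orthonormal eigenbasis $\{h_3,h_4\}$ of $\Gamma_3$ inside $\QQ_3$. Since the $\QQ_i$ are mutually orthogonal and together span $\Csix$, and since $\Gamma$ respects this decomposition, the six vectors $h_1,\ldots,h_6$ form an orthonormal basis of $\Csix$ consisting of eigenvectors of $\Gamma$ --- i.e.\ natural orbitals of $\psi$ --- and by construction $\QQ_i = \Span\{h_i,h_{7-i}\}$ for $i=1,2,3$, as required. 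I expect no real obstacle: the only step carrying genuine content is the vanishing of the cross-pair elements of the 1RDM (the counting argument, or the one-line symmetry argument), while the rest is bookkeeping and the diagonalization of three $2\times 2$ Hermitian matrices.
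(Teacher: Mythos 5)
Your proposal is correct and follows essentially the same route as the paper's proof: observe that the CI expansion in a basis adapted to the $\QQ_i$ contains only ``balanced'' configurations, conclude that the 1RDM is block-diagonal with three $2\times 2$ blocks, and diagonalize each block to obtain the $h_i$. The only difference is that you spell out (twice, in fact --- via the index-counting argument and via the phase-symmetry argument) the vanishing of the cross-pair matrix elements, which the paper asserts without detail; both of your justifications are sound.
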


\begin{proof} 
Let $\{g_1,\ldots,g_6\}$ be an orthonormal basis of $\Csix$ such that $\QQ_i = \Span\{g_i,g_{7-i}\}$, $i=1,2,3$.
Then the CI expansion of $\psi$ with respect to the orbitals $f_i$ only involves configurations that have exactly one orbital in each pair $\{g_i,g_{7-i}\}$. 
 Because of this, the 1RDM of $\psi$, with respect to the ordered basis $(g_1,g_6,g_2,g_5,g_3,g_4)$, is block-diagonal with three $2\times 2$ blocks.  
 By diagonalizing these blocks one obtains an orthonormal system $\{h_1,\ldots,h_6\}$ of natural orbitals such that 
$\Span\{ h_i,h_{7-i} \} = \Span\{ g_i,g_{7-i} \}$.
\end{proof}

Thus, every wave function in the Borland-Dennis space belongs to at least one $3$-qubit subspace. 
Furthermore, Theorem~\ref{Borland-Dennis Theorem} implies that a {\it generic} wave function in the Borland-Dennis space belongs to a {\it unique} $3$-qubit subspace:  

\begin{Corollary}
\label{$3$-qubit subspace corollary}
A wave function $\psi \in \Borland$ that has distinct natural occupation numbers $\lambda_1 > \cdots > \lambda_6$ belongs to a unique $3$-qubit subspace.
\end{Corollary}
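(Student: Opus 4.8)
The plan is to deduce uniqueness from Lemma~\ref{3-qubit subspace lemma} together with the fact that, when the natural occupation numbers are distinct, the natural orbitals are determined up to phase. First I would let $\Gamma$ denote the 1RDM of $\psi$; by hypothesis its eigenvalues $\lambda_1 > \cdots > \lambda_6$ are distinct, so each eigenspace is one-dimensional, spanned by a natural orbital $h_j$ that is unique up to a phase factor. Suppose $\psi$ belongs to a $3$-qubit subspace $\QQ_1 \wedge \QQ_2 \wedge \QQ_3$. By Lemma~\ref{3-qubit subspace lemma} there is an orthonormal system $\{e_1,\ldots,e_6\}$ of natural orbitals with $\QQ_i = \Span\{e_i, e_{7-i}\}$; since the occupation numbers are distinct, each $e_k$ lies in a one-dimensional eigenspace of $\Gamma$ and is therefore a phase multiple of one of the $h_j$, so each $\QQ_i$ is the span of a pair of the eigenlines of $\Gamma$. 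Thus every $3$-qubit subspace containing $\psi$ arises by partitioning the six natural orbitals $h_1,\ldots,h_6$ into three pairs, each pair spanning one of the factors $\QQ_i$.

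It remains to show the partition is forced, and this is the crux. The key observation is that if $\psi$ lies in a $3$-qubit subspace whose factor $\QQ_i$ is spanned by natural orbitals of occupation numbers $\mu$ and $\mu'$, then $\mu + \mu' = 1$: writing the CI expansion of $\psi$ in reference orbitals adapted to the pairing, every configuration contains exactly one orbital from each pair, so the occupation numbers of the two orbitals in a pair add up to $\|\psi\|^2 = 1$ (equivalently, each of the three diagonal $2\times 2$ blocks of $\Gamma$ appearing in the proof of Lemma~\ref{3-qubit subspace lemma} has trace $1$). Hence the partition of $\{\lambda_1,\ldots,\lambda_6\}$ induced by the pairing consists of three pairs each summing to $1$. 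Because the $\lambda_j$ are distinct, each such pair has one element above $1/2$ and one below, so $\lambda_1,\lambda_2,\lambda_3$ are exactly the three above $1/2$ and $\lambda_4,\lambda_5,\lambda_6$ the three below; since $1-\lambda_1 < 1-\lambda_2 < 1-\lambda_3$ and these are precisely $\lambda_4,\lambda_5,\lambda_6$ in some order, we must have $1-\lambda_1 = \lambda_6$, $1-\lambda_2 = \lambda_5$, $1-\lambda_3 = \lambda_4$ (one may also simply invoke Corollary~\ref{3-representability} for these equalities). Therefore the only admissible pairing is $\{h_1,h_6\},\{h_2,h_5\},\{h_3,h_4\}$.

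Putting these together, $\QQ_i$ must equal $\Span\{h_i,h_{7-i}\}$ for $i=1,2,3$, which is determined by $\psi$ alone, namely by the eigenspaces of its 1RDM. Combined with the Borland--Dennis Theorem, which guarantees at least one such subspace exists, this shows $\psi$ belongs to exactly one $3$-qubit subspace. The main obstacle is the pairing step of the second paragraph: ruling out a re-pairing of the same six natural orbitals into a different triple of mutually orthogonal $2$-dimensional subspaces. This is exactly where distinctness of the occupation numbers is essential — in the presence of a degenerate eigenvalue one could rotate within the degenerate eigenspace, or re-pair its orbitals, and obtain genuinely different $3$-qubit subspaces, so the uniqueness conclusion would fail.
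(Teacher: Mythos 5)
Your proposal is correct and follows essentially the same route as the paper: invoke Lemma~\ref{3-qubit subspace lemma} to replace the factors $\QQ_i$ by spans of pairs of natural orbitals, use distinctness of the $\lambda_j$ to identify those orbitals with the (unique up to phase) eigenlines of the 1RDM, and then argue that the trace-$1$ condition on the three $2\times 2$ blocks forces the pairing $\{h_1,h_6\},\{h_2,h_5\},\{h_3,h_4\}$. The only cosmetic difference is that the paper pins down the pairing via the block determinants $\lambda_j\lambda_{7-j}$, whereas you do it by the equally valid ordering argument on the complements $1-\lambda_j$.
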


\begin{proof} 
Let $u_1,\ldots,u_6$ denote natural orbitals of $\psi$ with natural occupation numbers $\lambda_1,\ldots,\lambda_6$.  
The natural orbitals $u_i$ are unique up to phase by the hypothesis that the natural occupation numbers of $\psi$ are distinct. 
Suppose that $\psi$ belongs to a $3$-qubit subspace $\QQ_1\wedge \QQ_2 \wedge \QQ_3$.  We will prove that 
$\QQ_1\wedge \QQ_2 \wedge \QQ_3$ has to be $ \Span\{ u_1,u_6 \} \wedge \Span\{ u_2,u_5 \} \wedge \Span\{ u_3,u_4 \} $.

Let $\{g_1,\ldots,g_6\}$ be an orthonormal basis of $\Csix$ such that $\QQ_i = \Span\{g_i,g_{7-i}\}$, $i=1,2,3$.   
With respect to the ordered basis $(g_1,g_6,g_2,g_5,g_3,g_4)$, 
the 1RDM of $\psi$ is block-diagonal with three $2\times 2$ blocks.  
As in the proof of Lemma~\ref{3-qubit subspace lemma}, let $\{h_1,\ldots,h_6\}$ be an orthonormal system of natural orbitals such that 
$\Span\{ h_i,h_{7-i} \} = \Span\{ g_i,g_{7-i} \}$.   Let $\mu_i$ denote the natural occupation number corresponding to $h_i$.  
The pair of natural orbitals that belong to $\QQ_i$ is determined by the corresponding $2\times 2$ block of the 1RDM, for the trace of that block is $1$ and its determinant takes one of the three distinct values $\lambda_j\lambda_{7-j}$, $j=1,2,3$.   Thus $\{ \mu_i, \mu_{7-i} \}$ must be one of the pairs $\{ \lambda_j, \lambda_{7-j} \}$, and $\Span\{ h_i,h_{7-i} \} = \Span\{ u_j,u_{7-j} \} $.  
This implies that $\QQ_1\wedge \QQ_2 \wedge \QQ_3 = \wedge_{j=1}^3 \Span\{ u_j,u_{7-j} \}$.
\end{proof}

\subsection{Proofs of the Borland-Dennis Theorem}
\label{Proofs of the Borland-Dennis Theorem}

Borland and Dennis discovered Theorem~\ref{Borland-Dennis Theorem} through numerical experiments.  
M.B. Ruskai and R.L. Kingsley reportedly proved it not long afterwards \cite{borland2}, but their proofs were not published at the time.  
Ruskai's proof was finally published in 2007 \cite{Ruskai}.  
The theorem has been obtained as by-product of deeper theories: we have already mentioned that (1) it is a consequence of Klyachko's selection rule, and (2) it is the first special case of the more general theorem of Ref.~\cite{ChenChenDokovicZeng} that all $3$-fermion wave functions are ``single-occupancy vectors.'' 

We will present two elementary proofs of the Borland-Dennis Theorem, one here and the other in the appendix.   
The SVD Lemma is essential for both proofs.   Both proofs arrive at the Borland-Dennis Theorem by building a $3$-qubit subspace containing an arbitrary wave function.  
The proof given just below does this very elegantly, but in a nonconstructive way.  
The proof in the appendix resembles Ruskai's proof and is constructive: it tells you how to build a $3$-qubit subspace starting from any natural orbital.

To prove the Borland-Dennis Theorem, it suffices, by Lemma~\ref{3-qubit subspace lemma}, to prove that every wave function in $\Borland$ belongs to some $3$-qubit subspace.   
The following proposition exhibits such a $3$-qubit subspace.  

\begin{Proposition}
\label{Proposition for CID}
Let $\chi_*$ be a max-overlap CID approximation of a wave function $\psi \in \Borland$ with CI expansion 
(\ref{canonical form CID}).  
Then $\psi$ can be written as a linear combination of eight configurations in the max-overlap CID orbitals, namely, the four 
configurations in (\ref{canonical form CID}) and their duals, obtained by swapping X's and O's. 
\end{Proposition}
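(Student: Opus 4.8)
The plan is to prove the equivalent statement that $\psi$ lies in the three-qubit subspace $\mathcal S:=\Span\{f_1,f_4\}\wedge\Span\{f_2,f_5\}\wedge\Span\{f_3,f_6\}$, the orbital pairings being those read off from the canonical form (\ref{canonical form CID}); the eight configurations named in the proposition are an orthonormal basis of $\mathcal S$, so the two formulations coincide. Put $\RR=\Span\{f_1,f_2,f_3\}$, so $\RR^\perp=\Span\{f_4,f_5,f_6\}$ and $\chi_*$ lies in the CID subspace with reference space $\RR$. Since $\chi_*$ maximizes the overlap over \emph{all} CID wave functions it maximizes it over those with reference space $\RR$, so by Lemma~\ref{useful little observation} the vector $s\chi_*$ (with $s=\langle\chi_*,\psi\rangle>0$) is the orthogonal projection of $\psi$ onto that ten-dimensional subspace. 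Two consequences: first, because (\ref{canonical form CID}) has only four terms, each of the six double excitations of $f_1\wedge f_2\wedge f_3$ \emph{not} appearing there has coefficient $\langle\,\cdot\,,\psi\rangle=s\langle\,\cdot\,,\chi_*\rangle=0$ in $\psi$; second, writing $\psi=s\chi_*+\psi^\perp$ and noting that the orthogonal complement in $\Borland$ of the CID subspace with reference space $\RR$ is exactly the CID subspace with reference space $\RR^\perp$ (a direct-sum count), the remainder $\psi^\perp$ is CID with reference space $\RR^\perp$, hence a linear combination of $f_4\wedge f_5\wedge f_6$ and the nine single excitations of $f_1\wedge f_2\wedge f_3$.

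It remains to show that, of those nine single excitations, the six ``off-diagonal'' ones — $f_1\wedge f_2\wedge f_4$, $f_1\wedge f_2\wedge f_5$, $f_1\wedge f_3\wedge f_4$, $f_1\wedge f_3\wedge f_6$, $f_2\wedge f_3\wedge f_5$, $f_2\wedge f_3\wedge f_6$ — have coefficient $0$ in $\psi$ (the other three, $f_1\wedge f_2\wedge f_6$, $f_1\wedge f_3\wedge f_5$, $f_2\wedge f_3\wedge f_4$, are the duals of the three doubles in (\ref{canonical form CID}) and lie in $\mathcal S$). For this I would use that, among all three-dimensional subspaces $\RR'\subset\Csix$, the squared length of the projection of $\psi$ onto the CID subspace with reference space $\RR'$ is maximal at $\RR'=\RR$ — again just the defining property of $\chi_*$ through Lemma~\ref{useful little observation}. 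I would differentiate this squared length along the family of reference spaces got by rotating a reference orbital $f_i$ ($i\in\{1,2,3\}$) toward a non-reference orbital $f_{j+3}$ with $j\neq i$ (a ``cross-pair'' rotation). Writing $\Pi_t$ for the projection onto the CID subspace of the rotated reference space and using the standard identity $\dot\Pi_0=\Pi_0\dot\Pi_0(I-\Pi_0)+(I-\Pi_0)\dot\Pi_0\Pi_0$, the derivative equals $2s\,\mathrm{Re}\,\langle\psi^\perp,\dot\Pi_0\chi_*\rangle$; and a short computation of how each term of (\ref{canonical form CID}) rotates gives $\dot\Pi_0(f_a\wedge f_b\wedge f_c)=\pm(f_a\wedge f_b\wedge f_c$ with $f_i$ replaced by $f_{j+3})$ when $f_i$ but not $f_{j+3}$ occurs among the factors, $\mp$(the reverse replacement) when $f_{j+3}$ but not $f_i$ occurs, and $0$ otherwise. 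Carrying this through, each cross-pair rotation links exactly two of the off-diagonal coefficients, and stationarity reads $A_1 C=\pm A_k C'$ with $C,C'$ a linked pair and some $k\in\{2,3,4\}$; the ``partner'' rotation gives $A_1 C'=\pm A_k C$, and the imaginary rotations $f_i\mapsto f_i+\mathrm{i}t\,f_{j+3}$ supply the full complex relations. Combining the paired relations yields $(A_1^2\mp A_k^2)\,C=0$, hence $C=0$ as soon as $A_1\neq A_k$. Since $A_1\ge A_2\ge A_3\ge A_4\ge0$, this holds for each $k\ge2$ when $A_1>A_2$, so then all six off-diagonal coefficients vanish and $\psi\in\mathcal S$.

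This proves the proposition whenever the leading coefficient in (\ref{canonical form CID}) is strict, $A_1>A_2$. The main obstacle is the degenerate case $A_1=A_2$: there each cross-pair rotation returns only the empty relation $0=0$ in place of the identity needed to annihilate the coefficients it links, and the within-pair rotations $f_i\leftrightarrow f_{i+3}$ contribute only identities among the surviving (dual) coefficients, so first-order optimality no longer suffices. I would dispose of this case by a limiting argument: wave functions whose max-overlap CID approximation has $A_1>A_2$ are dense, the set of three-qubit subspaces of $\Borland$ is compact, and membership of a fixed vector in a three-qubit subspace is a closed condition, so $\psi$ still lies in \emph{some} three-qubit subspace — which is all that is needed for the intended application to the Borland--Dennis Theorem. (Certain sub-cases admit shortcuts: if in addition $A_3=A_4=0$ then $f_1$ is frozen in $\chi_*$, so $\chi_*$ is low-rank and Proposition~\ref{max-overlap rank-5 proposition} applies directly to $\psi$.) The only other delicate point is the sign bookkeeping in the rotation of wedge products and in the formula for $\dot\Pi_0$, but, as indicated above, those signs only decide whether one lands on $(A_1^2-A_k^2)C=0$ or on the still stronger $(A_1^2+A_k^2)C=0$, so they do not affect the conclusion.
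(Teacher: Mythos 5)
Your first step (projecting onto the CID subspace with reference space $\RR=\Span\{f_1,f_2,f_3\}$ to kill the six unwanted double excitations of $f_1\wedge f_2\wedge f_3$) is correct and matches the paper. But from there you miss the idea that makes the rest of the proof a one-liner: the canonical form (\ref{canonical form CID}) is symmetric under permuting its four configurations, so $\chi_*$ lies \emph{simultaneously} in the four CID subspaces $\Ffrak_{123}$, $\Ffrak_{156}$, $\Ffrak_{246}$, $\Ffrak_{345}$ whose reference configurations are the four rows of (\ref{canonical form CID}). Each of these subspaces consists entirely of CID wave functions, so Lemma~\ref{useful little observation} forces the projection of $\psi$ onto each of them to be proportional to $\chi_*$, and the four applications together annihilate all twelve configurations outside the claimed set of eight. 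This zeroth-order argument requires no stationarity computation, no sign bookkeeping, and no case distinction on the coefficients $A_i$; the six ``off-diagonal'' single excitations of $f_1\wedge f_2\wedge f_3$ that you struggle to eliminate are simply double excitations of $f_1\wedge f_5\wedge f_6$, $f_2\wedge f_4\wedge f_6$, or $f_3\wedge f_4\wedge f_5$ not appearing in (\ref{canonical form CID}).

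Because you instead rely on first-order optimality of the reference space, your argument has a genuine gap at $A_1=A_2$, and the proposed repair does not close it. First, the density claim (``wave functions whose max-overlap CID approximation has $A_1>A_2$ are dense'') is asserted without proof and is not routine: it requires some continuity or semicontinuity of the max-overlap CID approximation in $\psi$, which you have not established. Second, and more decisively, even granting the limiting argument, compactness of the family of $3$-qubit subspaces only yields that $\psi$ lies in \emph{some} $3$-qubit subspace along a subsequential limit; the proposition asserts that $\psi$ is a combination of the eight configurations built from the \emph{given} max-overlap CID orbitals $f_1,\ldots,f_6$ of (\ref{canonical form CID}). You concede this yourself (``which is all that is needed for the intended application''), but that weaker statement is not the statement being proved, so the degenerate case remains open in your write-up. (The parenthetical shortcut for $A_3=A_4=0$ is likewise incomplete: Proposition~\ref{max-overlap rank-5 proposition} gives a strongly orthogonal decomposition of $\psi$, but you would still have to identify the resulting orbitals with $f_4,f_5,f_6$ to land in the stated $3$-qubit subspace.) The generic-case stationarity computation itself appears workable, but as written it is an outline whose relations are asserted rather than derived, and it buys nothing over the paper's uniform symmetry argument.
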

\begin{proof}

Thanks to the symmetry of the form (\ref{canonical form CID}), any of its four configurations could serve as the reference configuration.  
That is,  $\chi_*$ belongs to the CID subspace
(\ref{CID subspace}) with reference space $\RR_{123} = \hbox{span}\{ f_1 , f_2 , f_3 \}$; but it also belongs to the CID subpsace 
with reference space $\RR_{156} = \hbox{span}\{ f_1,  f_5 , f_6 \}$, as well as the CID subspaces $\Ffrak_{246}$ and $\Ffrak_{345}$, similarly defined.  

Because $\chi_*$ is a max-overlap CID approximation of $\psi$, it must be proportional to the projections of $\psi$ onto each of the subspaces 
$\Ffrak_{123} , \Ffrak_{156}, \Ffrak _{246},$ and $\Ffrak _{345}$.  Therefore, the CI expansion of $\psi$ with respect to the orbitals $f_1, \ldots,f_6$ cannot contain any double excitations of $f_1\wedge f_2 \wedge f_3$ or $f_1\wedge f_5\wedge f_6$ or $f_2\wedge f_4 \wedge f_6$ or $f_3\wedge f_4 \wedge f_5$ (besides the ones already in this list).  For example,  since $f_1\wedge f_4 \wedge f_5$ is a double excitation from the reference $f_1\wedge f_2 \wedge f_3$ that does not already appear in (\ref{canonical form CID}), its coefficient in the CI expansion of $\psi$ must equal $0$.  Striking off all such configurations from the list of all $20$ possible configurations leaves only the eight described in the statement of the theorem.  
\end{proof}

Proposition~\ref{Proposition for CID} implies that any $\psi \in \Borland$ belongs to the $3$-qubit subspace 
\[
    \Span\{f_1,f_4\}\wedge \Span\{f_2,f_5\} \wedge \Span\{f_3,f_6\}
\]
where the orbitals $f_i$ are the reference orbitals for the canonical form (\ref{canonical form CID}) of $\chi_*$. 
Theorem~\ref{Borland-Dennis Theorem} follows by Lemma~\ref{3-qubit subspace lemma}.


\section{The 5-term canonical forms}
\label{The 5-term canonical forms}


\subsection{The symmetric five-term canonical form}
\label{The symmetric five-term canonical form}

Recall that orthonormal orbitals $f_1, f_2,f_3$ are called ``best-overlap orbitals'' \cite{KutzelniggSmith} for a wave function $\psi \in \Borland$ if $f_1 \wedge f_2 \wedge  f_3$ is a 
max-overlap Slater determinant approximation of $\psi$. 
 In a CI expansion with respect to a reference basis that includes best-overlap orbitals, configurations that are singly excited from the reference configuration do not appear, cf., Lemma~\ref{Brueckner lemma}.   Thanks to these ``Brueckner conditions'' \cite{Nesbet,Larsson,Ortiz} and a bit of algebra, one can show that any wave function in the Borland-Dennis space has a symmetric $5$-term CI expansion.

\begin{Theorem}
\label{Theorem for Slater}
Let $\psi$ be a wave function in the Borland-Dennis space and let $\chi_*$ be a Slater determinant such that 
\[
      \langle \chi_*,\psi \rangle \ \ge \ \big|\langle \chi, \psi  \rangle\big|
\]
for all other Slater determinants $\chi$.  
Then there exists an orthonormal basis $ \{f_1, f_2, f_3 ,g_1, g_2, g_3 \}$ of $\Csix$ such that 
$\chi_* = g_1 \wedge  g_2 \wedge  g_3$ and 
\[
   \psi   \equals     \langle  \chi_* , \psi  \rangle \  g_1 \wedge  g_2 \wedge  g_3 \plus   A\  f_1 \wedge f_2 \wedge f_3  \plus  B_1\  g_1 \wedge f_2 \wedge f_3  \plus  B_2 \  f_1 \wedge g_2 \wedge f_3   \plus  B_3 \  f_1 \wedge f_2 \wedge g_3\ .
\]
\end{Theorem}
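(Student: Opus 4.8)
The plan is to derive the expansion from two results already in hand: the Brueckner conditions of Lemma~\ref{Brueckner lemma} and the SVD Lemma for the Borland-Dennis setting, Lemma~\ref{SVD lemma}. The bridge between them is the particle--hole symmetry of $\Borland$.

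First I would set up coordinates around the best-overlap Slater determinant. Write $\chi_* = g_1 \wedge g_2 \wedge g_3$ with $\{g_1,g_2,g_3\}$ orthonormal, and let $\RR = \Span\{g_1,g_2,g_3\}$ and let $\RR^\perp$ be its orthogonal complement in $\Csix$. Since $g_1 \wedge g_2 \wedge g_3$ has maximal overlap with $\psi$ among all Slater determinants, Lemma~\ref{Brueckner lemma} (in the coordinate-free form noted just after its proof, with $n=3$) says that $\psi$ is orthogonal to $(\wedge^2 \RR)\wedge\RR^\perp = \RR\wedge\RR\wedge\RR^\perp$. In the orthogonal decomposition
\[
  \Borland = \wedge^3\RR \ \oplus\ \big(\RR\wedge\RR\wedge\RR^\perp\big) \ \oplus\ \big(\RR\wedge\RR^\perp\wedge\RR^\perp\big) \ \oplus\ \wedge^3\RR^\perp ,
\]
this means $\psi$ has no component in the second summand, so $\psi$ lies in $\wedge^3\RR \oplus (\RR\wedge\RR^\perp\wedge\RR^\perp) \oplus \wedge^3\RR^\perp$.

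The key observation is that this last subspace is exactly the ``singles-free subspace'' (\ref{singles-free subspace}) of Lemma~\ref{SVD lemma}, but with the reference space taken to be $\RR^\perp$ rather than $\RR$: a configuration lying in $\RR\wedge\RR^\perp\wedge\RR^\perp$ that is ``doubly excited'' from a reference configuration spanning $\wedge^3\RR$ is one and the same as a configuration ``singly excited'' from a reference configuration spanning $\wedge^3\RR^\perp$, and $\RR\wedge\RR^\perp\wedge\RR^\perp = \RR^\perp\wedge\RR^\perp\wedge\RR$ as subspaces. I would therefore apply Lemma~\ref{SVD lemma} with $\RR$ and $\RR^\perp$ interchanged: there are orthonormal bases $\{f_1,f_2,f_3\}$ of $\RR^\perp$ and $\{g_1,g_2,g_3\}$ of $\RR$, and coefficients $A,B_1,B_2,B_3,D$, with
\[
  \psi = A\, f_1\wedge f_2\wedge f_3 + B_1\, g_1\wedge f_2\wedge f_3 + B_2\, f_1\wedge g_2\wedge f_3 + B_3\, f_1\wedge f_2\wedge g_3 + D\, g_1\wedge g_2\wedge g_3 ,
\]
which already has the symmetric shape claimed in the theorem.

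Finally I would identify the last term. The vector $g_1\wedge g_2\wedge g_3$ spans the one-dimensional space $\wedge^3\RR$, which is also spanned by $\chi_*$, so $g_1\wedge g_2\wedge g_3 = e^{i\theta}\chi_*$ for some phase; after replacing $g_1$ by $e^{-i\theta}g_1$ (still an orthonormal basis of $\RR$) we may assume $g_1\wedge g_2\wedge g_3 = \chi_*$. Every other term on the right-hand side contains at least one factor from $\RR^\perp$ and is therefore orthogonal to $\chi_*$, so taking the inner product of $\psi$ with $\chi_*$ yields $D = \langle\chi_*,\psi\rangle$, and the expansion is exactly the one asserted. I do not expect a real obstacle here; the only places needing a little care are the particle--hole relabeling (checking that the symmetric shape of (\ref{5-term expansion}) survives $\RR\leftrightarrow\RR^\perp$, which in the all-plus-sign convention used here is immediate) and the innocuous rephasing that makes the leading coefficient literally $\langle\chi_*,\psi\rangle$ rather than a unimodular multiple of it.
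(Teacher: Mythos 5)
Your proposal is correct and follows essentially the same route as the paper's proof: Lemma~\ref{Brueckner lemma} eliminates the single excitations of $\chi_*$, Lemma~\ref{SVD lemma} (applied with the best-overlap span playing the role of $\RR^\perp$) produces the symmetric five-term form, and a rephasing identifies the coefficient of $\chi_*$ as $\langle\chi_*,\psi\rangle$. The only difference is notational --- the paper names the span of the best-overlap orbitals $\RR^\perp$ from the outset, so no explicit $\RR\leftrightarrow\RR^\perp$ swap is needed.
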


\begin{proof}
Let $g'_1, g'_2,$ and $g'_3$ be orthonormal orbitals such that $\chi_* = g'_1 \wedge  g'_2 \wedge  g'_3$.   
Let  $\RR$ denote the orthogonal complement of  $\RR^\perp = \Span\{g'_1, g'_2, g'_3 \}$, 
and let $\{ f'_1, f'_2, f'_3 \} $ be an orthonormal basis of $\RR$.
By Lemma~\ref{Brueckner lemma}, the CI expansion of $\psi$ with respect to the ordered basis $(f'_1, f'_2, f'_3 ,g'_1, g'_2, g'_3)$
lacks all nine configurations $f'_i \wedge  g'_j \wedge  g'_k $ that are singly-excited from $g'_1 \wedge  g'_2 \wedge  g'_3$.    
This means that $\psi$ belongs to the subspace (\ref{singles-free subspace}) of $\Borland$, and can therefore be written as in (\ref{5-term expansion}). 
The phases of the orbitals $g_i$ can be chosen {\it ad libitum} without changing the form of the CI expansion; let them be adjusted so that $g_1 \wedge  g_2 \wedge  g_3$ equals $\chi_*$.  
The coefficient $D$ in (\ref{5-term expansion}) is then equal to $\langle \chi_*, \psi \rangle$.  
\end{proof}

Theorem~\ref{Theorem for Slater} implies that every Borland-Dennis wave function $\psi$ is U(6)-equivalent to a wave function 
\begin{equation} 
\label{Chen et al. canonical form}  
a\ e_2 \wedge  e_3 \wedge  e_5 \plus   b\  e_1 \wedge  e_4 \wedge  e_5  \plus  c\ e_1 \wedge  e_3 \wedge  e_6  \plus
d \ e_2 \wedge  e_4 \wedge  e_6  \plus    z \ e_1 \wedge  e_3 \wedge  e_5 
\end{equation}
such that $a \ge b \ge c \ge 0$, $\mathrm{Re}(z) \ge 0$, and $d$ is equal to the maximum overlap of $\psi$ with a Slater determinant.
Chen {\it et al.} \cite{ChenDokovicGrasslZeng} show that the standardized wave function in (\ref{Chen et al. canonical form}) can serve as a canonical form for wave functions in the Borland-Dennis setting.  
It is a ``canonical'' form in the sense that wave functions with different canonical forms are not equivalent, except perhaps when their representatives are on the boundary of the ``canonical region.''   The canonical region is demarcated by a few polynomial inequalities in the coefficients $a,b,c,d,$ and $z$.

\subsection{Lone-orbital expansions}
\label{Lone-orbital expansions}

In this section we construct the analog in the Borland-Dennis setting of the ``generalized Schmidt decomposition'' for $3$-qubits \cite{AcinAndrianovJaneTarrach 2000, AcinAndrianovJaneTarrach 2001}.  
We call it the ``lone-orbital'' CI expansion because it features an solitary orbital that deigns to participate in at most one configuration.

\begin{Theorem}
\label{lone orbital theorem}
Let $w_1$ and $w_2$ be two orthonormal orbitals in $\Csix$ and let $\MM$ denote the orthogonal complement of $\Span\{ w_1,w_2 \}$ in $\Csix$.   
Suppose 
\begin{equation}
\label{partitioned}
\psi \equals  w_1 \wedge  \gamma_1 \plus  w_2 \wedge \gamma_2
\end{equation}
is a wave function in $\Borland$ such that $\gamma_1, \gamma_2 \in  \wedge^2 \MM$.
Then there exists an orthonormal system of reference orbitals $( l_1, l_2, f_1,f_2,g_1,g_2)$ with respect to which $\psi$ has CI expansion 
\begin{equation}
\label{lone orbital expansion}
\begin{matrix}
 l_1 & l_2  & f_1 & f_2 & g_1 & g_2 & \\
\mathrm{X} &  \mathrm{O} & \mathrm{X} & \mathrm{X} & \mathrm{O} & \mathrm{O} & \qquad A   \\
\mathrm{O} & \mathrm{X} & \mathrm{X} & \mathrm{X} & \mathrm{O} & \mathrm{O} &  \qquad B_1  \\
\mathrm{O} & \mathrm{X} & \mathrm{O} & \mathrm{O} & \mathrm{X} & \mathrm{X} & \qquad B_2 \\
\mathrm{O} & \mathrm{X} & \mathrm{X} & \mathrm{O} & \mathrm{X} & \mathrm{O} & \qquad C_1   \\
\mathrm{O} & \mathrm{X} & \mathrm{O} & \mathrm{X} & \mathrm{O} & \mathrm{X} &  \qquad C_2 \\
\end{matrix}
\end{equation}
and such that $ \Span\{ l_1, l_2 \} = \Span\{ w_1,w_2\}$.
\end{Theorem}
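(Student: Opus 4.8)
The plan is to exploit two gauge freedoms in turn: the freedom to change the orthonormal basis of $\Span\{w_1,w_2\}$, and then the freedom to change the orthonormal basis of $\MM$. Writing $w_i = \sum_j U_{ij} l_j$ for a unitary $U$ rewrites $\psi$ as $l_1 \wedge \delta_1 + l_2 \wedge \delta_2$, where $\delta_1,\delta_2 \in \Span\{\gamma_1,\gamma_2\}\subseteq\wedge^2\MM$; the purpose of this first move is to make $\delta_1$ \emph{decomposable}. If $\gamma_1$ and $\gamma_2$ are linearly dependent I would choose $U$ so that $\delta_1 = 0$. Otherwise, since $\MM\cong\mathbb{C}^4$, Lemma~\ref{decomposable in span} furnishes a nonzero decomposable bivector in $\Span\{\gamma_1,\gamma_2\}$, and I would take the first column of $U$ proportional to its coefficient vector (and complete $U$ to a unitary matrix), so that $\delta_1$ becomes a scalar multiple of that decomposable bivector. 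Either way we arrive at $\psi = l_1\wedge\delta_1 + l_2\wedge\delta_2$ with $\delta_1$ decomposable, $\delta_2\in\wedge^2\MM$, and $\Span\{l_1,l_2\} = \Span\{w_1,w_2\}$.

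Next I would let $P$ be the two-dimensional support of $\delta_1$ (or any two-dimensional subspace of $\MM$ if $\delta_1 = 0$) and let $P^\perp$ be its orthogonal complement within $\MM$. Splitting $\delta_2$ according to the decomposition $\wedge^2\MM = \wedge^2 P \;\oplus\; (P\wedge P^\perp) \;\oplus\; \wedge^2 P^\perp$ gives $\delta_2 = \delta_2^{PP} + \delta_2^{\mathrm{mix}} + \delta_2^{P^\perp P^\perp}$. A singular value decomposition of the mixed part $\delta_2^{\mathrm{mix}}\in P\wedge P^\perp$ then yields orthonormal bases $\{f_1,f_2\}$ of $P$ and $\{g_1,g_2\}$ of $P^\perp$ with $\delta_2^{\mathrm{mix}} = C_1\, f_1\wedge g_1 + C_2\, f_2\wedge g_2$ (the same bivector SVD that underlies Lemma~\ref{canonical geminal}). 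Since $\wedge^2 P$ and $\wedge^2 P^\perp$ are one-dimensional, these rotations of $\MM$ leave them invariant up to a phase, so automatically $\delta_2^{PP} = B_1\, f_1\wedge f_2$ and $\delta_2^{P^\perp P^\perp} = B_2\, g_1\wedge g_2$ for some scalars $B_1,B_2$, while $\delta_1 = (\mathrm{phase})\,A\, f_1\wedge f_2$ for some $A\ge 0$; absorbing that phase into $l_1$ (which does not change $\Span\{l_1,l_2\}$) leaves $\delta_1 = A\, f_1\wedge f_2$.

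Substituting back then gives
\[
 \psi \equals A\, l_1\wedge f_1\wedge f_2 \plus l_2\wedge\big( B_1\, f_1\wedge f_2 + C_1\, f_1\wedge g_1 + C_2\, f_2\wedge g_2 + B_2\, g_1\wedge g_2 \big),
\]
which is precisely the CI expansion (\ref{lone orbital expansion}) in the reference orbitals $(l_1,l_2,f_1,f_2,g_1,g_2)$, with $\Span\{l_1,l_2\}=\Span\{w_1,w_2\}$ as required. The one point I expect to need care is the interaction between the two rotation steps: after the first rotation has fixed the decomposable direction $P$, the residual freedom is exactly the independent unitary rotations of $\MM$ inside $P$ and inside $P^\perp$, and one must check that this is precisely enough freedom to diagonalize $\delta_2^{\mathrm{mix}}$ by SVD while the one-dimensionality of $\wedge^2 P$ and $\wedge^2 P^\perp$ keeps the remaining pieces of $\delta_2$ of the admissible shape. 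The rest is routine bookkeeping with wedge products.
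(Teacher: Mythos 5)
Your proposal is correct and follows essentially the same route as the paper's proof: first rotate within $\Span\{w_1,w_2\}$ (using Lemma~\ref{decomposable in span}) so that the geminal paired with $l_1$ becomes decomposable, then apply an SVD to the mixed block of the remaining geminal, noting that the one-dimensionality of $\wedge^2 P$ and $\wedge^2 P^\perp$ keeps the other pieces in the admissible shape. The only (harmless) difference is that you treat the linearly dependent case explicitly, which the paper leaves implicit.
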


\begin{proof}
First suppose $\gamma_1$ and $ \gamma_2 $ are linearly independent.  Then, by Lemma \ref{decomposable in span}, $\Span\{\gamma_1, \gamma_2\}$ contains a Slater determinant geminal $ f'_1\wedge  f'_2$, where $ f'_1$ and $ f'_2$ are orthonormal orbitals in $\WW$.  

Let $A_1, A_2$ be scalars such that $|A_1|^2+|A_2|^2=1$ and $A_1\gamma_1 + A_2 \gamma_2 =  A f'_1\wedge  f'_2$ with $A = \| A_1\gamma_1 + A_2 \gamma_2 \|$.  
Define orbitals $l_1$ and $l_2$ by the unitary transformation
\[
   \begin{bmatrix} w_1 \\ w_2 \end{bmatrix}
    \equals 
   \begin{bmatrix}  A_1 &  - \overline{A}_2 \\    A_2 & \ \overline{A}_1   \end{bmatrix} 
   \begin{bmatrix} l_1 \\ l_2  \end{bmatrix} 
\]
Then 
\begin{eqnarray*}
\psi \equals (A_1 l_1 - \overline{A}_2 l_2 ) \wedge  \gamma_1  +  (  A_2 l_1 + \overline{A}_1 l_2) \wedge  \gamma_2  \nonumber 
             & \equals  &  l_1  \wedge   ( A_1 \gamma_1 +  A_2 \gamma_2 ) +   l_2 \wedge   ( \overline{A}_1 \gamma_2-  \overline{A}_2 \gamma_1  )   \nonumber \\
              & \equals & A \ l_1  \wedge   f'_1 \wedge  f'_2 \plus   l_2 \wedge   \gamma
\end{eqnarray*}
where $\gamma = \overline{A}_1 \gamma_2-  \overline{A}_2 \gamma_1 $.

The CI expansion of $\psi$ with respect to the ordered basis $(l_1, l_2, f'_1,f'_2,g'_1,g'_2)$ is then  
\begin{equation}
\label{ready for SVD}
\begin{matrix}
 l_1 & l_2 & f'_1 & f'_2 & g'_1 & g'_2 & \quad &  \\
  \mathrm{X} &  \mathrm{O} &  \mathrm{X} & \mathrm{X} & \mathrm{O} & \mathrm{O} & &  A   \\ 
  \mathrm{O} &  \mathrm{X} & \mathrm{X} & \mathrm{X} & \mathrm{O} & \mathrm{O}  & &  B_1  \\
  \mathrm{O} &  \mathrm{X} & \mathrm{O} & \mathrm{O} & \mathrm{X} & \mathrm{X}  & &  B_2     \\ 
 \mathrm{O}  &  \mathrm{X} & \mathrm{X} & \mathrm{O} & \mathrm{X} & \mathrm{O} & &   M_{11}  \\
 \mathrm{O}  &  \mathrm{X} & \mathrm{X} & \mathrm{O} & \mathrm{O} & \mathrm{X}  & &   M_{12}  \\ 
 \mathrm{O} &  \mathrm{X} & \mathrm{O} & \mathrm{X} & \mathrm{X} & \mathrm{O}  & &   M_{21}  \\ 
 \mathrm{O} &  \mathrm{X}  & \mathrm{O} & \mathrm{X} & \mathrm{O} & \mathrm{X} & &   M_{22}  \\
\end{matrix}
\end{equation}
The sum of the last four lines of (\ref{ready for SVD}) is $l_2 \wedge \sum\limits_{i,j=1}^2 M_{ij} \  f'_i\wedge  g'_j$.
This can be reduced to a sum of two terms by using the singular value decomposition of the matrix ${\bf  M} = (M_{ij})$.
 Let ${\bf U}= (U_{ij})$ and ${\bf V}= (V_{ij}) $ be unitary matrices such that $ {\bf U}^* {\bf M} {\bf V} $ is the $2\times 2$ diagonal matrix with diagonal entries $C_1$ and $C_2$.  
We may assume that $\det({\bf U})$ and $\det({\bf V}) $ are both equal to $1$ because we don't require $C_1$ and $C_2$ to be positive.
Define new orbitals $f_1,f_2,g_1,$ and $g_2$ by the unitary transformations
\begin{equation}
 \label{these substitutions}
  f_i = \sum_{k=1,2}  \overline{U}_{ik} f'_k
          \qquad \hbox{and} \qquad 
  g_j = \sum_{k=1,2}  V_{jk} g'_k
\end{equation}
Substituting (\ref{these substitutions}) into (\ref{ready for SVD}) transforms the sum of its last four terms into 
$ C_1 \ l_2 \wedge f_1  \wedge  g_1 + C_2 \ l_2 \wedge f_2  \wedge  g_2 $  
without changing the coefficients $A,B_1$, and $B_2$.  
The CI expansion (\ref{lone orbital expansion}) results.  
\end{proof}

Any wave function in $\Borland$ can be written as in (\ref{partitioned}) by Proposition~\ref{max-overlap rank-5 proposition}, or by the Borland-Dennis Theorem, and therefore:
\begin{Corollary}
\label{lone orbital existence}
Every wave function in $\Borland$ has CI expansions of the form (\ref{lone orbital expansion}).
\end{Corollary}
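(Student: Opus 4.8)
The plan is to deduce this immediately from Theorem~\ref{lone orbital theorem}. That theorem already produces a CI expansion of the shape (\ref{lone orbital expansion}) for any wave function presented in the partitioned form (\ref{partitioned}), so all that remains is to verify that \emph{every} $\psi \in \Borland$ admits a decomposition $\psi = w_1 \wedge \gamma_1 + w_2 \wedge \gamma_2$ with $w_1, w_2$ orthonormal orbitals and $\gamma_1, \gamma_2 \in \wedge^2 \MM$, where $\MM = \mathrm{Span}\{w_1,w_2\}^\perp$. There are two short routes to such a decomposition, and I would present whichever reads more cleanly; neither is difficult.

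The quickest route uses the Borland-Dennis Theorem. Given $\psi$, Theorem~\ref{Borland-Dennis Theorem} supplies natural orbitals $h_1,\ldots,h_6$ relative to which $\psi$ is the sum of the eight configurations displayed in (\ref{eight}). Each of those configurations contains exactly one of $h_1, h_6$. Collecting the four configurations that contain $h_1$ and factoring $h_1$ out, and likewise collecting the four that contain $h_6$, gives $\psi = h_1 \wedge \gamma_1 + h_6 \wedge \gamma_6$, where $\gamma_1$ and $\gamma_6$ are bivectors built from $h_2, h_3, h_4, h_5$ alone, i.e. $\gamma_1, \gamma_6 \in \wedge^2 \mathrm{Span}\{h_2,h_3,h_4,h_5\} = \wedge^2\big(\mathrm{Span}\{h_1,h_6\}^\perp\big)$. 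Taking $w_1 = h_1$, $w_2 = h_6$, and $\MM = \mathrm{Span}\{h_2,h_3,h_4,h_5\}$ puts $\psi$ in the form (\ref{partitioned}), and Theorem~\ref{lone orbital theorem} then yields an expansion of the form (\ref{lone orbital expansion}).

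Alternatively, one can invoke Proposition~\ref{max-overlap rank-5 proposition}. Let $\chi_*$ be a max-overlap low-rank approximation of $\psi$; the proposition gives orthonormal orbitals $w$ and $w'$ and strongly orthogonal bivectors $\gamma, \gamma'$ with $\psi = w \wedge \gamma + w' \wedge \gamma'$, and its proof shows more precisely that $\gamma$ and $\gamma'$ both lie in $\wedge^2 \LL$, where $\LL$ is the $4$-dimensional orthogonal complement of $\mathrm{Span}\{w,w'\}$ in $\Csix$. This is exactly the partitioned form (\ref{partitioned}) with $w_1 = w$, $w_2 = w'$, $\MM = \LL$, so Theorem~\ref{lone orbital theorem} again applies.

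There is essentially no obstacle here; the only point requiring a moment's care is to record that in the partition (\ref{partitioned}) the two geminals must lie in the \emph{common} $4$-dimensional subspace $\mathrm{Span}\{w_1,w_2\}^\perp$, and not merely be strongly orthogonal to their respective lone orbitals $w_1, w_2$. Both decompositions above manifestly satisfy this: in the Borland-Dennis route it is visible directly from the diagram (\ref{eight}), and in the Proposition~\ref{max-overlap rank-5 proposition} route it is read off from the construction of $\gamma$ and $\gamma'$ inside $\wedge^2\LL$ in that proposition's proof. Hence every wave function in $\Borland$ has a CI expansion of the form (\ref{lone orbital expansion}).
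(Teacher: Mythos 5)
Your proposal is correct and follows the paper's own (very brief) argument exactly: the paper likewise observes that the partitioned form (\ref{partitioned}) is available either from Proposition~\ref{max-overlap rank-5 proposition} or from the Borland-Dennis Theorem, and then applies Theorem~\ref{lone orbital theorem}. Your extra remark that the two geminals must lie in the common $4$-dimensional complement of $\Span\{w_1,w_2\}$ is a worthwhile point of care that both of your routes do satisfy.
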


\begin{Corollary}
\label{all natural}
All six natural occupation numbers of a wave function $\psi \in \wedge^3 \mathbb{C}^6$ equal $1/2$ 
if and only $\psi$ has a CI expansion of the form 
$\tfrac{1}{\sqrt{2}} \big( u_1\wedge  u_2 \wedge  u_3 \plus   u_4\wedge  u_5 \wedge  u_6 \big)$.
\end{Corollary}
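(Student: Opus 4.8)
The plan is to prove the two implications separately: the ``if'' direction is a short one-particle-reduced-density-matrix (1RDM) computation, while the ``only if'' direction is the one with content, and I would attack it by feeding an arbitrary lone-orbital CI expansion of $\psi$ (Corollary~\ref{lone orbital existence}) into the spectral hypothesis.

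For the ``if'' direction, suppose $\psi = \tfrac{1}{\sqrt2}(u_1\wedge u_2\wedge u_3 + u_4\wedge u_5\wedge u_6)$ with $\{u_1,\dots,u_6\}$ orthonormal. I would expand $3\,\mathrm{Tr}_{2,3}|\psi\rangle\langle\psi|$ into four pieces. The two ``diagonal'' pieces contribute $\tfrac12$ times the 1RDM of each of the two Slater determinants, i.e.\ $\tfrac12\sum_{i=1}^{6}|u_i\rangle\langle u_i| = \tfrac12 I$. The two cross pieces, each of the form $3\,\mathrm{Tr}_{2,3}|u_1\wedge u_2\wedge u_3\rangle\langle u_4\wedge u_5\wedge u_6|$, vanish because the two Slater determinants are built from disjoint orbitals, and a transition one-particle density matrix between Slater determinants differing in two or more orbitals is zero. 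Hence $\Gamma = \tfrac12 I$ and all six natural occupation numbers equal $1/2$.

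For the ``only if'' direction, the hypothesis says precisely that the 1RDM $\Gamma$ equals $\tfrac12 I$, so $\langle f,\Gamma f\rangle = 1/2$ for every unit vector $f$, in particular for every member of any orthonormal reference system. By Corollary~\ref{lone orbital existence}, $\psi$ has a CI expansion of the form (\ref{lone orbital expansion}) in some orthonormal reference system $(l_1,l_2,f_1,f_2,g_1,g_2)$. The diagonal element $\langle f,\Gamma f\rangle$ is the occupation probability of $f$, namely the sum of the squared moduli of the coefficients of the configurations containing $f$; reading this off the configuration diagram (\ref{lone orbital expansion}) I would argue as follows. The orbital $l_1$ lies in a single configuration, so $|A|^2 = 1/2$. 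The orbital $f_1$ lies in the configurations with coefficients $A, B_1, C_1$, so $|A|^2 + |B_1|^2 + |C_1|^2 = 1/2$ forces $B_1 = C_1 = 0$. The orbital $f_2$ lies in the configurations with coefficients $A, B_1, C_2$, so likewise $C_2 = 0$. Only the two configurations with coefficients $A$ and $B_2$ survive; since $|A| = 1/\sqrt2$, normalization forces $|B_2| = 1/\sqrt2$ as well. Absorbing the phases of $A$ and $B_2$ into $l_1$ and $l_2$ then exhibits $\psi$ in the stated form $\tfrac{1}{\sqrt2}(u_1\wedge u_2\wedge u_3 + u_4\wedge u_5\wedge u_6)$.

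The substance of the argument --- not really an obstacle, but the step one must get right --- is the choice of normal form. A generic CI expansion turns $\Gamma = \tfrac12 I$ into an unwieldy system of quadratic equations in the coefficients, whereas the lone-orbital expansion isolates an orbital ($l_1$) appearing in only one configuration, so the occupation constraint pins $|A|^2$ at once and then cascades to kill $B_1, C_1, C_2$. The remaining care is routine: squared moduli are nonnegative (so the vanishing of $B_1, C_1, C_2$ is genuine), the cross terms in the ``if'' direction really do vanish, and one must keep the row/orbital bookkeeping of the diagram straight. Note that no rank hypothesis is imposed; that $\psi$ has full rank emerges as a consequence.
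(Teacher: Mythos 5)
Your proposal is correct and follows essentially the same route as the paper: both directions hinge on the lone-orbital expansion (\ref{lone orbital expansion}) together with the observation that $\Gamma=\tfrac12 I$ makes every orbital an occupation-$1/2$ natural orbital, so $|A|^2=1/2$ and the occupation constraints on $f_1$ and $f_2$ kill $B_1,C_1,C_2$. The only difference is that you spell out the ``if'' direction and the coefficient bookkeeping that the paper leaves implicit.
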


\begin{proof}
The sufficiency of the condition is clear; we prove its necessity.

Let (\ref{lone orbital expansion}) be a lone-orbital CI expansion of $\psi$.  
Since all six natural occupation numbers of $\psi$ are equal, the 1RDM of $\psi$ is proportional to the $6 \times 6$ identity matrix, and therefore {\it every} orbital in $\mathbb{C}^6$ is a natural orbital with occupation number $1/2$.  Since $l_1$ has occupation number $1/2$, it must be that $|A|^2=1/2$.  But then, since $f_1$ and $f_2$ also have occupation number $1/2$, they cannot appear in any other configurations.   That leaves two configurations in the lone-orbital CI expansion.  Renaming the reference orbitals and adjusting their phases yields the stated form for $\psi$.
\end{proof}

\begin{Corollary}
\label{hyperdet = 0 iff CIS}
A wave function $\psi \in \Borland$ is CIS if and only if its hyperdeterminant equals $0$.  
\end{Corollary}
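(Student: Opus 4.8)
The plan is to route through the correspondence between the Borland-Dennis setting and the (symmetrized) $3$-qubit setting, under which the hyperdeterminant of $\psi$ becomes, up to a fixed nonzero constant, Cayley's hyperdeterminant of an associated $3$-qubit state $\sum_{ijk}a_{ijk}|ijk\rangle$ (this is the content of \cite[Sec.~5]{ChenDokovicGrasslZeng}, \cite{LevayVrana}). The one structural fact I need about Cayley's hyperdeterminant is this: written out, it is a sum of twelve monomials, each of degree four in the amplitudes, and (i) every one of those monomials contains at least one factor $a_{ijk}$ with $i+j+k\ge 2$, and (ii) the only monomial none of whose factors lies in $\{a_{001},a_{010},a_{011}\}$ is $a_{000}^2a_{111}^2$. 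Both (i) and (ii) are verified by inspecting the twelve terms.

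For necessity, I would take a CIS wave function in its canonical form (\ref{canonical form CIS}) and, via an embedding of the type (\ref{standard embedding}) with $f_i\leftrightarrow e_{2i-1}$, $g_i\leftrightarrow e_{2i}$, identify $\psi$ with a $3$-qubit state supported only on $|000\rangle,|100\rangle,|010\rangle,|001\rangle$. Every nonzero amplitude then has weight $\le 1$, so by (i) every monomial of Cayley's hyperdeterminant vanishes, and hence so does the hyperdeterminant of $\psi$.

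For sufficiency, I would invoke Corollary~\ref{lone orbital existence}: $\psi$ has a lone-orbital CI expansion (\ref{lone orbital expansion}) with reference orbitals $(l_1,l_2,f_1,f_2,g_1,g_2)$ and coefficients $A,B_1,B_2,C_1,C_2$. A short check shows the five configurations appearing there all lie in the $3$-qubit subspace $\Span\{l_1,l_2\}\wedge\Span\{f_1,g_2\}\wedge\Span\{f_2,g_1\}$; via the associated embedding (assigning the labels so that $l_1\wedge f_1\wedge f_2\mapsto|000\rangle$), $\psi$ corresponds to a $3$-qubit state whose only nonzero amplitudes are, up to signs, $A,B_1,B_2,C_1,C_2$ for $|000\rangle,|100\rangle,|111\rangle,|101\rangle,|110\rangle$, so that the amplitudes of $|001\rangle,|010\rangle,|011\rangle$ all vanish. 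By fact (ii), Cayley's hyperdeterminant then collapses to the single surviving monomial $a_{000}^2a_{111}^2=\pm A^2B_2^2$. Hence the hyperdeterminant of $\psi$ vanishes if and only if $A=0$ or $B_2=0$.

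It then remains to exhibit a reference space in each of those two cases. If $A=0$, then $\psi=l_2\wedge\gamma$ with $\gamma\in\wedge^2\Span\{f_1,f_2,g_1,g_2\}$ strongly orthogonal to $l_2$; by Lemma~\ref{two-in-four} write $\gamma=A_1\,u_1\wedge v_1+A_2\,u_2\wedge v_2$ for an orthonormal basis $\{u_1,u_2,v_1,v_2\}$ of $\Span\{f_1,f_2,g_1,g_2\}$, so that $\psi=A_1\,l_2\wedge u_1\wedge v_1+A_2\,l_2\wedge u_2\wedge v_2$ visibly lies in the CIS subspace (\ref{CIS subspace}) with $\RR=\Span\{l_2,u_1,u_2\}$ (it is even low-rank). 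If $B_2=0$, the surviving configurations are $l_1\wedge f_1\wedge f_2$, $l_2\wedge f_1\wedge f_2$, $l_2\wedge f_1\wedge g_1$, and $l_2\wedge f_2\wedge g_2$, each having at least two of its orbitals in $\RR=\Span\{l_2,f_1,f_2\}$, so $\psi\in\wedge^3\RR\oplus(\RR\wedge\RR\wedge\RR^\perp)$ is CIS with reference space $\RR$. The only step requiring genuine care is the bookkeeping of the $3$-qubit correspondence — pairing the orbitals of the lone-orbital diagram with the qubit basis correctly so as to read off which $3$-qubit amplitudes are forced to be zero, and invoking that the hyperdeterminant invariant of $\psi$ is (up to a constant) Cayley's hyperdeterminant of the associated state; everything else is routine.
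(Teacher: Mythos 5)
Your proof is correct and follows essentially the same route as the paper: necessity by evaluating the hyperdeterminant on the canonical CIS form, and sufficiency by reading off the hyperdeterminant of a lone-orbital expansion as $A^2B_2^2$ and then exhibiting a reference space in each of the two cases $A=0$ and $B_2=0$. The only difference is that you carry out the hyperdeterminant evaluations explicitly through the $3$-qubit correspondence and the monomial structure of Cayley's hyperdeterminant, where the paper instead cites closed-form expressions from Refs.~\cite{ChenDokovicGrasslZeng} and \cite{LevayVrana}.
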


\begin{proof}
The hyperdeterminant of a CIS wave function in its canonical form (\ref{canonical form CIS}) can be directly computed from its coefficients (cf., eg., formula (32) in Ref.~\cite{ChenDokovicGrasslZeng}) and it equals $0$.  
Conversely, let $\psi$ be a wave function whose hyperdeterminant equals $0$, and let (\ref{lone orbital expansion}) be any lone orbital CI expansion of $\psi$.  
In terms of the coefficients in (\ref{lone orbital expansion}), the hyperdeterminant is simply $A^2B_2^2$.  
This equals $0$ if and only if $A = 0$ or $B_2 = 0$.   If $B_2 = 0$, then $\psi$ is manifestly of CIS form with reference space $\RR = \Span\{l_2,f_1,f_2\}$.   
If $A = 0$, then $\psi$ is low-rank and is CIS {\it a fortiori}.
\end{proof}

\section{5-term expansions based on max-overlap CIS approximations}
\label{5-term expansions based on max-overlap CIS approximations}

Theorem~\ref{Theorem for Slater} states that, given any max-overlap Slater determinant approximation $\chi_*$ of a target wave function, the target has a $5$-term CI expansion whose leading configuration is $\chi_*$.
In this section we prove a similar theorem for max-overlap CIS  approximations:  
if $\chi_*$ is a max-overlap CIS approximation of a target wave function $\psi$, then there exists $D \in \mathbb{C}$ such that
\[ \psi \equals  \langle \chi_*, \psi \rangle \ \chi_* \plus D\ g_1 \wedge g_2 \wedge g_3\ , \]
where the orbitals $g_i$ are the ones in the canonical form (\ref{canonical form CIS}) of $\chi_*$.  

\begin{Theorem}
\label{Theorem for CIS}
Let $\chi_* \in \Borland$ be a CIS wave function with CI expansion (\ref{canonical form CIS}).  
If $\chi_*$ is a max-overlap CIS approximation of a wave function $\psi \in \Borland$, then $\psi$ has CI expansion 
\begin{equation}
\label{max-overlap CIS expansion}
\begin{matrix}
 f_1 & f_2 & f_3 & g_1 & g_2 & g_3 & \\
\mathrm{X} & \mathrm{X} & \mathrm{X} & \mathrm{O} & \mathrm{O} & \mathrm{O} & \qquad sA  \\
\mathrm{O} & \mathrm{X} & \mathrm{X} & \mathrm{X} & \mathrm{O} & \mathrm{O} & \qquad sB_1  \\
\mathrm{X} & \mathrm{O} & \mathrm{X} & \mathrm{O} & \mathrm{X} & \mathrm{O} & \quad  -sB_2  \\
\mathrm{X} & \mathrm{X} & \mathrm{O} & \mathrm{O} & \mathrm{O} & \mathrm{X} & \qquad sB_3  \\
\mathrm{O} & \mathrm{O} & \mathrm{O} & \mathrm{X} & \mathrm{X} & \mathrm{X} & \qquad D   \\
\end{matrix}
\end{equation}
where $s  =  \langle \chi_*, \psi \rangle $.  
If $\psi$ is of full rank then so is $\chi_*$.  
\end{Theorem}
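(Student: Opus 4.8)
\emph{Plan and first reduction.} Write $\RR=\Span\{f_1,f_2,f_3\}$ and $s=\langle\chi_*,\psi\rangle\ge0$. Since $f_1\wedge f_2\wedge f_3$ spans $\wedge^3\RR$, the single excitations span $\RR\wedge\RR\wedge\RR^\perp$, and $g_1\wedge g_2\wedge g_3$ spans $\wedge^3\RR^\perp$, the expansion (\ref{max-overlap CIS expansion}) says exactly that $\psi=s\chi_*+D\,g_1\wedge g_2\wedge g_3$ for some scalar $D$. First I would note that every CIS wave function with reference space $\RR$ lies in the ten-dimensional subspace (\ref{CIS subspace}); as $\chi_*$ maximises overlap with $\psi$ over \emph{all} CIS wave functions, it does so in particular over (\ref{CIS subspace}), so by Lemma~\ref{useful little observation} it is proportional to the projection of $\psi$ onto (\ref{CIS subspace}), which one identifies as $s\chi_*$. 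Thus $\psi=s\chi_*+\delta+D\,g_1\wedge g_2\wedge g_3$ with $\delta$ the component of $\psi$ in the ``doubles'' subspace $\RR\wedge\RR^\perp\wedge\RR^\perp$, and the theorem reduces to showing $\delta=0$.

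\emph{Killing the doubles.} The key extra fact is that $\chi_*$ is CIS not only for $\RR$ but for every unitary rotate of $\RR$. For an anti-Hermitian $X$ on $\Csix$, let $\widehat{U_t}$ be the operator on $\Borland$ induced by $e^{tX}$; then $\widehat{U_t}\chi_*$ is CIS with reference space $e^{tX}\RR$, so $|\langle\widehat{U_t}\chi_*,\psi\rangle|$ is maximised at $t=0$. Because $\langle\chi_*,\psi\rangle=s$ is real and nonnegative, differentiating yields the stationarity condition $\mathrm{Re}\,\langle\dot\chi,\psi\rangle=0$, where $\dot\chi$ comes from letting $X$ act as a derivation on the three wedge factors of each configuration of $\chi_*$. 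Taking $X$, and $iX$, over the generators that mix $\RR$ with $\RR^\perp$ (those preserving $\RR$ or $\RR^\perp$ merely reproduce the first step), the ``$sAB_i$'' contributions cancel and one is left with a homogeneous linear system in the nine coefficients of $\delta$ whose matrix is built from $B_1,B_2,B_3$. A short computation of the $1$RDM shows that $\chi_*$ has full rank exactly when $B_1,B_2,B_3>0$, and in that case the system is nonsingular, so $\delta=0$ and $\psi=s\chi_*+D\,g_1\wedge g_2\wedge g_3$, which is (\ref{max-overlap CIS expansion}).

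\emph{The degenerate case and the last sentence.} If $\chi_*$ is not of full rank then $B_3=0$ in (\ref{canonical form CIS}), so $\chi_*$ is low-rank; being a CIS approximation of maximal overlap, and every low-rank wave function being CIS, it is then a low-rank approximation of $\psi$ of maximal overlap, and Proposition~\ref{max-overlap rank-5 proposition} gives $\psi=f_3\wedge(s\gamma_\chi)+g_3\wedge\gamma'$, where $\chi_*=f_3\wedge\gamma_\chi$, $g_3$ is a natural orbital of $\psi$, and $\gamma'\in\wedge^2\Span\{f_1,f_2,g_1,g_2\}$. I would then feed this back into the maximal-overlap property, now over reference spaces $\Span\{f_3\}\oplus P$ with $P\subset\Span\{f_1,f_2,g_1,g_2\}$, which should force $\gamma'$ to be a multiple of $g_1\wedge g_2$ (in fact to vanish), so that again $\psi=s\chi_*+D\,g_1\wedge g_2\wedge g_3$. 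The closing assertion then follows: if $\chi_*$ is not of full rank then $B_3=0$, hence $\psi=s\chi_*$ has rank at most $5$ and is not of full rank either.

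\emph{Expected obstacle.} The reductions in the first two paragraphs are essentially bookkeeping once the variational identity is set up; the real difficulty is the degenerate case, because the linear system degenerates precisely when some $B_i=0$, and then $\chi_*$ carries a one-parameter family of canonical CIS forms, so one must argue that either $\psi=s\chi_*$ or the additional reference spaces available to a low-rank $\chi_*$ already raise the overlap. Controlling $\gamma'$ through the maximal-overlap property --- which at the same time delivers the full-rank addendum --- is the step I expect to be the hardest.
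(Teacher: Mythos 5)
Your handling of the full-rank case is a legitimate alternative to the paper's route: instead of the hyperdeterminant arguments of Lemmas~\ref{Lemma for CIS 1} and~\ref{Lemma for CIS 2}, you derive first-order stationarity conditions by rotating the reference space, which is precisely the technique the paper uses for the $3$-qubit analog (Theorem~\ref{CIS theorem for 3-qubit}) and explicitly notes ``could indeed also be used to prove Lemma~\ref{Lemma for CIS 2}.'' The reduction in your first paragraph is correct, and the ``$sAB_i$'' contributions do cancel for a clean reason: the reference-plus-singles part of $\psi$ is $s\chi_*$, and $\mathrm{Re}\,\langle\dot\chi(0),\chi_*\rangle=0$ by unitarity of the flow. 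However, the one claim on which your whole full-rank case rests --- that the homogeneous system in the nine doubles coefficients is nonsingular when $B_1B_2B_3>0$ --- is asserted rather than checked, and it is not automatic. The six off-diagonal generators each kill one of six doubles outright (each needing a specific $B_i\neq0$), while the three diagonal generators yield three relations of the shape $\overline{B}_iC_j=\pm\overline{B}_jC_i$ among the remaining coefficients $C_1,C_2,C_3$; whether these force $C_i=0$ or leave a one-dimensional kernel depends entirely on the signs. (The paper's version, $B_1C_2=-B_2C_1$, $B_2C_3=-B_3C_2$, $B_3C_1=-B_1C_3$, has only the trivial solution when all $B_i\neq0$; had all three signs been $+$, the system would be singular.) That sign computation must actually be carried out.

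The genuine gap is the degenerate case, which you flag as the hardest step but only sketch. Reducing to a max-overlap low-rank approximation and invoking Proposition~\ref{max-overlap rank-5 proposition} is fine, but ``feed this back into the maximal-overlap property \dots\ which should force $\gamma'$ \dots\ to vanish'' is a hope, not an argument, and the theorem's closing assertion stands or falls with it: if one only knew $\gamma'\propto g_1\wedge g_2$ rather than $\gamma'=0$, then $\psi=s\chi_*+D\,g_1\wedge g_2\wedge g_3$ with $D\neq0$ could well be of full rank while $\chi_*$ is not. The paper needs three separate sub-arguments here: for rank $3$, a tandem rotation of the ortho-GHZ form; for generic rank $5$, relations of the type $AC_1=-B_1D$; and for the exceptional rank-$5$ state equivalent to $\tfrac{1}{\sqrt{2}}e_1\wedge(e_2\wedge e_3+e_4\wedge e_5)$ --- where $A=0$ and the generic relations say nothing --- an explicit construction of a strictly better CIS competitor, after first using Proposition~\ref{max-overlap rank-5 proposition} to show $f_3$ and $g_3$ are natural orbitals. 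Your family of reference spaces $\Span\{f_3\}\oplus P$ does contain the competitors the paper ultimately uses, but testing $\chi_*$ against it only yields the bound $\|\Pi_{\wedge^2 P}\gamma'\|\le s\,\|\Pi_{\wedge^2 P'}\gamma_\chi\|$ for each splitting $\Span\{f_1,f_2,g_1,g_2\}=P\oplus P'$, which kills only those components of $\gamma'$ complementary to vanishing components of $\gamma_\chi$; extracting $\gamma'=0$ from this, particularly in the exceptional case where $\gamma_\chi$ is a nondegenerate bivector, is real work that remains to be done.
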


The $5$-term CI expansion (\ref{max-overlap CIS expansion}) has the same shape as the ``canonical form'' based on max-overlap Slater determinant approximations.  Accordingly, many of the techniques developed in Ref.~\cite{ChenDokovicGrasslZeng} are also useful for studying max-overlap CIS approximations.    In particular, the algorithm described there for computing the maximum overlap with a Slater determinant can be adapted to the problem of computing the maximum overlap with a CIS wave function.  The adapted algorithm should also yield the coefficients $A,B_1,B_2,B_3$ of the canonical forms of all the max-overlap CIS approximations.  

\subsection{Proof of the main theorem}
\label{Proof of the main theorem}

This section is dedicated to the proof of Theorem~\ref{Theorem for CIS}.    
The expansion (\ref{max-overlap CIS expansion}) will be established first for the case where $\chi_*$ is of full rank; this is Lemma~\ref{Lemma for CIS 2}.  
We will then prove that, when $\chi_*$ is not of full rank, then $\chi_*$ equals $\psi$.
This implies that $\psi$ is of full rank if and only if $\chi_*$ is.  

\begin{Lemma}
\label{Lemma for CIS 1}
Let $\chi_* \in \Borland$ be a CIS wave function with CI expansion (\ref{canonical form CIS}).  
If $\chi_*$ is a max-overlap CIS approximation of a wave function $\psi \in \Borland$, then $\psi$ has CI expansion 
\begin{equation}
\label{8-term config diagram for lemma}
\begin{matrix}
 f_1 & f_2 & f_3 & g_1 & g_2 & g_3 & \\
\mathrm{X} & \mathrm{X} & \mathrm{X} & \mathrm{O} & \mathrm{O} & \mathrm{O} & \qquad sA  \\
\mathrm{O} & \mathrm{X} & \mathrm{X} & \mathrm{X} & \mathrm{O} & \mathrm{O} & \qquad sB_1  \\
\mathrm{X} & \mathrm{O} & \mathrm{X} & \mathrm{O} & \mathrm{X} & \mathrm{O} & \quad  -sB_2  \\
\mathrm{X} & \mathrm{X} & \mathrm{O} & \mathrm{O} & \mathrm{O} & \mathrm{X} & \qquad sB_3  \\
\mathrm{O} & \mathrm{O} & \mathrm{X} & \mathrm{X} & \mathrm{X} & \mathrm{O} & \qquad C_3  \\
\mathrm{O} & \mathrm{X} & \mathrm{O} & \mathrm{X} & \mathrm{O} & \mathrm{X} & \quad  -C_2  \\
\mathrm{X} & \mathrm{O} & \mathrm{O} & \mathrm{O} & \mathrm{X} & \mathrm{X} & \qquad C_1  \\
 \mathrm{O} & \mathrm{O} & \mathrm{O} & \mathrm{X} & \mathrm{X} & \mathrm{X} & \qquad D   \\
\end{matrix}
\end{equation}
where $s  =  \langle \chi_*, \psi \rangle $.
\end{Lemma}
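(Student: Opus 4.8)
The plan is to combine an elementary projection argument with a first‑order optimality condition, and to dispatch the degenerate cases by reduction to results already proved. First, since $\chi_*$ lies in the CIS subspace (\ref{CIS subspace}) with reference space $\RR:=\Span\{f_1,f_2,f_3\}$, it is in particular a max‑overlap approximation of $\psi$ among unit vectors of that linear subspace; so by Lemma~\ref{useful little observation} it is proportional to the orthogonal projection of $\psi$ onto that subspace, and comparing norms shows the projection is exactly $s\,\chi_*$. Hence
\[
\psi \equals s\,\chi_* \plus \psi^{(2)} \plus D\, g_1\wedge g_2\wedge g_3 ,
\]
with $\psi^{(2)}\in\RR\wedge\RR^\perp\wedge\RR^\perp$ and $D\in\mathbb{C}$. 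This already accounts for the first four rows and the last row of the diagram (\ref{8-term config diagram for lemma}); what remains is to show that the six ``off‑diagonal'' coefficients of $\psi^{(2)}$ — the coefficients of the configurations $f_a\wedge g_a\wedge g_b$ with $a\neq b$ — vanish.

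The key point for this is that the class of CIS wave functions is invariant under $\mathrm{U}(6)$: for every $U$ the wave function $\widehat U\chi_*$ is CIS (with reference space $U\RR$), so $|\langle\widehat U\chi_*,\psi\rangle|\le s$ with equality at $U=I$. Differentiating along $U=e^{tX}$ with $X$ anti‑Hermitian gives $\mathrm{Re}\langle\widehat X\chi_*,\psi\rangle=0$, and since $\langle\widehat X\chi_*,\chi_*\rangle$ is purely imaginary while $\psi-s\chi_*$ is orthogonal to the CIS subspace of $\RR$, this collapses to $\mathrm{Re}\langle\widehat X\chi_*,\psi^{(2)}\rangle=0$. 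I would then take $X$ to be the infinitesimal rotation sending $f_i\mapsto g_j$, $g_j\mapsto -f_i$ (all other reference orbitals fixed) for an ordered pair of distinct indices $i\neq j$, together with its companion $f_i\mapsto ig_j$, $g_j\mapsto if_i$ to recover the imaginary part. A short computation of $\widehat X$ on the four configurations of $\chi_*$ shows that its only component inside $\RR\wedge\RR^\perp\wedge\RR^\perp$ is $\pm B_k\, f_j\wedge g_j\wedge g_k$, where $k$ is the remaining index; so the coefficient of $f_j\wedge g_j\wedge g_k$ in $\psi^{(2)}$ must vanish whenever $B_k\neq 0$. Letting $(i,j)$ run over the six ordered pairs shows every off‑diagonal coefficient is killed once $B_1B_2B_3\neq 0$ — equivalently, once $\chi_*$ has full rank — and in that case one in fact lands on the five‑term form (\ref{max-overlap CIS expansion}).

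The part I expect to be the real obstacle is the degenerate case. Since the coefficients are ordered, either $B_2=B_3=0$ or $B_3=0<B_2$. If $B_2=B_3=0$ then $\chi_*=(Af_1+B_1g_1)\wedge f_2\wedge f_3$ is a Slater determinant, hence a max‑overlap Slater determinant approximation of $\psi$, and Theorem~\ref{Theorem for Slater} supplies a five‑term expansion led by $\chi_*$; one then checks, using the freedom in the still‑undetermined orbitals $g_2,g_3$, that it fits the pattern (\ref{8-term config diagram for lemma}). If $B_3=0<B_2$ then $\chi_*=f_3\wedge\gamma$ with $\gamma\in\wedge^2\Span\{f_1,f_2,g_1,g_2\}$ a rank‑two geminal, so $\chi_*$ has rank five and is therefore also a max‑overlap low‑rank approximation of $\psi$; Proposition~\ref{max-overlap rank-5 proposition} then gives $\psi=s\chi_*+g_3\wedge\delta'$ for a geminal $\delta'$. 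Writing $\gamma=\mu_1\,p_1\wedge q_1+\mu_2\,p_2\wedge q_2$ in Schmidt form, $\chi_*$ is moreover CIS with respect to each of the four reference spaces $\Span\{f_3,p_i,q_j\}$; applying Lemma~\ref{useful little observation} to each excludes one of the remaining bivector directions for $\delta'$ and forces $\delta'\in\Span\{p_1\wedge q_1,\ p_2\wedge q_2\}$. Finally, a short computation with the antisymmetric matrix of $\gamma$ shows that each Schmidt plane $\Span\{p_i,q_i\}$ meets both $\Span\{f_1,g_1\}$ and $\Span\{f_2,g_2\}$, so $p_i\wedge q_i$ factors as a vector of the first pair wedged with a vector of the second; substituting back exhibits $\psi$ in the form (\ref{8-term config diagram for lemma}).
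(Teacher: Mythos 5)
Your argument splits into a full-rank part and a degenerate part, and the two fare very differently. In the full-rank case ($B_1B_2B_3\neq 0$) your first-order stationarity argument along the curves $\widehat{U(t)}\chi_*$ is correct and is a genuinely different route from the paper's: the paper instead observes that the $5$-dimensional span of the four canonical configurations together with any one offending double excitation $f_i\wedge g_j\wedge g_k$ ($i\in\{j,k\}$) consists entirely of hyperdeterminant-zero, hence CIS, wave functions, and then applies Lemma~\ref{useful little observation} to that span. Your rotation computation (only the $B_k$-term of $\chi_*$ lands in $\RR\wedge\RR^\perp\wedge\RR^\perp$, killing the coefficient of $f_j\wedge g_j\wedge g_k$ when $B_k\neq0$) checks out, and this is essentially the technique the paper itself deploys for the $3$-qubit analog in Theorem~\ref{CIS theorem for 3-qubit}. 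But note the trade-off: the stationarity condition only sees the tangent directions of the U(6)-orbit of $\chi_*$, whereas the paper's argument exploits entire $2$-planes $\Span\{\chi_*,\eta\}$ lying in the CIS set; when $B_k=0$ the direction $f_j\wedge g_j\wedge g_k$ drops out of the tangent space but the $2$-plane through it is still all-CIS, so the paper's proof needs no case split at all. (Also, killing the six off-diagonal doubles does not by itself land you on the five-term form (\ref{max-overlap CIS expansion}); the coefficients $C_1,C_2,C_3$ survive and are only eliminated later, in Lemma~\ref{Lemma for CIS 2}.)

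The genuine gap is the case $B_2=B_3=0$. There $\chi_*=(Af_1+B_1g_1)\wedge f_2\wedge f_3=:h_1\wedge f_2\wedge f_3$, and Theorem~\ref{Theorem for Slater} produces a five-term expansion in an SVD-adapted pair of bases of $\Span\{h_1,f_2,f_3\}$ and its complement that has no reason to align with the given $f_i,g_i$; a term such as $B'_1\,q_1\wedge p_2\wedge p_3$ generically contributes to the forbidden configuration $f_1\wedge g_1\wedge g_2=-h_1\wedge h_6\wedge g_2$ (with $h_6=B_1f_1-Ag_1$), which is a \emph{double} excitation of the best-overlap Slater determinant $h_1\wedge f_2\wedge f_3$ and hence is constrained neither by the Brueckner conditions nor by Theorem~\ref{Theorem for Slater}. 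Moreover, when $B_2=B_3=0$ the orbitals $g_2,g_3$ in (\ref{canonical form CIS}) are an arbitrary completion of the basis, and the lemma must hold for every such completion; "using the freedom in $g_2,g_3$" can at best verify it for one. Eliminating $f_1\wedge g_1\wedge g_2$ here really does seem to require the input your proof avoids, namely that $\Span\{f_1\wedge f_2\wedge f_3,\,g_1\wedge f_2\wedge f_3,\,f_1\wedge g_2\wedge f_3,\,f_1\wedge f_2\wedge g_3,\,f_1\wedge g_1\wedge g_2\}$ consists entirely of CIS wave functions. Your sketch for the case $B_3=0<B_2$, by contrast, is essentially sound once two details are repaired: the four reference spaces for which $\chi_*=f_3\wedge(\mu_1p_1\wedge q_1+\mu_2p_2\wedge q_2)$ is CIS are $\Span\{f_3,x,y\}$ with $x\in\{p_1,q_1\}$ and $y\in\{p_2,q_2\}$ (the spaces $\Span\{f_3,p_i,q_i\}$ make $\chi_*$ CID, not CIS), and the vanishing of the $f_1\wedge f_2$ component of $\delta'$ must be drawn from the first (projection) step, since it is not excluded by the Schmidt-plane argument.
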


\begin{proof}
$\chi_*$ is CIS with reference space $\RR = \Span\{f_1,f_2,f_3\}$.  Since $\chi_*$ is a max-overlap CIS approximation of $\psi$, it must have maximum overlap with $\psi$ among all wave functions in the subspace $\wedge^3 \RR  \oplus ( \RR  \wedge \RR  \wedge \RR^\perp)$ and, by Lemma~\ref{useful little observation}, it must be proportional to the projection of $\psi$ onto this subspace.  It follows that the CI expansion of $\psi$ with respect to the reference orbitals $f_1, f_2, \ldots, g_3$ is given by the sum of the first four rows of the configuration diagram (\ref{8-term config diagram for lemma}).  The coefficients of all other single excitations of the reference configuration $f_1 \wedge  f_2 \wedge  f_3$ all equal $0$.  

Now consider the double excitations of the reference configuration.  There are nine of them.  Three of them appear in the CI expansion (\ref{8-term config diagram for lemma}) with coefficients $C_1,C_2$, and $C_3$; we have to prove that the coefficients of the other six configurations are $0$.  
The absent configurations are those of the form $f_i\wedge g_j \wedge g_k$ with $ i \in \{j,k\} \subset \{ 1,2,3 \}$ and $j < k$.  

For the sake of concreteness but without loss of generality, we will work on $f_1\wedge g_1 \wedge g_2$.  
The five-dimensional subspace of $\Borland$ spanned by $f_1\wedge g_1 \wedge g_2$ together with the four configurations in (\ref{canonical form CIS}) consists entirely of CIS wave functions, for the hyperdeterminant of any such wave function equals $0$ by formula (14) in Ref.~\cite{LevayVrana}.  Therefore the projection of $\psi$ onto this subspace is proportional to $\chi_*$, and the coefficient of $f_1\wedge g_1 \wedge g_2$ in the CI expansion of $\psi$ must equal $0$.  Similar arguments show that the coefficients of the other five double excitations of this kind also vanish, and therefore the CI expansion of $\psi$ is as in (\ref{8-term config diagram for lemma}).
\end{proof}

\begin{Lemma}
\label{Lemma for CIS 2}
Let $\chi_*$ be a max-overlap CIS approximation of a wave function $\psi \in \Borland$.  
Suppose that  $\chi_*$ is of full rank and has canonical CI expansion (\ref{canonical form CIS}).     
Then the coefficients $C_1,C_2$, and $C_3$ in the CI expansion (\ref{8-term config diagram for lemma}) of $\psi$ all equal $0$.
\end{Lemma}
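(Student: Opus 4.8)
The plan is to read Lemma~\ref{Lemma for CIS 2} as a first-order optimality statement for the constrained maximization $\max\{|\langle\chi,\psi\rangle|^2:\chi\text{ CIS},\ \|\chi\|=1\}$ and to convert it, by means of the explicit hyperdeterminant polynomial, into a statement about $\psi-s\chi_*$. First I would record the starting data supplied by Lemma~\ref{Lemma for CIS 1}: we may write $\psi = s\chi_*+\phi$ with $\phi$ the sum of the last four rows of (\ref{8-term config diagram for lemma}). Every configuration appearing in (\ref{8-term config diagram for lemma}) takes exactly one orbital from each of the pairs $\{f_1,g_1\},\{f_2,g_2\},\{f_3,g_3\}$, so $\chi_*$ and $\psi$, and hence $\phi$, all lie in the $3$-qubit subspace $\QQ=\Span\{f_1,g_1\}\wedge\Span\{f_2,g_2\}\wedge\Span\{f_3,g_3\}$; moreover $\langle\chi_*,\phi\rangle=\langle\chi_*,\psi\rangle-s=0$. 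The conclusion $C_1=C_2=C_3=0$ is exactly the assertion that $\phi$ is a scalar multiple of $g_1\wedge g_2\wedge g_3$.

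Next I would analyse the hyperdeterminant as a homogeneous polynomial $\mathcal{D}$ in the $20$ CI coefficients relative to the reference basis $(f_1,f_2,f_3,g_1,g_2,g_3)$; by Corollary~\ref{hyperdet = 0 iff CIS} a wave function is CIS precisely when $\mathcal{D}$ vanishes at it, and on $\QQ$ the polynomial $\mathcal{D}$ reduces to Cayley's $2\times2\times2$ hyperdeterminant in the eight amplitudes (formula~(14) of Ref.~\cite{LevayVrana}). Since $\chi_*$ has the canonical form (\ref{canonical form CIS}), it lies in the span of the four configurations $f_1\wedge f_2\wedge f_3$, $g_1\wedge f_2\wedge f_3$, $f_1\wedge g_2\wedge f_3$, $f_1\wedge f_2\wedge g_3$, on which every wave function is CIS, so $\mathcal{D}$ vanishes identically there and the partials of $\mathcal{D}$ at $\chi_*$ along those four coordinates are $0$. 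A short computation with Cayley's formula gives $\mathcal{D}(\chi_*+t\,f_1\wedge g_2\wedge g_3)=t^2B_1^2$, and likewise $\mathcal{D}(\chi_*+t\,f_2\wedge g_1\wedge g_3)=t^2B_2^2$ and $\mathcal{D}(\chi_*+t\,f_3\wedge g_1\wedge g_2)=t^2B_3^2$, so the partials along those three coordinates vanish as well; finally $\mathcal{D}(\chi_*+t\,g_1\wedge g_2\wedge g_3)=A^2t^2\pm4B_1B_2B_3\,t$, so the partial along $g_1\wedge g_2\wedge g_3$ equals $\pm4B_1B_2B_3$, which is nonzero because $\chi_*$ is of full rank (so $B_1B_2B_3\neq0$). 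Since these eight configurations form a basis of $\QQ$, the orthogonal projection of $\nabla\mathcal{D}(\chi_*)$ onto $\QQ$ is a nonzero multiple of $g_1\wedge g_2\wedge g_3$; in particular $\nabla\mathcal{D}(\chi_*)\neq0$, so $\chi_*$ is a smooth point of $\{\mathcal{D}=0\}$, and (as its projection onto $\QQ$ has a nonzero $g_1\wedge g_2\wedge g_3$-component while $\chi_*$ has none) $\chi_*$ is not a multiple of $\overline{\nabla\mathcal{D}(\chi_*)}$, so the unit-sphere slice of $\{\mathcal{D}=0\}$ is a smooth manifold near $\chi_*$.

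Then I would invoke first-order optimality. The set of CIS wave functions is $\{\chi:\mathcal{D}(\chi)=0,\ \|\chi\|=1\}$, which near $\chi_*$ is the smooth manifold just described, and $\chi_*$ maximizes $|\langle\chi,\psi\rangle|^2$ over it. The Lagrange conditions therefore yield $\mu\in\mathbb{R}$ and $\nu\in\mathbb{C}$ with $s\psi=\mu\chi_*+\nu\,\overline{\nabla\mathcal{D}(\chi_*)}$, where the bar denotes entrywise conjugation of the coefficient vector. Substituting $\psi=s\chi_*+\phi$ gives $s\phi=(\mu-s^2)\chi_*+\nu\,\overline{\nabla\mathcal{D}(\chi_*)}$; pairing with $\chi_*$ and using $\langle\chi_*,\phi\rangle=0$ together with Euler's relation $\langle\chi_*,\overline{\nabla\mathcal{D}(\chi_*)}\rangle=\overline{(\deg\mathcal{D})\,\mathcal{D}(\chi_*)}=0$ forces $\mu=s^2$, so $s\phi=\nu\,\overline{\nabla\mathcal{D}(\chi_*)}$. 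If $\nu=0$ then $\phi=0$ and $C_1=C_2=C_3=0$. If $\nu\neq0$ then $\overline{\nabla\mathcal{D}(\chi_*)}$ is proportional to $\phi$, hence lies in $\QQ$; consequently $\nabla\mathcal{D}(\chi_*)\in\QQ$ equals its own projection onto $\QQ$, which by the previous step is a nonzero multiple of $g_1\wedge g_2\wedge g_3$ (a basis configuration, so conjugation-invariant). In either case $\phi$ is a scalar multiple of $g_1\wedge g_2\wedge g_3$, that is, $C_1=C_2=C_3=0$.

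The hard part is the computation in the second paragraph: extracting from the explicit hyperdeterminant formula the facts that, at $\chi_*$, the gradient has no component along the three doubly-excited configurations $f_1\wedge g_2\wedge g_3$, $f_2\wedge g_1\wedge g_3$, $f_3\wedge g_1\wedge g_2$, yet does have a nonzero component along $g_1\wedge g_2\wedge g_3$. Everything else is routine manipulation of the first-order condition, leaning on the two observations $\phi\perp\chi_*$ and $\phi\in\QQ$ inherited from Lemma~\ref{Lemma for CIS 1}.
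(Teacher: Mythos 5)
Your proof is correct, but it takes a genuinely different route from the paper's. The paper argues at ``zeroth order'': starting from the expansion (\ref{8-term config diagram for lemma}), it splits $C_1\, f_1\wedge g_2\wedge g_3 + C_2\, g_1\wedge f_2\wedge g_3$ into two explicitly chosen orthogonal pieces $\varphi'_1+\varphi'_2$ so that $s\chi_*+\varphi'_1$ is itself CIS (its hyperdeterminant vanishes by formula (37) of Ref.~\cite{LevayVrana}); normalizing this competitor gives a CIS wave function whose overlap-squared with $\psi$ is $s^2+\|\varphi'_1\|^2$, so maximality forces $\varphi'_1=0$, i.e.\ $B_1C_2=-B_2C_1$, and cycling through the indices yields three bilinear relations that kill $C_1,C_2,C_3$ because $B_1B_2B_3\neq0$. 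You instead argue at first order: you identify the CIS set with the variety $\{\mathcal{D}=0\}$ on the unit sphere via Corollary~\ref{hyperdet = 0 iff CIS}, compute from Cayley's formula that the projection of $\nabla\mathcal{D}(\chi_*)$ onto the $3$-qubit subspace $\QQ$ is a nonzero multiple of $g_1\wedge g_2\wedge g_3$ (the nonvanishing $\pm 4B_1B_2B_3$ is exactly where full rank enters, and it also supplies the constraint qualification needed for the multiplier rule), and then read off from stationarity, Euler's identity, and the membership $\psi-s\chi_*\in\QQ$ that $\psi-s\chi_*$ must be proportional to $g_1\wedge g_2\wedge g_3$. The paper itself remarks, in the proof of Theorem~\ref{CIS theorem for 3-qubit}, that a stationarity argument ``could indeed also be used to prove Lemma~\ref{Lemma for CIS 2}''; your Lagrange formulation is a clean packaging of that remark. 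As for what each approach buys: the paper's competitor argument is elementary and needs no smoothness of the CIS variety, but it requires guessing the right orthogonal splitting and verifying the hyperdeterminant of each competitor separately; yours derives all three vanishings at once from a single gradient computation, at the price of having to justify that $\chi_*$ is a smooth point of the constraint set where Lagrange multipliers apply --- which you do correctly, and which is the one place where the hypothesis $B_1B_2B_3\neq 0$ is indispensable in both proofs.
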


\begin{proof}

Since $\chi_*$ is of full rank, the coefficients $B_1,B_2$, and $B_3$ in its canonical CI expansion (\ref{canonical form CIS}) are all positive.  
Let  $\varphi_1$ and $\varphi_2$ denote $f_1\wedge g_2 \wedge g_3$ and $g_1\wedge f_2 \wedge g_3$, respectively.  
Let 
\begin{eqnarray*}
  \varphi'_1 & = &  \frac{B_1C_2+B_2C_1}{B_1^2+B_2^2}(B_2 \varphi_1 + B_1 \varphi_2) \\
  \varphi'_2 & = &  \frac{B_1C_1-B_2C_2}{B_1^2+B_2^2}(B_1 \varphi_1- B_2 \varphi_2).
\end{eqnarray*}
The trivectors $\varphi'_1$ and $\varphi'_2$ are orthogonal and 
$ C_1 \varphi_1+ C_2 \varphi_2  = \varphi'_1 + \varphi'_2 $.
Referring to the configuration diagram (\ref{8-term config diagram for lemma}) for $\psi$, one can see that 
\[
   \psi \equals  ( s \chi_*  +  \varphi'_1) \plus (\varphi'_2 \plus C_3  \ g_1\wedge g_2 \wedge f_3 \plus D \  g_1\wedge g_2 \wedge g_3)\ .
\]
The two grouped expressions on the right-hand side of this equation represent orthogonal trivectors.  
The first grouped term, namely, $s \chi_*  \plus  \varphi'_1$ has hyperdeterminant $0$, as can be seen by formula (37) in Ref.~\cite{LevayVrana}.  
It is therefore CIS.  Since  $\chi_*$ is a max-overlap CIS approximation of $\psi$ by assumption, and since $\varphi'_1$ is orthogonal to $\chi_*$, it follows that $\varphi'_1 = 0$, whence 
$    B_1C_2 = -B_2C_1  $.
Similar arguments show that $B_2C_3=-B_3C_2$ and $B_3C_1=-B_1C_3$.  As we are assuming that $B_1B_2B_3 \ne 0$, the three preceding equations imply that $C_1,C_2$, and $C_3$ are all $0$. 
\end{proof}

We now complete the proof of Theorem~\ref{Theorem for CIS}.  
The conclusion has already been established in Lemma~\ref{Lemma for CIS 2} for the case where $\chi_*$ is of full rank.
The rest of this proof is devoted to the case where $\chi_*$ is low-rank.  
We will show in these cases that $\chi_* = \psi$, so that $\psi$ has the CI expansion (\ref{max-overlap CIS expansion}) with $s = 1,D=0$.  The fact that $\chi_* = \psi$ if $\chi_*$ is low-rank implies the final assertion of the theorem.  

Suppose $\chi_*$ has rank $3$, i.e., that $\chi_*$ is a Slater determinant.  Then $\chi_*$ can be written as in (\ref{canonical form CIS}) with $A=1$ and $B_i = 0$, $i=1,2,3$.  By Lemma~\ref{Lemma for CIS 1}, the target wave function $\psi$ has CI expansion (\ref{8-term config diagram for lemma}).  Since $\chi_*$ is a max-overlap CIS approximation of $\psi$, it is also a max-overlap low-rank approximation. 
 This implies that each of the coefficients $C_i$ in (\ref{8-term config diagram for lemma}) must equal $0$, and only the coefficients $sA = \langle \chi_*, \psi \rangle$ and $D$ remain.   
Without changing notation, let the phase of $g_1$ be adjusted so that 
\begin{equation}         
\label{ortho-GHZ wave function}
\psi \equals  S \  f_1 \wedge f_2 \wedge f_3 \ +\  T \  g_1 \wedge g_2 \wedge g_3
\end{equation} 
where  
$S =  \langle \chi_*, \psi \rangle > 0 $ and $T = |D| \ge 0$.  
The tandem rotations 
\begin{equation}
\label{tandem}
   \begin{bmatrix} f_i(t) \\ g_i (t) \end{bmatrix}
    \equals 
   \begin{bmatrix}  T &  -S \\    S & \ \ T   \end{bmatrix} 
   \begin{bmatrix} f'_i \\ g'_i  \end{bmatrix} 
\end{equation} 
$i = 1,2,3$ convert (\ref{ortho-GHZ wave function}) to 
\begin{equation}
\label{max-overlap CIS expansion of ortho-GHZ}
\begin{matrix}
 f'_1 & f'_2 & f'_3 & g'_1 & g'_2 & g'_3 & \\
 \mathrm{X} & \mathrm{X} & \mathrm{X} & \mathrm{O} & \mathrm{O} & \mathrm{O} & \qquad ST  \\
\mathrm{X} & \mathrm{O} & \mathrm{O} & \mathrm{O} & \mathrm{X} & \mathrm{X} & \qquad ST  \\
\mathrm{O} & \mathrm{X} & \mathrm{O} & \mathrm{X} & \mathrm{O} & \mathrm{X} & \qquad   ST  \\
\mathrm{O} & \mathrm{O} & \mathrm{X} & \mathrm{X} & \mathrm{X} & \mathrm{O} & \qquad  ST  \\
\mathrm{O} & \mathrm{O} & \mathrm{O} & \mathrm{X} & \mathrm{X} & \mathrm{X} & \qquad\quad S^2 - T^2\ .   \\
\end{matrix}
\end{equation} 
The sum of the last four rows in (\ref{max-overlap CIS expansion of ortho-GHZ}), renormalized, is a CIS wave function whose overlap-squared with $\psi$ is $ 1 - (ST)^2 $.  
But $ 1 - (ST)^2 \ge S^2 =   \langle \chi_*, \psi \rangle ^2$ with equality only if $S = 1, T=0$.  As $\chi_*$ is assumed to be a max-overlap CIS approximation of $\psi$, necessarily  $S = 1$ and $T=0$, which means that $\psi$ is a Slater determinant and therefore $\chi_* = \psi$.

Next, suppose that $\chi_*$ has rank $5$.  
As mentioned in Section~\ref{Canonical forms}, most rank-$5$ wave functions can be written in the form (\ref{canonical form CIS}) with $A,B_1,B_2 > 0$ and $B_3 = 0$.  
The only exceptions are rank-$5$ wave functions that are equivalent to 
\begin{equation}
\label{annoying special case}
    \tfrac{1}{\sqrt{2}} e_1 \wedge \big(  e_2 \wedge e_3 + e_4 \wedge e_5\big).
\end{equation}

Suppose first that $\chi_*$ is as in (\ref{canonical form CIS}) with $A,B_1,B_2 > 0$ and $B_3 = 0$.  
By Lemma~\ref{Lemma for CIS 1}, the target wave function $\psi$ has CI expansion (\ref{8-term config diagram for lemma}).   The coefficient $C_3$ in (\ref{8-term config diagram for lemma}) must be $0$ or else $\psi$ would have a better low-rank approximation --- and hence a better CIS approximation --- than $\chi_*$.  
Arguing as in the proof of Lemma~\ref{Lemma for CIS 2}, one can show that $AC_1=-B_1D$, $B_1C_2=-B_2C_1$, and $B_2D=-AC_2$.  As $A,B_1,B_2 > 0$, these three equations imply that $C_1 = C_2 = D = 0$.  Thus, $\psi$ has CI expansion (\ref{max-overlap CIS expansion}) with 
$s=1$.  In other words, $\psi$ is itself of rank $5$, and therefore $\chi_* = \psi$.

Finally, suppose that $\chi_*$ is equivalent to (\ref{annoying special case}).  Then $\chi_*$ can be written as in (\ref{canonical form CIS}) and the target wave function $\psi$ can be written as in (\ref{8-term config diagram for lemma}) with $A = 0,B_3=0$, and $B_1 = B_2 =  \tfrac{1}{\sqrt{2}} $.    The coefficients $C_3$ and $D$ in (\ref{8-term config diagram for lemma}) must also equal $0$, otherwise $\psi$ would have a strictly better CIS approximation with reference space $\Span\{f_3,g_1,g_2\}$.  Moreover, since $\chi_*$ is a max-overlap CIS approximation of rank-5, it is also a max-overlap low-rank approximation, and therefore, by Proposition~\ref{max-overlap rank-5 proposition}, $f_3$ and $g_3$ are natural orbitals of $\psi$.   This implies that $\psi$ has CI expansion  
\begin{equation}
\label{4-term config diagram for stupid special case}
\begin{matrix}
 f_1 & f_2 & f_3 & g_1 & g_2 & g_3 & \\
\mathrm{O} & \mathrm{X} & \mathrm{X} & \mathrm{X} & \mathrm{O} & \mathrm{O} & \qquad B  \\
\mathrm{X} & \mathrm{O} & \mathrm{X} & \mathrm{O} & \mathrm{X} & \mathrm{O} & \quad  -B  \\
\mathrm{O} & \mathrm{X} & \mathrm{O} & \mathrm{X} & \mathrm{O} & \mathrm{X} & \qquad  C  \\
\mathrm{X} & \mathrm{O} & \mathrm{O} & \mathrm{O} & \mathrm{X} & \mathrm{X} & \qquad  C  \\
\end{matrix}
\end{equation}
where $B = \tfrac{1}{\sqrt{2}} \langle \chi_*, \psi \rangle$ and $C$ is such that $B^2 + |C|^2 = 1/2$. 
The tandem rotations (\ref{tandem}) for $i = 1,2$ with $S = T = 1/\sqrt{2}$
transform the CI expansion (\ref{4-term config diagram for stupid special case}) to 
\begin{equation}
\label{another 4-term config diagram for stupid special case}
\begin{matrix}
 f'_1 & f'_2 & f_3 & g'_1 & g'_2 & g_3 & \\
\mathrm{X} & \mathrm{X} & \mathrm{X} & \mathrm{O} & \mathrm{O} & \mathrm{O} & \qquad B  \\
\mathrm{O} & \mathrm{O} & \mathrm{X} & \mathrm{X} & \mathrm{X} & \mathrm{O} & \quad  -B  \\
\mathrm{O} & \mathrm{X} & \mathrm{O} & \mathrm{X} & \mathrm{O} & \mathrm{X} & \qquad  C  \\
\mathrm{X} & \mathrm{O} & \mathrm{O} & \mathrm{O} & \mathrm{X} & \mathrm{X} & \qquad  C  \\
\end{matrix}
\end{equation}
The sum of the first three rows of (\ref{another 4-term config diagram for stupid special case}) is proportional to a CIS wave function with reference space $\Span\{f'_2,f_3,g'_1\}$ whose overlap-squared with $\psi$ is $2B^2 + |C|^2$.  This is strictly greater than 
$  \langle \chi_*, \psi \rangle^2 = 2B^2$ unless $C=0$.  Since $\chi_*$ is a max-overlap CIS approximation, $C$ must equal $0$ in  (\ref{4-term config diagram for stupid special case}), which implies that $\psi$ has rank $5$ and therefore $\chi_* = \psi$.

\subsection{Analog of the CIS expansion in the 3-qubit setting}
\label{Analog of the CIS expansion in the 3-qubit setting}

In this section we prove the analog of Theorem~\ref{Theorem for CIS} for max-overlap ``Type~4a'' approximations in the $3$-qubit setting.   

A wave function in the $3$-qubit space $\mathbb{C}^2 \otimes \mathbb{C}^2 \otimes \mathbb{C}^2$ is a ``Type~4a state'' \cite{AcinAndrianovJaneTarrach 2000,AcinAndrianovJaneTarrach 2001} 
if it can be written in the form 
\[
    A \ f_1\otimes f_2 \otimes f_3 \plus B_1 \ g_1\otimes f_2 \otimes f_3 \plus B_2 \ f_1\otimes g_2 \otimes f_3 \plus B_3 \ f_1\otimes f_2 \otimes g_3 
\]
where $\{f_1,g_1\},\{f_2,g_2\}$, and $\{f_3,g_3\}$ are three orthonormal bases of $\mathbb{C}^2$.  
In other words, a $3$-qubit wave function is a Type~4a state if it is locally unitarily equivalent to 
$ A |000\rangle + B_1  |100\rangle + B_2  |010\rangle + B_3  |001\rangle$.   
We include all degenerate forms in Type~4a, allowing one or more of the coefficients to be $0$.  

A $3$-qubit wave function is a Type~4a state if and only if its hyperdeterminant equals $0$ \cite{AcinAndrianovJaneTarrach 2001}.  
The set of Type~4a states is the complement of the generic entanglement class, i.e., it is the set of  $3$-qubit wave functions that {\it cannot} be written in the form $A \ v_1\otimes v_2 \otimes v_3 \plus B \ v_4\otimes v_5 \otimes v_6$ with $\{v_1,\ldots,v_6\}$ linearly independent and both $A,B \neq 0$ \cite{AcinAndrianovJaneTarrach 2001}.

\begin{Theorem}
\label{CIS theorem for 3-qubit}
Let $\{f_1,g_1\},\{f_2,g_2\}$, and $\{f_3,g_3\}$ be three orthonormal bases of $\mathbb{C}^2$ and let 
\begin{equation}
\label{best Type 4a}
    \chi_* \equals A f_1\otimes f_2 \otimes f_3 \plus B_1 g_1\otimes f_2 \otimes f_3 \plus B_2 f_1\otimes g_2 \otimes f_3 \plus B_3 f_1\otimes f_2 \otimes g_3 
\end{equation}
be a Type~4a wave function in $\mathbb{C}^2 \otimes \mathbb{C}^2 \otimes \mathbb{C}^2$.  
If $\chi_*$ is a max-overlap Type~4a approximation of a $3$-qubit wave function $\psi$, then
\[
    \psi \equals  \langle \chi_*, \psi \rangle \ \chi_* \plus C\ g_1\otimes g_2 \otimes g_3
\]
for some coefficient $C$ such that $ |C|^2 = 1 - \big| \langle \chi_*, \psi \rangle \big|^2 $.
\end{Theorem}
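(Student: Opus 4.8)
The plan is to rerun the proof of Theorem~\ref{Theorem for CIS} with ``CIS'' replaced by ``Type~4a'' and Slater determinants by product states. After permuting the qubits and fixing phases we may assume $A\ge 0$ and $B_1\ge B_2\ge B_3\ge 0$ in (\ref{best Type 4a}); put $s=\langle\chi_*,\psi\rangle$, which is $\ge 0$ by our phase convention and is in fact $>0$, since a nonzero $3$-qubit vector cannot be orthogonal to every product state. Write the eight product basis vectors as $|abc\rangle$ with $|0\rangle_i=f_i$, $|1\rangle_i=g_i$. Two facts do all the work: Lemma~\ref{useful little observation} (a max-overlap approximation taken from a fixed linear subspace is a scalar multiple of the orthogonal projection of the target onto that subspace), and the quoted fact that a $3$-qubit state is Type~4a exactly when its hyperdeterminant vanishes. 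Together they say that the projection of $\psi$ onto any linear subspace whose members all have vanishing hyperdeterminant is (a multiple of) a competing Type~4a approximant of $\psi$, hence no longer than $\chi_*$. The subspaces we use lie in one of two families: the $4$-dimensional ``reference-plus-singles'' subspaces $\mathrm{span}\{|r_1r_2r_3\rangle,|\bar r_1r_2r_3\rangle,|r_1\bar r_2r_3\rangle,|r_1r_2\bar r_3\rangle\}$ attached to an orthonormal basis $\{r_i,\bar r_i\}$ of each qubit (whose members are visibly of the form (\ref{best Type 4a})), and the subspaces of states unentangled across a fixed bipartite cut (every monomial of Cayley's $2\times 2\times 2$ hyperdeterminant contains a coefficient with each bit equal to $0$ and one with it equal to $1$, so fixing a tensor factor annihilates it).

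Projecting $\psi$ onto the reference-plus-singles subspace $V=\mathrm{span}\{|000\rangle,|100\rangle,|010\rangle,|001\rangle\}$ of $\chi_*$, Lemma~\ref{useful little observation} forces $\mathrm{proj}_V\psi=s\chi_*$, so $\psi=s\chi_*+x|110\rangle+y|101\rangle+z|011\rangle+w|111\rangle$; it remains to show $x=y=z=0$. For each $2$-element subset of $\{|110\rangle,|101\rangle,|011\rangle\}$ split the corresponding two-term piece of $\psi-s\chi_*$ orthogonally as $\varphi_1'+\varphi_2'$, with $\varphi_1'$ chosen to lie in the one direction for which $s\chi_*+\varphi_1'$ has vanishing hyperdeterminant (a brief Cayley computation does this; e.g.\ $s\chi_*+p|110\rangle+q|011\rangle$ has vanishing hyperdeterminant iff $B_3p=B_1q$). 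Then $s\chi_*+\varphi_1'$ is Type~4a and orthogonal to the rest of $\psi$, so maximality of $\chi_*$ forces $\varphi_1'=0$, exactly as in Lemma~\ref{Lemma for CIS 2}. This gives $B_1x+B_3z=B_2x+B_3y=B_1y+B_2z=0$; when $B_1B_2B_3\ne 0$ these force $x=y=z=0$, hence $\psi=s\chi_*+w|111\rangle$, and normalization gives $|w|^2=1-s^2$, so we may take $C=w$.

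Three degenerate cases remain. If $B_3=0$, the same splitting applied to the pairs containing $|111\rangle$ yields (using $B_3=0$) the extra relations $Az+B_1w=Ay+B_2w=0$, while $B_1x=0$ gives $x=0$; when $A,B_2>0$ these force $w=0$ and then $y=z=0$, so $\psi=\chi_*$. If two or more $B_i$ vanish, then $\chi_*$ is a product state, so we may take $\chi_*=|000\rangle$; projecting $\psi$ onto the three Type~4a subspaces of states unentangled across a single qubit (each containing $|000\rangle$) kills $x,y,z$, leaving $\psi=s|000\rangle+w|111\rangle$, and a simultaneous rotation of all three qubits through the angle with $(\cos,\sin)\propto(s,|w|)$ (the analogue of the tandem rotations (\ref{tandem})) turns this into $s|w|\big(|000\rangle+|011\rangle+|101\rangle+|110\rangle\big)+(|w|^2-s^2)|111\rangle$, whose last four terms span, after normalization, a Type~4a vector of overlap-squared $1-s^2|w|^2=s^2+(1-s^2)^2>s^2$ unless $w=0$; hence $w=0$ and $\psi=\chi_*$.

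The remaining case, $A=B_3=0$ with $B_1\ge B_2>0$ (the analogue of the ``annoying special case'' of Section~\ref{Proof of the main theorem}), is where I expect the real work. Here the relations above leave only $x=w=0$ and $B_1y+B_2z=0$, so $(y,z)=\ell(B_2,-B_1)$ for some $\ell\in\mathbb{C}$ whose phase we absorb into $g_3$ (so $\ell\ge 0$), and $\psi=s\,(B_1g_1f_2+B_2f_1g_2)\otimes f_3+\ell\,(B_2g_1f_2-B_1f_1g_2)\otimes g_3$ with $s^2+\ell^2=1$. One computes that the hyperdeterminant of $\psi$ equals $s^2\ell^2$, so $\ell>0$ makes $\psi$ GHZ-class, and the crux is to show that this contradicts maximality of $\chi_*$. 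I would do so by exhibiting a longer Type~4a projection: replace the bases of qubits $1$ and $2$ by $\{\tfrac1{\sqrt2}(f_1-g_1),\tfrac1{\sqrt2}(f_1+g_1)\}$ and $\{B_1f_2+B_2g_2,\,B_2f_2-B_1g_2\}$, keep $\{f_3,g_3\}$, and let $\mathcal T$ be the reference-plus-singles subspace attached to the configuration built from the first vector of each of these three bases. A direct calculation gives $\|\mathrm{proj}_{\mathcal T}\psi\|^2=s^2+2\ell^2B_1^2B_2^2$, which exceeds $s^2$ whenever $\ell\ne 0$ (using $B_1B_2>0$); by Lemma~\ref{useful little observation} this beats $\chi_*$, a contradiction. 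Therefore $\ell=0$, $\psi=\chi_*$, and the conclusion holds with $C=0$ in this case as well.
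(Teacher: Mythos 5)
Your proof is correct, and for the central case it takes the road the paper explicitly mentions but declines to travel: you transplant the hyperdeterminant argument of Lemma~\ref{Lemma for CIS 2} to the qubit setting, splitting each two-dimensional ``doubles'' piece of $\psi - s\chi_*$ along the line on which the hyperdeterminant of $s\chi_* + p\,|abc\rangle + q\,|a'b'c'\rangle$ vanishes (I checked, e.g., $\mathcal{D}\big(s\chi_* + p\,|110\rangle + q\,|011\rangle\big) = s^2(B_3p - B_1q)^2$, so your linear conditions $B_1x+B_3z = B_2x+B_3y = B_1y+B_2z = 0$ are exactly the paper's relations $\overline{B}_2D_3=-\overline{B}_3D_2$, etc.). The paper instead obtains those same relations from the first-order stationarity of $\mathrm{Re}\,\langle \chi(t),\psi\rangle$ along one-parameter rotations of each qubit basis, which is self-contained given only the parametrization (\ref{best Type 4a}); your route leans on the cited equivalence ``Type~4a $\Leftrightarrow$ hyperdeterminant $=0$'' but in exchange delivers exact, not merely first-order, information. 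Where your write-up genuinely adds value is in the degenerate cases, which the paper dispatches with one sentence deferring to the Borland--Dennis proof of Theorem~\ref{Theorem for CIS}: you carry out the $B_3=0$ elimination, the product-state case via the tandem rotation (the analogue of (\ref{tandem}), with the correct overlap $1-(sT)^2$), and the biseparable case $A=B_3=0$, for which your explicit reference-plus-singles subspace $\mathcal{T}$ does give $\|\mathrm{proj}_{\mathcal T}\psi\|^2 = s^2 + 2\ell^2B_1^2B_2^2$ (I verified this by expanding $\psi$ in the rotated bases), so maximality forces $\ell=0$ as claimed. The only stylistic caveat is that your splitting argument silently uses $s>0$ (the hyperdeterminant conditions degenerate when $s=0$), but you do establish $s>0$ at the outset, so nothing is missing.
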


\begin{proof}

One way to prove this is to use the hyperdeterminant invariant and mimic the proof of Lemma~\ref{Lemma for CIS 2}. 
Instead, we will use the a different technique, which could indeed also be used to prove Lemma~\ref{Lemma for CIS 2}.

Recall that the phase of the max-overlap approximation $\chi_*$ is such that $\langle  \chi_*, \psi \rangle > 0$.  

Let $\mathfrak{F} =  \Span\big\{ f_1\otimes f_2 \otimes f_3, \ g_1\otimes f_2 \otimes f_3, \ f_1\otimes g_2 \otimes f_3, \ f_1\otimes f_2 \otimes g_3 \big\}$.
Since $\chi_*$ is a max-overlap Type~4a approximation of $\psi$, it must be proportional to the projection of $\psi$ onto the subspace $\mathfrak{F}$, so that  
\begin{equation}
   \psi \equals 
     \langle \chi_*, \psi \rangle \ \chi_* \plus C \ g_1\otimes g_2 \otimes g_3   \plus D_1 \ f_1\otimes g_2 \otimes g_3 \plus  D_2 \ g_1\otimes f_2 \otimes g_3 \plus D_3\  g_1\otimes g_2 \otimes f_3 \ .
    \label{not expendible} 
\end{equation}  

Assuming first that $B_1B_2B_3 \ne 0$, we will show that the three coefficients $D_i$ all equal $0$. 
To do this we shall use stationarity conditions derived from the three rotations 
\[
   \begin{bmatrix} f_i(t) \\ g_i (t) \end{bmatrix}
    \equals 
   \begin{bmatrix}  \cos(t) & -e^{i\theta} \sin(t) \\     e^{-i\theta} \sin(t) & \ \cos(t)    \end{bmatrix} 
   \begin{bmatrix} f_i \\ g_i  \end{bmatrix} \ .
\]
The trajectory  
\[
 \chi(t) \equals 
      A \  f_1(t)\otimes  f_2 \otimes  f_3   \plus  B_1 \  g_1(t) \otimes  f_2 \otimes  f_3 
        \plus B_2 \ f_1(t) \otimes  g_2 \otimes  f_3     \plus B_3 f_1(t) \otimes  f_2 \otimes  g_3 
\]
lies within the set of Type~4a wave functions and equals $\chi_*$ when $t=0$.
Since $\chi_*$ is a maximizer of the functional $ \chi \longmapsto \big| \langle \chi,  \psi \rangle \big|$ on the set of Type~4a wave functions, and since $\langle  \chi_*, \psi \rangle > 0$, the real part $\mathrm{Re} \langle \chi(t), \psi \rangle $ must attain its maximum at $t = 0$, because 
\[
    \mathrm{Re} \langle \chi(t), \psi \rangle \ \le \ \big| \langle \chi(t), \psi \rangle \big| \ \le \ \big| \langle \chi_*, \psi \rangle \big|  
    \equals 
    \mathrm{Re} \langle \chi_*, \psi \rangle 
    \equals 
    \mathrm{Re} \langle \chi(0), \psi \rangle.
\]
The maximizer $\chi_*$ must therefore satisfy the following stationarity condition:
\begin{equation}
\label{stationarity condition}
  \frac{d}{dt} \mathrm{Re} \langle \chi(t),\psi \rangle \Big|_{t=0} \equals \mathrm{Re} \langle  \dot{\chi}(0), \psi \rangle \equals 0
\end{equation}
where $\dot{\chi}(0)$ denotes $\frac{d}{dt}  \chi(t) \big|_{t=0}$.
Since 
\[
 \dot{\chi}(0) \equals 
      e^{-i\theta}  B_1  f_1 \otimes  f_2 \otimes  f_3  \ - \ e^{i\theta}  ( A g_1\otimes  f_2 \otimes  f_3   \plus 
        B_2  g_1 \otimes  g_2 \otimes  f_3     \plus B_3 g_1 \otimes  f_2 \otimes  g_3 )\ ,
\]
the stationarity condition (\ref{stationarity condition}) implies that 
\[
     B_1 \langle \psi, f_1 \otimes  f_2 \otimes  f_3 \rangle   \equals    \overline{A} \langle g_1\otimes  f_2 \otimes  f_3, \psi \rangle 
     \plus \overline{B}_2  \langle g_1\otimes  g_2 \otimes  f_3, \psi  \rangle \plus \overline{B}_3  \langle g_1\otimes  f_2 \otimes  g_3, \psi  \rangle.
\]
Using (\ref{not expendible}) and (\ref{best Type 4a}) we find that  
$    B_1\langle  \psi, \chi_* \rangle \overline{A} \equals  \langle    \chi_*,\psi \rangle \big( \overline{A}  B_1 \plus \overline{B}_2  D_3 \plus \overline{B}_3 D_2 \big) $.
Since $\langle   \psi, \chi_* \rangle > 0$ is real, $\overline{B}_2  D_3 = -\overline{B}_3 D_2 $.
Similarly $\overline{B}_1  D_2 = -\overline{B}_2 D_1 $ and  $\overline{B}_3  D_1 = -\overline{B}_1 D_3 $.  
As we are assuming that $B_1B_2B_3 \ne 0$, the three preceding equations imply that $D_i = 0$, $i=1,2,3$. 

The degenerate cases, where one or more of the coefficients $B_i$ are $0$, can be handled exactly as in the proof of Theorem~\ref{Theorem for CIS}, thanks to the $3$-qubit correspondence.
\end{proof}

\acknowledgments{This work has been supported by the Austrian Science Foundation FWF,
project SFB F41 (VICOM) and project I830-N13 (LODIQUAS).}

\newpage

\noindent {\bf Appendix: another proof of the Borland-Dennis Theorem}

\medskip

Let $h$ be any natural orbital of $\psi$ and let $\MM$ denote the $5$-dimensional orthogonal complement of $\Span\{h\}$.  
Then $\psi$ can be written as $h \wedge \gamma + \phi$, where $\gamma \in \wedge^2 \MM$ and $\phi \in \wedge^3 \MM$.  
By Lemma~\ref{two-in-five} there exist orthonormal orbitals $f'_1,f'_2,g'_1,g'_2 \in \MM$ such that $\gamma = A_1 \ f'_1 \wedge f'_2 + A_2 \ g'_1 \wedge g'_2$.  

We shall prove the theorem separately in three cases, according to the rank of $\gamma$.  These cases are, namely, where (1) both $A_1$ and $A_2$ are nonzero, (2) only one of $A_1,A_2$ is nonzero, (3) both $A_1$ and $A_2$ equal $0$.

\noindent Case (1): 

Let $k$ be an orbital such that $(h,k,f'_1,f'_2,g'_1,g'_2)$ is an ordered orthonormal basis of $\Csix$.
The CI expansion of $\psi$ with respect to this basis is as shown in the following configuration diagram, 
whose terms have been arranged into four blocks for convenience:
\begin{equation*}
\begin{matrix}
 & h & k & f'_1 & f'_2 & g'_1 & g'_2 & \quad &  \\
                      & \mathrm{X} &  \mathrm{O} &  \mathrm{X} & \mathrm{X} & \mathrm{O} & \mathrm{O} & &  A_1   \\ 
  \mathrm{(i)}\  & \mathrm{X} &\mathrm{O} &  \mathrm{O} & \mathrm{O} & \mathrm{X} & \mathrm{X} & &  A_2  \\
 \hline
                      & \mathrm{O} &  \mathrm{X} & \mathrm{X} & \mathrm{X} & \mathrm{O} & \mathrm{O}  & &  B_1  \\
\mathrm{(ii)}\   & \mathrm{O} &  \mathrm{X} & \mathrm{O} & \mathrm{O} & \mathrm{X} & \mathrm{X}  & &  B_2     \\ 
\hline
                      & \mathrm{O}  &  \mathrm{X} & \mathrm{X} & \mathrm{O} & \mathrm{X} & \mathrm{O} & &   M_{11}  \\
                      & \mathrm{O}  &  \mathrm{X} & \mathrm{X} & \mathrm{O} & \mathrm{O} & \mathrm{X}  & &   M_{12}  \\ 
                      & \mathrm{O} &  \mathrm{X} & \mathrm{O} & \mathrm{X} & \mathrm{X} & \mathrm{O}  & &   M_{21}  \\ 
\mathrm{(iii)}\  & \mathrm{O} &  \mathrm{X}  & \mathrm{O} & \mathrm{X} & \mathrm{O} & \mathrm{X} & &   M_{22}  \\
\hline
                      & \mathrm{O} &  \mathrm{O} & \mathrm{X} & \mathrm{X} & \mathrm{X} & \mathrm{O} & & D_1    \\
                      & \mathrm{O} &  \mathrm{O} &  \mathrm{X} & \mathrm{X} & \mathrm{O} & \mathrm{X}  & & D_2  \\
                      & \mathrm{O} &  \mathrm{O} & \mathrm{X} & \mathrm{O} & \mathrm{X} & \mathrm{X}  & & D_3   \\
 \mathrm{(iv)}\  & \mathrm{O} &  \mathrm{O} & \mathrm{O} & \mathrm{X} & \mathrm{X} & \mathrm{X} & & D_4     \\ 
\end{matrix}
\end{equation*}

Let $\Gamma$ denote the 1RDM of $\psi$.  
Since $h$ is a natural orbital of $\psi$, the off-diagonal matrix elements of $\Gamma$ that connect $h$ to any other orbitals must all vanish.  
The only way this can happen is if $D_i = 0$, $i=1,\ldots,4$.  For example, the matrix element $\langle h, \Gamma f'_1 \rangle = \overline{A}_2 D_3$ must vanish; since $A_2 \neq 0$, it must be that $D_3 = 0$.  
Thus, all coefficients in block (iv) equal $0$.

The sum of terms in block (iii) is  
$   k \wedge\sum\limits_{i,j=1}^2 M_{ij} \  f'_i\wedge  g'_j $.
This can be reduced to a sum of two terms by using the same SVD technique that we used to complete the proof of Theorem~\ref{lone orbital theorem}.  The transformations (\ref{these substitutions}) result in the CI expansion   
\begin{equation}
\begin{matrix}
 h &  k & f_1 & f_2 & g_1 & g_2 & \quad &  \\
 \mathrm{X} &  \mathrm{O} &  \mathrm{X} & \mathrm{X} & \mathrm{O} & \mathrm{O} & &  A_1   \\ 
 \mathrm{X} &  \mathrm{O} &  \mathrm{O} & \mathrm{O} & \mathrm{X} & \mathrm{X} & &  A_2  \\
 \mathrm{O} & \mathrm{X} & \mathrm{X} & \mathrm{X} & \mathrm{O} & \mathrm{O} & &  B_1  \\
 \mathrm{O} & \mathrm{X} & \mathrm{O} & \mathrm{O} & \mathrm{X} & \mathrm{X} & &  B_2     \\ 
 \mathrm{O} & \mathrm{X} & \mathrm{X} & \mathrm{O} & \mathrm{X} & \mathrm{O}  & &   C_1  \\
 \mathrm{O} & \mathrm{X} & \mathrm{O} & \mathrm{X} & \mathrm{O} & \mathrm{X} & &   C_2  \\ 
\end{matrix}
\label{6-term}
\end{equation}
 for $\psi$ with respect to the ordered basis $(h,k,f_1,f_2,g_1,g_2)$.  
 This CI expansion shows that $\psi$ belongs to the $3$-qubit subspace $ \Span\{h,k\} \wedge  \Span\{f_1,g_2\} \wedge  \Span\{f_2,g_1\}$, and the conclusion of Theorem~\ref{Borland-Dennis Theorem} for this $\psi$ holds by Lemma~\ref{3-qubit subspace lemma}.

\noindent Case (2):

In this case, there exist orthonormal orbitals $f'_1,f'_2, \in \MM$, a trivector $\phi \in \wedge^3 \MM$, and a coefficient $A$ such that 
$\psi = A\ h \wedge  f'_1 \wedge f'_2 \plus \phi$.  
Let $g'_1,g'_2,g'_3$ be orbitals such that $\{f'_1,f'_2,g'_1,g'_2,g'_3\}$ is an orthonormal basis of $\MM$.
Let $\RR = \Span\{g'_1,g'_2,g'_3\}$ and let $\RR^\perp$ denote $\Span\{f'_1,f'_2\}$, the orthogonal complement of $\RR$ in $\MM$.    
Since 
\[
    \wedge^3\MM \equals \wedge^3\RR \oplus \big( \RR \wedge \RR \wedge \RR^\perp \big) \oplus \big( \RR \wedge \RR^\perp \wedge \RR^\perp \big)\ ,
\] the trivector $\phi \in \wedge^3 \MM$ can be written as a linear combination of the reference configuration $g'_1 \wedge g'_2 \wedge g'_3$ and configurations that are single and double excitations of the reference.  That is, 
$
    \phi = \varphi_{CIS} \plus \varphi_D
$
where $\varphi_{CIS} \in  \wedge^3\RR \oplus \big( \RR \wedge \RR \wedge \RR^\perp \big)$ 
and $\varphi_D \in \RR \wedge \RR^\perp \wedge \RR^\perp$.    
By Lemma~\ref{SVD lemma}, there exist orthonormal bases $\{g_1,g_2,g_3\}$ and $\{f_1,f_2\}$ of $\RR$ and $\RR^\perp$, respectively, and coefficients 
$B_1,B_2,C$ such that 
\[
       \varphi_{CIS} \equals B_1 f_1 \wedge g_2 \wedge g_3 \plus B_2 g_1 \wedge f_2 \wedge g_3 \plus C g_1 \wedge g_2 \wedge g_3
\]
Since $\RR^\perp  = \Span\{f_1,f_2\}$ is only $2$-dimensional, there are coefficients $D_1,D_2,D_3$ such that $\varphi_D = \sum\limits_{i=1}^3 D_i\ f_1 \wedge f_2 \wedge g_i$.   The CI expansion of $\psi$ with respect to $(h,f_1,f_2,g_1,g_2,g_3)$ is 
\begin{equation}
\begin{matrix}
 h &  f_1 & f_2 & g_1 & g_2 & g_3 \quad &  \\
 \mathrm{X} &  \mathrm{X} &  \mathrm{X} & \mathrm{O} & \mathrm{O} & \mathrm{O} & &  A   \\ 
 \mathrm{O} & \mathrm{X} & \mathrm{O} & \mathrm{O} & \mathrm{X} & \mathrm{X} & &  B_1  \\
 \mathrm{O} & \mathrm{O} & \mathrm{X} & \mathrm{X} & \mathrm{O} & \mathrm{X} & &  B_2     \\ 
 \mathrm{O} & \mathrm{O} & \mathrm{O} & \mathrm{X} & \mathrm{X} & \mathrm{X}  & &   C  \\
 \mathrm{O} & \mathrm{X} & \mathrm{X} & \mathrm{X} & \mathrm{O} & \mathrm{O} & &   D_1  \\ 
  \mathrm{O} & \mathrm{X} & \mathrm{X} & \mathrm{O} & \mathrm{X} & \mathrm{O} & &  D_2  \\ 
 \mathrm{O} & \mathrm{X} & \mathrm{X} & \mathrm{O} & \mathrm{O} & \mathrm{X} & &   D_3  \\ 
\end{matrix}
\label{7-term}
\end{equation}
Let $\Gamma$ denote the 1RDM of $\psi$.  
Since $h$ is a natural orbital of $\psi$,  the three matrix elements $\langle h, \Gamma g_i \rangle = \overline{A} D_i$ must equal $0$.
There remain only four terms in the CI expansion (\ref{7-term}) of $\psi$, examination of which shows that $\psi$ belongs to the $3$-qubit subspace 
$\Span\{h,g_3\} \wedge  \Span\{f_1,g_1\} \wedge  \Span\{f_2,g_2\}$.   The conclusion of Theorem~\ref{Borland-Dennis Theorem} holds by Lemma~\ref{3-qubit subspace lemma}.   

\noindent Case (3): 

Since $A_1$ and $A_2$ both equal $0$, $h$ has natural occupation number $0$ and $\psi  \in \wedge^3 \MM$ has rank $5$.
The canonical form of a rank-$5$ wave function given in Lemma~\ref{three-in-five} is itself a natural CI expansion of the form (\ref{eight}).

\end{document}